\documentclass[format=acmsmall, review=false, screen=true]{acmart}

\usepackage{amssymb,latexsym,wasysym,dsfont}
\usepackage{stmaryrd} 
\usepackage{amsthm}
\usepackage{calc} 
\usepackage{xspace}
\usepackage{macros,sectionningmacros}
\usepackage{bm} 
\usepackage{changebar}

\definecolor{cinnamon}{rgb}{0.82, 0.41, 0.12}
\definecolor{chocolate}{rgb}{0.48, 0.25, 0.0}
\definecolor{cadmiumgreen}{rgb}{0.0, 0.42, 0.24}

\newif\ifdraft\draftfalse

\ifdraft
\newcommand\modedraft[1]{#1}

\newcommand\nelcomment[1]{\marginpar[{\color{olive}\small\dbend}]{\color{cyan}\small\dbend}{\footnotesize \color{olive}[#1 - \textbf{Nello}]}}
\newcommand\bascomment[1]{\marginpar[{\color{blue}\small\dbend}]{\color{blue}\small\dbend}{\footnotesize \color{blue}[#1 - \textbf{Bastien}]}}
\newcommand\sascomment[1]{\marginpar[{\color{cinnamon}\small\dbend}]{\color{cinnamon}\small\dbend}{\footnotesize
    \color{cinnamon}[#1 - \textbf{Sasha}]}}
\newcommand\rapcomment[1]{\marginpar[{\color{chocolate}\small\dbend}]{\color{chocolate}\small\dbend}{\footnotesize \color{chocolate}[#1 - \textbf{Rapha\"el}]}}
\newcommand{\bas}[1]{{\color{blue}{#1}}}

\newcommand\sr[1]{{\color{blue}{SR: #1}}}
\newcommand{\basl}[1]{{\color{red}{#1}}}
\else
\newcommand\modedraft[1]{}
\newcommand\nelcomment[1]{}
\newcommand\bascomment[1]{}
\newcommand\sascomment[1]{}
\newcommand\rapcomment[1]{}

\newcommand\bas[1]{#1}

\newcommand\sr[1]{}
\newcommand\basl[1]{#1}

\fi

\acmJournal{TOCL}

\setcopyright{acmlicensed}

\acmDOI{0000001.0000001}

\received{September 2018}
\received[revised]{}
\received[accepted]{}

\begin{document}
\title{Strategy Logic with Imperfect Information}

\author{Rapha\"el Berthon}
\affiliation{%
  \institution{\'Ecole Normale Sup\'erieure
    de Rennes}
  \department{Computer Science and Telecommunication}
  \city{Rennes}
  \country{France}
}
\email{raphael.berthon@ens-rennes.fr}

\author{Bastien Maubert}
\orcid{0000-0002-9081-2920}
\affiliation{%
  \institution{Universit\`a degli Studi di Napoli ``Federico II''}
  \department{DIETI}
  \city{Naples}
  \country{Italy}}
\email{bastien.maubert@gmail.com}

\author{Aniello Murano}
\affiliation{%
  \institution{Universit\`a degli Studi di Napoli ``Federico II''}
  \department{DIETI}
  \city{Naples}
  \country{Italy}}
\email{murano@na.infn.it}

\author{Sasha Rubin}
\affiliation{%
  \institution{Universit\`a degli Studi di Napoli ``Federico II''}
  \department{DIETI}
  \city{Naples}
  \country{Italy}}
\email{sasha.rubin@unina.it}

\author{Moshe Y. Vardi}
\affiliation{%
  \institution{Rice University}
  \city{Houston}
  \state{Texas}
  \country{USA}
}
\email{vardi@cs.rice.edu}

\begin{abstract}

We introduce an extension of Strategy Logic for the imperfect-information 
setting, called \SLi, and study its model-checking problem. As this logic
naturally captures multi-player games with imperfect information, this problem
is  undecidable; but we introduce a syntactical class of ``hierarchical
instances'' for which, intuitively, as one goes down the syntactic tree of the
formula, strategy quantifications are concerned with finer observations of the
model, and we prove that model-checking \SLi restricted to hierarchical instances
is decidable.  This result, because it allows for complex patterns of
existential and universal quantification on strategies, greatly generalises
the  decidability of distributed
synthesis for  systems with hierarchical information. It
allows us to easily derive new
decidability results concerning strategic problems under imperfect information
such as the existence of Nash equilibria,  or rational synthesis.

To establish this result we go through an intermediary, ``low-level''
logic much more adapted to automata techniques.  \QCTLs is an extension
of \CTLs with second-order quantification over atomic propositions that
has been used to study strategic logics with perfect information. We
extend it to the imperfect information setting by parameterising
second-order quantifiers with observations.  The simple syntax of the
resulting logic, \QCTLsi, allows us to provide a conceptually neat
reduction of \SLi to \QCTLsi that separates concerns, allowing one to
forget about strategies and players and focus solely on second-order
quantification. While the model-checking problem of \QCTLsi is, in
general, undecidable, we identify a syntactic fragment of hierarchical
formulas and prove, using an automata-theoretic approach, that it is
decidable.  

We apply our result to solve complex strategic problems in the
imperfect-information setting. We first show that the existence of
Nash equilibria for deterministic strategies is decidable in games
with hierarchical information. We also introduce distributed rational synthesis, a
generalisation of rational synthesis to the imperfect-information
setting. Because it can easily be expressed in our
logic, our main result provides a solution to this problem in the case
of hierarchical information.
\end{abstract}
  

\acmSubmissionID{}

%
%
 \begin{CCSXML}
<ccs2012>
<concept>
<concept_id>10003752.10003790.10002990</concept_id>
<concept_desc>Theory of computation~Logic and verification</concept_desc>
<concept_significance>500</concept_significance>
</concept>
<concept>
<concept_id>10003752.10003790.10003793</concept_id>
<concept_desc>Theory of computation~Modal and temporal logics</concept_desc>
<concept_significance>500</concept_significance>
</concept>
<concept>
<concept_id>10003752.10003766.10003770</concept_id>
<concept_desc>Theory of computation~Automata over infinite objects</concept_desc>
<concept_significance>300</concept_significance>
</concept>
</ccs2012>
\end{CCSXML}

\ccsdesc[500]{Theory of computation~Logic and verification}
\ccsdesc[500]{Theory of computation~Modal and temporal logics}
\ccsdesc[300]{Theory of computation~Automata over infinite objects}

%
%

\keywords{strategic reasoning, imperfect information, perfect recall, distributed
  synthesis, hierarchical
  information, Nash equilibria, rational synthesis}

\maketitle

\renewcommand{\shortauthors}{R. Berthon, B. Maubert, A. Murano,
  S. Rubin and M. Y. Vardi}


\section{Introduction}

Temporal logics such as \LTL~\cite{DBLP:conf/focs/Pnueli77} or
\CTLs~\cite{emerson1986sometimes} are extremely successful logics that
have been studied in great detail and extended in many directions
along the past decades, notably in relation with the development of the
model-checking approach to program
verification~\cite{clarke1999model}.
When considering systems with multiple components such as multi-agent
systems or distributed programs, popular extensions of temporal logics
are the family of so-called \emph{logics for strategic reasoning}, or
\emph{strategic logics}, which introduce operators that can express
the existence of strategies for components to ensure that the system's
executions satisfy certain temporal properties.

A fundational logic in this family is Alternating-time Temporal Logic
(\ATL)~\cite{DBLP:journals/jacm/AlurHK02}. It extends \CTLs with 
a coalition operator $\langle A\rangle\phi$, where $A$ is a subset of
components/agents of the system,  which reads as ``coalition $A$ has
a strategy to enforce property $\phi$ no matter what the other
components/agents do''. This logic is thus quite expressive, as it
allows for instance to express the existence of winning strategies in
games played on graphs. However it is not well suited to reason about other important solution concepts in
game theory, such as Nash equilibria. 
To address this problem Strategy Logic (\SL) was introduced~\cite{chatterjee2010strategy,DBLP:journals/tocl/MogaveroMPV14}. In \SL
strategies are treated as first-order objects, thanks to strategy variables $x$
that can be quantified upon and bound to players: $\Estrato{}$ reads as ``there exists a
strategy $x$'', and $(a,x)$ reads as ``strategy $x$ is assigned to
player $a$''.
This leads to a very  expressive logic that can express many
solution concepts from game-theory such as best response, existence of
Nash equilibria or subgame-perfect equilibria. 

\halfline
\head{Imperfect information}
An essential property of realistic multi-player games is that players
often have a limited view of the  system. Such imperfect
information, or partial observation, is usually captured by equipping
the models with equivalence relations $\obs$ (called
\emph{observations}) over the state space, that specify
indistinguishable states. Strategies are then required to be
\emph{uniform}, \ie, they cannot assign different moves to
indistinguishable situations. Imperfect information is known to make
games computationally harder to solve. For two-player reachability
games, Reif showed in~\cite{reif1984complexity} that deciding the
existence of winning strategies is \EXPTIME-complete for imperfect
information, while it is in \PTIME for perfect information. This
result has later been generalised to omega-regular
objectives~\cite{berwanger2010strategy,doyen2011games}, and adapted to
the setting of program synthesis from temporal
specifications~\cite{pnueli1989synthesisshort,kupferman1999church}. In
the case of multiple players/components/agents, which interests us
here, the situation is even worse: the existence of distributed
winning strategies is undecidable already for two players with
incomparable observation trying to enforce some reachability objective
in the presence of an adversarial third
player~\cite{DBLP:conf/focs/PetersonR79}, and a similar result was
also proved in the framework of distributed synthesis~\cite{PR90}.
Since then, the formal-methods community has spent much effort finding
restrictions and variations that ensure
decidability~\cite{kupermann2001synthesizing,PR90,DBLP:journals/fmsd/GastinSZ09,
  peterson2002decision,DBLP:conf/lics/FinkbeinerS05,pinchinat2005decidable,DBLP:conf/atva/ScheweF07,DBLP:journals/acta/BerwangerMB18}.
The common thread in these approaches is hierarchical information:  players
can be totally ordered according to how well they observe the game. 
Another line of works establishes that decidability can
be retained by forbidding private communication, \ie, by considering
variants around the idea that all new information should be
public~\cite{van1998synthesis,DBLP:conf/concur/MeydenW05,ramanujam2010communication,BelardinelliLMR17a,DBLP:conf/ijcai/BelardinelliLMR17,DBLP:conf/fossacs/Bouyer18}.

\halfline
\head{Strategy Logic with imperfect information}
We propose an extension of Strategy Logic to the
imperfect-information setting, which we call \SLi. 
The first step is to choose how to introduce imperfect information in
the logic. In the formal-methods literature it is typical to associate
observations to players. In \SLi, instead, we associate observations
to strategies: the strategy quantifier $\Estrat{}$ from \SL is now
parameterised by observation $\obs$, written $\Estrato{\obs}$. 
This
novelty allows one to express, in the logic, that a player's
observation changes over time, to capture for instance the loss of a
sensor resulting in a diminished observation power.  We also add to
our logic \SLi the outcome quantifier $\A$ from Branching-time
Strategy Logic (\BSL)~\cite{1908.02488}, which quantifies on
outcomes of strategies currently used by the agents, and the unbinding
operator $\unbind$, which frees an agent from her current
strategy. This does not increase the expressivity of the logic but
presents advantages that we discuss in
Section~\ref{sec-SLi-definition}. For instance it allows us to
naturally consider nondeterministic strategies (Strategy Logic only
considers deterministic ones), which in turn allows us to capture
module checking, the extension of model checking to open
systems~\cite{kupferman2001module,jamroga2014module,jamroga2015module}.


 
The logic \SLi is very powerful: it is an extension of \SL (which
considers perfect information), and of the
imperfect-information strategic logics \ATLsi~\cite{BJ14} and
\ATLssci~\cite{DBLP:journals/corr/LaroussinieMS15}.  As already
mentioned, \SLi can express the distributed synthesis
problem~\cite{PR90}. This problem asks whether there are strategies for
components $a_1, \dots, a_n$ of a distributed system to enforce  some
 property given as an \LTL formula $\psi$ against all behaviours of
 the environment. This can be expressed by the \SLi formula
$\Phi_{\textsc{Synth}} \egdef \Estrato[x_1]{\obs_1} \dots
\Estrato[x_n]{\obs_n}  (a_1, x_1) \dots (a_n, x_n)
 \A\psi$, where $\obs_i$ represents 
the local view of component $a_i$. 
Also, \SLi
can express more complicated specifications by alternating
quantifiers, binding the same strategy to different agents and
rebinding (these are inherited from \SL), as well as changing
observations. For instance, it can express the existence of Nash
equilibria.


\halfline
\head{Main result}
Of course, the high expressivity of \SLi comes at a cost from a
computational complexity point of view.  Its satisfiability problem is undecidable (this is already true
of \SL), and so is its model-checking problem (this is
already true of \ATLsi even for the single formula
$\EstratATL[\{a,b\}] \F p$~\cite{CT11}, which means that agents $a$
and $b$ have a strategy profile to reach a situation where $p$ holds). 
We mentioned that the two main settings in which  decidability is
retrieved for 
distributed synthesis are hierarchical information and public
actions. We extend the first approach to the setting of strategic logics
by introducing a syntactic class of ``hierarchical instances'' of \SLi,
i.e., formula/model pairs, and proving that the model-checking problem
on this class of instances is decidable. Intuitively, an instance of \SLi is hierarchical if, as one goes
down the syntactic tree of the formula, the observations annotating
strategy quantifications can only become finer. Although the class of
hierarchical instances refers not only to the syntax of the logic but
also to the model, the class is syntactical in the sense that it depends
only on the structure of the formula and the observations in the
model. Moreover, it is straightforward to check (in linear time)
whether an
instance is hierarchical or not.
 

\halfline
\head{Applications}
Because the syntax of \SLi allows for arbitrary
alternations of quantifiers in the formulas, our decidability result
for hierarchical instances allows one to decide strategic problems
more involved than module checking and distributed synthesis. For
instance, we show in
Section~\ref{sec-applications} how one can apply our result to
establish that the existence of Nash equilibria is decidable in games
with imperfect information, in the case of hierarchical observations
and deterministic strategies.  This problem is relevant as Nash
equilibria do not always exist in games with imperfect
information~\cite{filiot2018rational}. 
We then consider the problem of rational
synthesis~\cite{fisman2010rational,DBLP:journals/amai/KupfermanPV16,DBLP:conf/icalp/ConduracheFGR16,filiot2018rational},
both in its cooperative and non-cooperative variants. We
introduce the generalisations of these problems to the case of
imperfect information, and  call them cooperative and non-cooperative \emph{rational distributed
   synthesis}. We then apply again our main result to establish that
they are decidable in hierarchical systems for deterministic strategies. For the non-cooperative
variant, we need the additional assumption that the environment
is at least as informed as the system. This is the case for example
when one ignores the actual observation power  of the 
 environment, and considers that it plays with perfect
information.  Doing so yields systems that are
robust to any observation power the environment may have. As Reif puts it, this amounts
to synthesising strategies that are winning even if the opponent
``cheats'' and uses information it is not supposed to have access to~\cite{reif1984complexity}.



\halfline
\head{Approach}
In order to solve the model-checking problem for \SLi we introduce an intermediate logic
\QCTLsi, an extension to the imperfect-information 
setting of \QCTLs~\cite{DBLP:journals/corr/LaroussinieM14}, 
itself an extension of \CTLs by  second-order
quantifiers over atoms.  This is a low-level logic that does not
mention strategies and into which one can effectively compile
instances of \SLi.  States of the models of the logic \QCTLsi have
internal structure, much like the multi-player game structures from~\cite{peterson2001lower} and distributed
systems~\cite{halpern1989complexity}. Model-checking \QCTLsi is also
undecidable (indeed, we show how to reduce from the MSO-theory of the
binary tree extended with the equal-length predicate, known to be
undecidable~\cite{lauchli1987monadic}).  We introduce the syntactical
class \QCTLsih of hierarchical formulas as those in which innermost
quantifiers observe more than outermost quantifiers, and prove that
model-checking is decidable using an extension of the
automata-theoretic approach for branching-time logics.
We provide a reduction from model checking \SLi to model checking \QCTLsi that preserves
being hierarchical, thus  establishing our main
contribution, i.e., that model checking the hierarchical instances of
\SLi is decidable. 

\halfline
\head{Complexity}
To establish the precise complexity of  the problems we solve,  we introduce a new measure on
formulas called \emph{simulation depth}. This measure 
resembles the notion of alternation depth (see, \eg, \cite{DBLP:journals/tocl/MogaveroMPV14}), which counts alternations between
existential and universal strategy (or second-order) quantifications.
But instead of merely counting alternations between such operators,
 simulation depth reflects the
underlying automata operations required to treat formulas, while
remaining a purely syntactical notion. 
We prove that the model-checking
problem for the hierarchical fragment of \QCTLsi and \SLi
 are both
\kEXPTIME[(k+1)]-complete for formulas of simulation depth at most $k$.
Already for the perfect-information fragment, this result is more
precise than what was previously known. Indeed, precise upper bounds
based on alternation depth were known for  syntactic fragments of \SL
but not for the full logic~\cite{DBLP:journals/tocl/MogaveroMPV14}. 



\halfline
\head{Related work} 
\basl{The literature on imperfect information in formal methods and
artificial intelligence is very vast. Imperfect information has been
considered in two-player
games~\cite{reif1984complexity,doyen2011games,berwanger2010strategy},
module checking~\cite{kupferman2001module,jamroga2015module},
distributed
synthesis  of reactive systems~\cite{PR90,kupermann2001synthesizing,DBLP:conf/lics/FinkbeinerS05}
and strategies in multiplayer
games~\cite{DBLP:conf/focs/PetersonR79,peterson2002decision,DBLP:journals/acta/BerwangerMB18},
Nash
 equilibria~\cite{ramanujam2010communication,bouyer2017nash,DBLP:conf/fossacs/Bouyer18},
 rational
 synthesis~\cite{filiot2018rational,DBLP:journals/iandc/GutierrezPW18},
 doomsday equilibria~\cite{DBLP:journals/iandc/ChatterjeeDFR17},
 admissible strategies~\cite{brenguier2017admissibility},
 quantitative objectives~\cite{degorre2010energy,perez2017fixed}, and
 more, some of which we detail below.}

Limited alternation of strategy quantification was studied in
\cite{DBLP:conf/icalp/Chatterjee014}, in which several decidability results are
proved for two and three alternations of existential and universal quantifiers. Except for one where the first
player has perfect information, all the problems solved in this work
are  hierarchical instances, and are thus
particular cases of our main result.



 Quantified $\mu$-Calculus with partial observation is studied
 in~\cite{pinchinat2005decidable}, where the model-checking problem is
 solved by considering a syntactic constraint based on hierarchical
 information, as we do for \QCTLsi. However they consider asynchronous
 perfect recall, and the automata techniques they use to deal with
 imperfect information cannot be used in the synchronous
 perfect-recall setting that we consider in this work. Similarly
the narrowing operation on tree automata (see Section~\ref{sec-ATA}), which is crucial in our
model-checking procedure, considers synchronous perfect recall and
does not seem easy to adapt to the asynchronous setting.

\basl{A number of works have considered strategic logics with imperfect
information.  Various semantics for \ATL with imperfect information
have been studied in,
\eg,~\cite{jamroga2011comparing,DBLP:journals/fuin/JamrogaH04}.  The
model-checking problem for these logics, which is undecidable for
agents with perfect recall~\cite{CT11}, has been studied for agents
with bounded memory, for which decidability is
recovered~\cite{DBLP:journals/entcs/Schobbens04,Lomuscio06mcmas}.  An
epistemic strategic logic with original operators different from those
of \ATL and \SL is proposed in~\cite{huang2014temporal}.  It considers
imperfect information strategies, but only for agents without memory.
Concerning perfect recall, which interest us in this work,
decidability results have also been obtained for
\ATL~\cite{DBLP:journals/jancl/GuelevDE11} and \ATL with strategy
context~\cite{DBLP:journals/corr/LaroussinieMS15} when agents  have the same
information.}

In \cite{1908.02488}, a branching-time variant of  \SL is  extended with epistemic operators and
agents with perfect recall.
Strategies are not required to be uniform in the semantics, but 
 this requirement can be expressed in the language. However no
 decidability result is provided. 
Another variant of \SL extended 
with epistemic operators and imperfect-information, perfect-recall
strategies is presented in~\cite{epistemicSL}, but
 model checking is not studied.  
The latter logic is extended in~\cite{DBLP:conf/ijcai/BelardinelliLMR17}, in
which its model-checking problem is solved on the class of 
systems where all agents' actions are public, which is an assumption
orthogonal to hierarchical information.

  
The work closest to ours is~\cite{DBLP:conf/csl/FinkbeinerS10} which
introduces a  logic $\CL$ in which one can encode many
distributed synthesis problems. In this logic, hierarchical information is 
a necessary consequence of the syntax and semantics, 
and as a result its
model-checking problem is decidable. However, \CL is close in spirit to our
\QCTLsih, and its semantics is less intuitive than that of \SLi. Furthermore, by means of a natural
translation we derive that \CL is strictly included in the
hierarchical instances of \SLi (Section~\ref{subsec:SLi-comparison-CL}). 
In particular,  hierarchical instances of \SLi can express
non-observable goals, while \CL cannot.  When considering  players
that choose their own goals it may be natural to assume that they can observe the
facts that define whether their
objectives are satisfied or not. But when synthesising programs for
instance, it may be enough that their behaviours enforce the desired
properties, without them having the knowledge that it is enforced. Such non-observable winning conditions  have
been studied in, \eg,~\cite{chatterjee2010complexity,degorre2010energy,DBLP:journals/acta/BerwangerMB18}.


\halfline
\head{Outline} In
Section~\ref{sec-SLi} we define \SLi and hierarchical instances, and
present some examples. In Section~\ref{sec-QCTL-imp-inf} we define \QCTLsi and its hierarchical
fragment \QCTLsih.  The proof
that model checking \QCTLsih is decidable, including the required
automata preliminaries, is in Section~\ref{sec-decidable}.  The
hierarchy-preserving translation of \SLi into \QCTLsi is in
Section~\ref{sec-modelcheckingSL}.
In Section~\ref{sec-SLi-comparison} we compare \SLi with related
logics, and in Section~\ref{sec-applications} we apply our main result
to obtain decidability results for various strategic
problems under imperfect information. Finally we conclude and discuss future work in Section~\ref{sec-outlook}.


\section{\SL with imperfect information}
\label{sec-SLi}

In this section we introduce \SLi, an extension of \SL to the
imperfect-information setting with synchronous
perfect-recall. Our
logic presents several original features compared to \SL, which we
discuss in detail in Section~\ref{sec-disc-syntax}:  
we introduce an \emph{outcome quantifier} akin to
the path quantifier in branching-time temporal logics,
we allow for nondeterministic strategies and unbinding agents from
their strategies, 
and we annotate  strategy
quantifiers with  observation symbols which denote the information
available to  strategies. 
We first fix some basic notations.

\subsection{Notations}
Let $\Sigma$ be an alphabet. A \emph{finite} (resp. \emph{infinite}) \emph{word} over $\Sigma$ is an element
of $\Sigma^{*}$ (resp. $\Sigma^{\omega}$). Words are written $w = w_0 w_1 w_2 \ldots$, i.e., indexing begins with $0$. 
The \emph{length} of a finite word $w=w_{0}w_{1}\ldots
w_{n}$ is $|w|\egdef n+1$, and $\last(w)\egdef w_{n}$ is its last
letter.
Given a finite (resp. infinite) word $w$ and $0 \leq i < |w|$  (resp. $i\in\setn$), we let $w_{i}$ be the
letter at position $i$ in $w$, $w_{\leq i}$ is the prefix of $w$ that
ends at position $i$ and $w_{\geq i}$ is the suffix of $w$ that starts
at position $i$.
We write $w\pref w'$ if $w$ is a prefix of $w'$, and $\FPref{w}$ is
the set of finite prefixes of word $w$. 
Finally, 
the domain of a mapping $f$ is written $\dom(f)$, its codomain $\codom(f)$, and for $n\in\setn$
we let $[n]\egdef\{i \in \setn: 1 \leq i \leq n\}$. 
   
\subsection{Syntax}
\label{sec-SLi-definition}

For the rest of the paper, for convenience we fix a number of
parameters for our logics and models: $\APf$ is a finite non-empty set of
\emph{atomic propositions}, $\Agf$ is a finite non-empty set of \emph{agents} or
\emph{players}, and
$\Varf$ is a finite non-empty set of \emph{variables}.
The main novelty of our logic is that we specify which information is available to a strategy,
by annotating strategy quantifiers $\Estrato{}$ with
\emph{observation symbols} $\obs$ from a finite
set  $\Obsf$, that we also fix for the rest of the paper.
When we consider model-checking problems, these data are implicitly
part of the input.

%

\begin{definition}[\SLi Syntax]
  \label{def-SLi}
    The syntax of \SLi is defined by the following grammar:
    \begin{align*}
  \phi\egdef &\; p 
  \mid \neg \phi 
  \mid \phi\ou\phi 
  \mid \Estrato{\obs}\phi 
               \mid \bind{\var}\phi
               \mid \unbind\phi
  \mid \bas{\Eout\psi}               
      \\
      \psi\egdef &\; \phi
                   \mid \neg \psi
                   \mid \psi\ou \psi
                   \mid \X \psi
                   \mid  \psi \until \psi
    \end{align*}
     where 
  $p\in\APf$, $\var\in\Varf$, $\obs\in\Obsf$ and $a\in\Agf$.
\end{definition}

Formulas of type $\phi$ are called \emph{state formulas}, those of type $\psi$
are called \emph{path formulas}, and \SLi consists of all the state formulas
defined by the grammar.

Boolean operators and temporal operators, $\X$ (read ``next'') and
$\until$ (read ``until''), have the usual meaning. The \emph{strategy
  quantifier}  $\Estrato{\obs}$ 
 is a first-order-like quantification on
strategies: $\Estrato{\obs}\phi$ reads as ``there exists a strategy $\var$
that takes decisions based on observation $\obs$  such that $\phi$
holds'', where $\var$ is a strategy variable.
The \emph{binding operator} $\bind{\var}$ assigns a strategy to an
agent, and
 $\bind{\var}\phi$ reads as ``when agent $\ag$ plays strategy $\var$,
 $\phi$ holds''. The \emph{unbinding operator} $\unbind$
 instead releases agent
$\ag$ from her current strategy, if she has one, and
$\bind{\unb}\phi$ reads as ``when
agent $\ag$ is not assigned any strategy, $\phi$ holds''. 
\bas{Finally, the \emph{outcome quantifier} $\Eout$ quantifies on
  outcomes of strategies currently in use: $\Eout\psi$ reads as ``$\psi$
holds in some
outcome of the strategies currently used by the players''.}

We use  abbreviations $\top\egdef p\ou\neg p$, $\perp\egdef\neg\top$, $\phi\impl\phi'\egdef \neg \phi \ou \phi'$,
$\phi\equivaut\phi'\egdef \phi\impl\phi'\et \phi'\impl\phi$ for
boolean connectives,
 $\F\phi \egdef \top \until \phi$ (read ``eventually $\phi$''),  $\always\phi \egdef \neg \F
\neg \phi$ (read ``globally $\phi$'') for temporal operators,
$\Astrato{\obs}\phi\egdef\neg\Estrato{\obs}\neg\phi$ (read ``for all
strategies $\var$ based on observation $\obs$, $\phi$ holds'') and
\bas{$\Aout\psi\egdef\neg\Eout\neg\psi$ (read ``all outcomes of
the current strategies satisfy $\psi$'').}

For every formula $\phi\in\SLi$, we let  $\free(\phi)$ be the set of variables that appear
free in $\phi$, \ie, that
appear out of the scope of a strategy quantifier. A formula $\phi$ is a \emph{sentence} if $\free(\phi)$ is empty.
Finally, we let the \emph{size} $|\phi|$ of a formula $\phi$ be the
number of symbols in $\phi$.



\subsection{Discussion on the syntax}
\label{sec-disc-syntax}

We discuss the syntactic differences between our logic and usual
Strategy Logic.

\halfline
\head{Outcome quantifier}
This quantifier was
 introduced in Branching-time Strategy Logic
 (\BSL)~\cite{1908.02488}, which corresponds to the
 perfect-information fragment of the logic we define here.
It removes a quirk of previous definitions, in which temporal operators
could only be evaluated in contexts where all agents were assigned a
strategy. The outcome quantifier, instead, allows for evaluation of
 temporal properties on partial assignments.
As a result, the notions of free agents and agent-complete assignments from previous
definitions of Strategy Logic are no longer needed (see, \eg, \cite{DBLP:journals/tocl/MogaveroMPV14}).
In
addition, the outcome quantifier highlights the inherent branching-time nature of Strategy
Logic: indeed, in \SL,
branching-time properties can be expressed by resorting to artificial
strategy quantifications for all agents.
It will also make the correspondence with \QCTLsi tighter,
which will allow us to establish the precise complexity of the problem we
solve, while the exact complexity of model checking classic \SL with perfect
information is still not known.
Finally, since the usual definition of
\SL requires that the current strategies define a unique outcome on
which linear-time temporal operators are evaluated, only deterministic
strategies were considered. The introduction of the outcome quantifier
allows us to consider nondeterministic strategies.

\halfline
\head{Unbinding}
With the possibility to evaluate temporal
operators even when some agents are  not bound to any strategy, it becomes
interesting to include the unbinding operator $\unbind$, introduced
in~\cite{DBLP:journals/iandc/LaroussinieM15} for \ATL with strategy
context and also present in \BSL. Note that the outcome quantifier
and unbinding operator do not increase the expressivity of \SL, at the
level of sentences~\cite{1908.02488}.

\halfline
\head{Observations}
In games with imperfect information and \ATL-like logics
with imperfect information, a strategy is always bound to some player,
and thus it is clear with regards to what observations it should be
defined. In \SL on the other hand, strategy quantification and binding
are separate. This adds expressive power with regards to \ATL
 by allowing, for instance,
to assign the same strategy to two different players, but it also
entails that when a quantification is made on a strategy, one does not
know with regards to which observation this strategy should be
defined. We know of three ways to solve this.
One is the approach followed here, which consists in associating with
strategy quantifiers an observation power. 
The second solution is
 to abandon the separation between quantification and binding and to
 use instead
quantifiers of the form $\exists_{a}$, meaning ``there exists a
strategy for player $a$'', like
in~\cite{DBLP:journals/iandc/ChatterjeeHP10,DBLP:journals/corr/Belardinelli14}: with this
operator, the strategy is immediately bound to player $a$, which
indicates with regards to which observation the strategy should be compatible. The third one, adopted
in~\cite{DBLP:conf/ijcai/BelardinelliLMR17}, consists in requiring
that a strategy be uniform for all agents to whom it will be
bound in the formula.
We chose to adopt the first solution for its simplicity and
expressiveness. Indeed the second solution limits expressiveness  by disallowing, for instance, binding the same strategy to
different agents. The third solution leads to a logic that is more
expressive than the second one, but less than the first one. Indeed,
the logic that we study here
can capture the logic from~\cite{DBLP:conf/ijcai/BelardinelliLMR17} (assuming that models
contain observations corresponding to unions of individual
observations), and  in addition \SLi  can express  changes
of agents' observation power.

\subsection{Semantics}
\label{sec-SLmodels}

The models of \SLi are classic concurrent game structures
extended by an interpretation for observation symbols in $\Obsf$. 


\begin{definition}[\CGSi]
  \label{def-CGSi}
  A \emph{concurrent game structure with imperfect information} (or
  \CGSi for short) is a tuple
  $\CGSi=(\Act,\setpos,\trans,\val,\pos_\init,\obsint)$ where
   \begin{itemize}
    \item $\Act$ is a finite non-empty set of \emph{actions},
    \item $\setpos$ is a finite non-empty set of \emph{positions},
   \item $\trans:\setpos\times \Mov^{\Agf}\to \setpos$ is a \emph{transition function}, 
  \item $\val:\setpos\to 2^{\APf}$ is a \emph{labelling function}, 
  \item $\pos_\init \in \setpos$ is an \emph{initial position}, and
  \item  $\obsint:\Obsf\to 2^{\setpos\times\setpos}$ is an 
  \emph{observation interpretation}.
  \end{itemize}
\end{definition}

For $\obs\in\Obsf$,  $\obsint(\obs)$
  is an equivalence relation on positions, that we may write
  $\obseq$.  It represents what 
a strategy with  observation $\obs$ can see: $\obsint(\obs)$-equivalent positions
are indistinguishable to such a  strategy. Also, $\val(\pos)$ is the set of
atomic propositions that hold in position $\pos$.

We define the size $|\CGSi|$ of a \CGSi
$\CGSi=(\Act,\setpos,\trans,\val,\pos_\init,\obsint)$ as the size of
an explicit encoding of the transition function: $|\CGSi|\egdef
|\setpos|\times |\Act|^{|\Agf|}\times \lceil \log(|\setpos|)\rceil$.
We may  write $\pos\in\CGS$ for $\pos\in\setpos$.

We now introduce a number of
notions involved in the semantics of \SLi.
 Consider a \CGSi $\CGSi=(\Act,\setpos,\trans,\val,\pos_\init,\obsint)$.

 \halfline
 \head{Joint actions}
In a position $\pos\in\setpos$, each player $\ag$ chooses an action $\mova\in\Mov$, 
and the game proceeds to position
$\trans(\pos, \jmov)$, where $\jmov\in \Mov^{\Agf}$ stands for the \emph{joint action}
$(\mova)_{\ag\in\Agf}$. Given a joint action
$\jmov=(\mova)_{\ag\in\Agf}$ and $\ag\in\Agf$, we let
$\jmov_{\ag}$ denote $\mova$.

\halfline
\head{Plays}
A \emph{finite} (resp. \emph{infinite}) \emph{play} is a finite (resp. infinite)
word $\fplay=\pos_{0}\ldots \pos_{n}$ (resp. $\iplay=\pos_{0} \pos_{1}\ldots$)
such that $\pos_0=\pos_\init$ and for every $i$ such that $0\leq i<|\fplay|-1$ (resp. $i\geq 0$), there exists a joint action $\jmov$
such that $\trans(\pos_{i}, \jmov)=\pos_{i+1}$.

\halfline
\head{Strategies}
A (nondeterministic) \emph{strategy} is a  function $\strat:\setpos^+
\to 2^\Mov\setminus\emptyset$ that maps each finite play to a nonempty finite set of
actions that the player may play.  A strategy $\strat$ is \emph{deterministic} if for all $\fplay$,
$\strat(\fplay)$ is a singleton.
We let $\setstrat$ denote the set of all strategies.

\halfline
\head{Assignments} 
An \emph{assignment} is
a partial function $\assign:\Agf\union\Varf \partialto \setstrat$, assigning to
each  player and variable in its domain a strategy.
For an assignment
$\assign$, a player $a$ and a strategy $\strat$,
$\assign[a\mapsto\strat]$ is the assignment of domain
$\dom(\assign)\union\{a\}$ that maps $a$ to $\strat$ and is equal to
$\assign$ on the rest of its domain, and 
$\assign[\var\mapsto \strat]$ is defined similarly, where $\var$ is a
variable; also, $\assign[a\mapsto\unb]$ is
 the restriction of $\assign$ to domain $\dom(\assign)\setminus\{a\}$.
In addition, given a formula $\phi\in\SLi$, an assignment is
\emph{variable-complete for $\phi$} if
its domain contains all free variables of $\phi$.

\halfline
\head{Outcomes}
For an assignment $\assign$ and a finite play $\fplay$, we let
$\out(\assign,\fplay)$ be the set of infinite plays that start with
$\fplay$ and are then extended by letting players follow the strategies
assigned by $\assign$. Formally,
 $\out(\assign,\fplay)$ is the set of plays of the form $\fplay \cdot
 \pos_{1}\pos_{2}\ldots$ such that for all $i\geq 0$, there exists
 $\jmov$ such that for all $\ag\in\dom(\assign)\inter\Agf$,
 $\jmov_\ag\in\assign(\ag)(\fplay\cdot\pos_{1}\ldots\pos_i)$ \mbox{ and }
 $\pos_{i+1}=\trans(\pos_{i},\jmov)$, \mbox{ with }
 $\pos_{0}=\last(\fplay)$.

\halfline
\head{Synchronous perfect recall}
In this work we consider
players with \emph{synchronous perfect recall}, meaning that each player
remembers the whole history of a play, a classic assumption in games
with imperfect information and logics of knowledge and time. Each observation
 relation is thus extended to finite plays
as follows: $\fplay \obseq \fplay'$ if $|\fplay|=|\fplay'|$
and $\fplay_{i}\obseq\fplay'_{i}$ for every $i\in\{0,\ldots,
|\fplay|-1\}$.

\halfline
\head{Imperfect-information strategies}
For $\obs\in\Obsf$, a strategy $\strat$ is an \emph{$\obs$-strategy} if
 $\strat(\fplay)=\strat(\fplay')$ whenever $\fplay \obseq
\fplay'$. The latter  constraint captures the essence of
imperfect information, which is that players can base their strategic
choices only on the information available to them. 
For $\obs\in\Obsf$ we let $\setstrato$ be the set of
all $\obs$-strategies. 

\begin{definition}[\SLi semantics]
\label{def-SLi-semantics}
The semantics of a state formula is defined on a \CGSi $\CGSi$, an
assignment  $\assign$ that is variable-complete for $\phi$, and a
finite play $\fplay$. For a path formula $\psi$, the finite play is
replaced with an infinite play $\iplay$ and an index $i\in\setn$. The
definition by mutual induction is as follows:
\[
\begin{array}{lcl}
 \CGSi,\assign,\fplay\modelsSL p & \text{ if } & p\in\val(\last(\fplay))\\[1pt]
 \CGSi,\assign,\fplay\modelsSL \neg\phi & \text{ if } &
  \CGSi,\assign,\fplay\not\modelsSL\phi\\[1pt]
 \CGSi,\assign,\fplay\modelsSL \phi\ou\phi' & \text{ if } &
  \CGSi,\assign,\fplay\modelsSL\phi \;\text{ or }\;
  \CGSi,\assign,\fplay\modelsSL\phi' \\[1pt]
 \CGSi,\assign,\fplay\modelsSL\Estrato{\obs}\phi  & \text{ if } & 
\exists\,   \strat\in\setstrato \;\text{ s.t. } \;
    \CGSi,\assign[\var\mapsto\strat],\fplay\modelsSL \phi\\[1pt]
 \CGSi,\assign,\fplay\modelsSL \bind{\var}\phi & \text{ if } &
 \CGSi,\assign[\ag\mapsto\assign(\var)],\fplay\modelsSL \phi\\[1pt]  
 \CGSi,\assign,\fplay\modelsSL \bind{\unb}\phi & \text{ if } &
          \CGSi,\assign[\ag\mapsto\unb],\fplay\modelsSL \phi\\[1pt]
 \CGSi,\assign,\fplay\modelsSL \Eout\psi & \text{ if } & \text{there exists
                                                         }\iplay \in
                                                         \out(\assign,\fplay)
                                                         \text{ such
                                                         that } 
                                                         \CGSi,\assign,\iplay,|\fplay|-1\modelsSL \psi\\[5pt]
    \CGSi,\assign,\iplay,i\modelsSL \phi & \text{ if } &
                                                         \CGSi,\assign,\iplay_{\leq i}\modelsSL\phi\\[1pt]
   \CGSi,\assign,\iplay,i\modelsSL \neg\psi & \text{ if } &
  \CGSi,\assign,\iplay,i\not\modelsSL\psi\\[1pt]
 \CGSi,\assign,\iplay,i\modelsSL \psi\ou\psi' & \text{ if } &
  \CGSi,\assign,\iplay,i\modelsSL\psi \;\text{ or }\;
  \CGSi,\assign,\iplay,i\modelsSL\psi' \\[1pt]
  \CGSi,\assign,\iplay,i\modelsSL\X\psi & \text{ if } &
  \CGSi,\assign,\iplay,i+1\modelsSL\psi\\[1pt]
\CGSi,\assign,\iplay,i\modelsSL\psi\until\psi' & \text{ if } & \exists\, j\geq i
   \mbox{ s.t. }\CGSi,\assign,\iplay,j\modelsSL \psi'\\ 
   & & \text{ and } \forall\, k \text{ s.t. } i\leq k <j,
\; \CGSi,\assign,\iplay,k\modelsSL \psi
\end{array}
\]
\end{definition}

\begin{remark}
  \label{rem-sentences}
Observe that because of the
semantics of the outcome quantifier, and unlike usual definitions of
\SL, the meaning of an \SLi sentence
 depends on the assignment in which it is evaluated. For instance the \SLi formula
 $\Aout\F p$ is clearly a sentence, but whether
$\CGS,\assign,\fplay\models\Aout\F p$ holds or not depends  on which agents are
bound to a strategy in $\assign$ and what these strategies are.
However, as usual, a sentence does not require an assignment to be
evaluated,  and for an \SLi sentence $\phi$ we let
$\CGSi,\fplay\models\phi$ if $\CGSi,\emptyset,\fplay\models\phi$ for
the empty
assignment $\emptyset$, and we write $\CGSi \models \phi$ if $\CGSi,
\pos_\init \models \phi$. 
\end{remark}

\SL is the fragment of \SLi obtained by interpreting all observation
 symbols as the identity relation (which models perfect information),
 restricting to deterministic strategies,
 and considering only assignments in which each agent has a strategy
 (in this case the outcome of an assignment consists of a single
 play; one can thus get rid of the outcome quantifier and evaluate
 temporal operators in the unique outcome of the current assignment,
 as usually done in \SL). Also, \CTLs is the fragment of \SLi which
uses no  binding, unbinding or strategy quantification.

\subsection{Discussion on the semantics}
\label{sec-discussion-semantics}


We now discuss some aspects of the semantics. 

\halfline
\head{Evaluation on finite plays} Unlike previous definitions of
Strategy Logic, we evaluate formulas on
finite plays (instead of positions),  where the finite play represents
the whole history starting from the initial position of the \CGSi in
which the formula is evaluated. There are several reasons to do
so. First, it allows us to define the semantics more simply without
having to resort to the notion of assignment translations. Second, 
it makes it easier to see the correctness of the reduction to \QCTLsi,
that we present in Section~\ref{sec-modelcheckingSL}. In \SL,
a strategy only has access to the history of the game starting from
the point where the strategy quantifier from which it arises has
been evaluated. In contrast, in \SLi strategies have access to the whole history, starting
from the initial position. However this does not affect the semantics,
in the sense that the perfect-information fragment of
 \SLi with deterministic strategies corresponds to \SL. Indeed, when
 agents have perfect information, having access to the past or not does not 
affect the existence of strategies to enforce temporal properties that
only concern the future.

\halfline
\head{Players not remembering their actions} Our definition
of synchronous perfect recall only considers the sequence of positions
in finite plays, and forgets about actions taken by players. In
particular, it is possible in this definition that a player cannot
distinguish between two finite plays in which she plays
different actions. This definition is standard in games with imperfect
information~\cite{DBLP:conf/concur/MeydenW05,berwanger2010strategy,doyen2011games,DBLP:journals/acta/BerwangerMB18},
since remembering one's actions or not is indifferent for the
existence of distributed winning strategies or Nash equilibria.
However it makes a difference for some more involved solution concepts
that are expressible in strategic logics such as \SLi. 
For instance it is observed in~\cite[Appendix A]{bouyer2017games} that
some games admit subgame-perfect equilibria only if agents remember their
own past actions. Nonetheless we consider the setting where agents do
not remember their actions, as it is the most general. Indeed, as noted
in~\cite[Remark 2.1, p.8]{chatterjee2014partial}, one can simulate
agents that remember their own actions by storing in positions of the
game the information of the last joint move played (this may create
$|\Act|^{|\Agf|}$ copies of each position, but the branching degree is
unchanged). One can then adapt  indistinguishability relations to
take actions into account. For instance, for an observation symbol $\obs$ and
an agent $\ag$, one could consider a new observation symbol
$\obs_\ag$ that would be interpreted in the enriched game structure as
the refinement of $\obseq$ that considers two positions
indistinguishable if they are indistinguishable for $\obseq$ and
contain the same last action for agent $a$. Binding agent $\ag$
only to strategies that use  observation of the form $\obs_\ag$ for
some $\obs$ captures the fact that agent $a$ remembers her
actions.

\halfline
\head{Agents changing observation} In \SLi observations are not bound
to agents but to strategies. And because agents can change their
strategy thanks to the binding operator, it follows that they can
change observation, or more precisely they can successively play with
strategies that have different observations. For instance consider a controller that observes a system through a set of $n$
sensors $S=\{s_1,\ldots,s_n\}$ as in,
\eg,~\cite{bittner2012symbolic}. Let $\obs_i$ be the
observation power provided by the set of sensors $S\setminus\{s_i\}$
(one can think of a system where states are tuples of local states,
each sensor observing one component).
Also let $\obs$ be the observation power provided by the full set $S$
of sensors,
and let atom $\text{fault}_i$ represent
the fact that a fault occurs on sensor $s_i$. The formula
\[\phi\egdef\Estrato{\obs}\bind{\var}\Aout\always\left(\text{safe} \et
  \biget_{i=1}^n\text{fault}_i \impl
  \Estrato{\obs_i}\bind{\var}\Aout\always\text{\,safe}_i\right )\]
expresses that the controller $a$ has a strategy (which
uses all sensors in $S$) to maintain the system safe, and if a sensor
is lost, it can respond by switching to a strategy using the remaining
sensors to maintain some alternative, possibly weaker, security
requirement $\text{safe}_i$.

\subsection{Model checking and hierarchical instances}
\label{sec-mc-hierarch-inst}

We now introduce the main decision problem of this paper, which is
the model-checking problem for \SLi.  An \emph{\SLi-instance}
is a model together with a formula, \ie, it is a pair $(\CGSi,\Phi)$
where $\CGSi$ is a \CGSi and  $\Phi \in \SLi$.

\begin{definition}[Model checking \SLi]
  \label{def-MC-SLi}
The \emph{model-checking problem} for \SLi is the decision problem that,
given an \SLi-instance $(\CGSi,\Phi)$, returns `Yes' if
$\CGSi\models\Phi$, and `No' otherwise.  
\end{definition}

It is well known that
deciding the existence of winning  strategies in
multi-player games with
imperfect information is undecidable for reachability
objectives~\cite{peterson2001lower}. Since this problem is easily
reduced to the  model-checking problem for \SLi, we get the following result.

\begin{theorem}
  \label{theo-undecidable-SLi}
  The model-checking problem for \SLi is undecidable.
\end{theorem}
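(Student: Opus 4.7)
The plan is to establish undecidability by reducing from a well-known undecidable problem in multi-player games with imperfect information, namely the Peterson-Reif result \cite{DBLP:conf/focs/PetersonR79}: deciding whether two players with incomparable observations have a joint strategy to enforce a reachability objective against an adversarial third player with perfect information is undecidable. Since the abstract and introduction already note that \SLi is designed to express distributed synthesis via a formula of the shape $\Estrato[x_1]{\obs_1}\cdots\Estrato[x_n]{\obs_n}\bind{x_1}\cdots\bind{x_n}\Aout\psi$, the reduction is essentially a direct encoding.

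First I would take an arbitrary instance of the Peterson-Reif problem, consisting of a multi-player game structure $\mathcal{G}$ with players $a_1, a_2, a_3$, observation relations $\sim_1, \sim_2$ (incomparable) for the first two players, and a reachability target specified by some proposition $p$. I would then construct a \CGSi $\CGSi$ whose action set, position set, transition function, labelling and initial position are copied verbatim from $\mathcal{G}$, and whose observation interpretation fixes three observation symbols $\obs_1,\obs_2,\obs_3\in\Obsf$ with $\obsint(\obs_1) = {\sim_1}$, $\obsint(\obs_2) = {\sim_2}$, and $\obsint(\obs_3)$ equal to the identity relation on positions (modelling the perfect information of the adversary). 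This construction is polynomial and clearly yields a well-formed \CGSi.

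Next I would exhibit the sentence
\[
\Phi \;\egdef\; \Estrato[x_1]{\obs_1}\,\Estrato[x_2]{\obs_2}\,\Astrato[x_3]{\obs_3}\,\bind[a_1]{x_1}\,\bind[a_2]{x_2}\,\bind[a_3]{x_3}\,\Aout\F p,
\]
and argue, by unfolding Definition~\ref{def-SLi-semantics}, that $\CGSi\models\Phi$ iff there exist an $\obs_1$-strategy for $a_1$ and an $\obs_2$-strategy for $a_2$ such that, for every strategy of $a_3$ (perfect information by choice of $\obsint(\obs_3)$), every infinite play induced by the resulting full assignment eventually visits a position labelled by $p$. Since strategies in \SLi are synchronous perfect-recall and uniform with respect to their observation, this is exactly the condition for existence of distributed winning strategies in $\mathcal{G}$. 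Hence $\CGSi\models\Phi$ iff the original Peterson-Reif instance is positive.

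The argument is essentially a matter of matching definitions rather than technical difficulty; the only step that requires any care is verifying that the notion of $\obs$-strategy in \SLi (functions $\setpos^+\to 2^{\Mov}\setminus\emptyset$ that are constant on $\obseq$-equivalent finite plays) coincides, once restricted to deterministic strategies, with the uniform perfect-recall strategies of the source problem, and that the $\Aout$ quantifier over the fully bound assignment ranges over exactly the plays consistent with the chosen strategy profile. Both facts follow directly from the corresponding clauses of Definition~\ref{def-SLi-semantics} and the definition of $\out(\assign,\fplay)$. Since the Peterson-Reif problem is undecidable, model checking \SLi must be undecidable too.
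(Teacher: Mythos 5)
Your proposal is correct and follows essentially the same route as the paper, which proves this theorem in one sentence by reducing from the undecidability of deciding the existence of distributed winning strategies in multi-player reachability games with imperfect information; you merely spell out the construction of the \CGSi and the witnessing formula that the paper leaves implicit (and which it sketches in the introduction as $\Phi_{\textsc{Synth}}$). The only cosmetic difference is that you bind the adversary to a universally quantified perfect-information strategy rather than leaving it unbound under $\Aout$, but the two formulations induce the same set of outcomes, so the argument goes through unchanged.
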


\head{Hierarchical instances}
We now isolate a sub-problem obtained by restricting attention to \emph{hierarchical instances}. 
Intuitively, an \SLi-instance $(\CGSi,\Phi)$ is hierarchical if, as one goes down a path in the syntactic tree of $\Phi$, 
the observations tied to quantifications become finer. 
%
%

\begin{definition}[Hierarchical instances]
  \label{def-hierarchical-formula}
An \SLi-instance $(\CGSi,\Phi)$ is \emph{hierarchical} if
for every
  subformula $\phi_{1}=\Estrato[y]{\obs_{1}}\phi'_{1}$ of $\Phi$ and subformula
  $\phi_{2}=\Estrato{\obs_{2}}\phi'_{2}$ of $\phi'_1$,
 it holds that 
  $\obsint(\obs_{2})\subseteq \obsint(\obs_{1})$.
\end{definition}

If $\obsint(\obs_{2})\subseteq \obsint(\obs_{1})$ we say that $\obs_{2}$ is \emph{finer} than $\obs_{1}$ in $\CGSi$, and that $\obs_{1}$ is \emph{coarser} than $\obs_{2}$ in $\CGSi$.  Intuitively, this means that a player with observation
$\obs_{2}$ observes game $\CGSi$ no worse than, i.e., knows at least
as much as
a player with observation $\obs_{1}$.

\begin{remark}
  \label{rem-visible-actions}
  If one uses the trick described in
  Section~\ref{sec-discussion-semantics} to model agents that remember
  their own actions, then for an agent $\ag$ to know at least as much as
  another agent $\agb$ it needs to be the case that,  in particular, agent $\ag$ observes all actions
  played by agent $\agb$.
\end{remark}

\begin{example}[Fault-tolerant diagnosibility]
  Consider the following formula from
  Section~\ref{sec-discussion-semantics}:
  \[\phi\egdef\Estrato{\obs}\bind{\var}\Aout\always\left(\text{safe} \et
  \biget_{i=1}^n\text{fault}_i \impl
  \Estrato{\obs_i}\bind{\var}\Aout\always\text{\,safe}_i\right )\]
As already discussed, it expresses that the controller can react to the loss
of a sensor to keep ensuring some property of the system. Clearly, the
controller's observation $\obs_i$ after the loss of sensor $i$ is
coarser than its original observation $\obs$, and thus formula $\phi$
in such a system does not form a hierarchical instance.
\end{example}

We now give an example of scenario where hierarchical instances occur naturally.

\begin{example}[Security levels]
Consider a system with different ``security levels'', where higher levels have access to more data (\ie, can
  observe more).  Assume that the \CGSi $\CGSi$ is such that
  $\obsint(\obs_{n})\subseteq \obsint(\obs_{n-1}) \subseteq \ldots\subseteq
  \obsint(\obs_{1})$: in other words, level $n$ has the highest security
  clearance, while level $1$ has the lowest. Consider that agent $a$
  wants to reach some objective marked by atom ``goal'', that it
   starts with the lowest observation clearance $\obs_{1}$, and that
   atomic formula ``$\text{promote}_i$'' means that the agent is granted
   access to level $i$ (observe that whenever we have
   $\text{promote}_i$, we should also have $\text{promote}_j$ for all $j<i$). For every $i$ we let
   \[\phi_i(\phi')\egdef
     \text{goal}\vee(\text{promote}_i\wedge\Estrato{\obs_i}\bind{\var}\A\F\phi')\]
   Now the formula
  \[\phi\egdef \phi_1(\phi_2(\ldots\phi_{n-1}(\phi_n(\text{goal}))\ldots))\]
means that agent $a$ can enforce her goal, possibly by first getting
access to higher security levels and using this additional observation
power to reach the goal. Because the strategy quantifications that are
deeper in the formula have access to more information, this formula forms a hierarchical instance in $\CGSi$.
\end{example}


Here is the main contribution of this work:

\begin{theorem}
\label{theo-SLi}
The model-checking problem for \SLi restricted to the class of hierarchical instances is decidable.
\end{theorem}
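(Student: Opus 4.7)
The plan is to reduce model-checking hierarchical instances of \SLi to model-checking a syntactic fragment of an intermediate, lower-level logic \QCTLsi, and then solve the latter using an automata-theoretic construction on tree unfoldings. The intermediate logic \QCTLsi is \CTLs extended with second-order quantifiers over fresh atomic propositions, each quantifier annotated with an observation symbol $\obs$ forcing the chosen labelling to be $\obs$-uniform, i.e., to assign the same value to $\obseq$-equivalent positions. A \QCTLsi-instance is hierarchical when, along every branch of the syntactic tree, the observation annotating an inner second-order quantifier is finer (in the sense of Definition~\ref{def-hierarchical-formula}) than that of any enclosing one.

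First I would give a compositional translation of \SLi into \QCTLsi. A nondeterministic $\obs$-strategy for a variable $\var$ is encoded as a labelling of the tree unfolding by a block of fresh atomic propositions $(p_{\var,\mova})_{\mova\in\Act}$ constrained to be nonempty at each node; requiring these propositions to be introduced by $\obs$-annotated second-order quantifiers captures exactly $\obs$-uniformity under synchronous perfect recall. Strategy quantification $\Estrato{\obs}$ is translated as this block of $\obs$-quantifiers plus a sanity formula, binding $\bind{\var}$ and unbinding $\unbind$ are absorbed into bookkeeping about which propositions currently drive which agent, and $\Eout$ becomes the \CTLs path quantifier restricted to paths whose moves are consistent with the currently bound strategies. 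Since the observation on each quantifier is preserved verbatim, the translation maps hierarchical \SLi-instances to hierarchical \QCTLsi-instances, and correctness follows by a straightforward induction matching strategies to tree labellings.

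Second I would solve model-checking on the hierarchical fragment of \QCTLsi by a bottom-up automata construction on the tree unfolding of the \CGSi. State subformulas are compiled into alternating tree automata by the usual \CTLs constructions; the only new case is a quantifier $\exists^{\obs} p\,\phi$. Given an automaton for $\phi$, one converts it to a nondeterministic tree automaton and then performs a \emph{narrowing} operation that existentially projects the $p$-dimension while constraining the projected value to be uniform on $\obseq$-equivalent nodes. Narrowing is sound only when the current automaton does not distinguish $\obseq$-equivalent subtrees, and this is exactly what the hierarchy condition guarantees: every quantifier strictly inside $\exists^{\obs}$ observes at least as finely as $\obs$, so inductively the automaton for $\phi$ has been built entirely from narrowings with respect to refinements of $\obseq$, hence is $\obseq$-oblivious at the level currently being collapsed. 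The decision procedure then ends with a non-emptiness test on a tree automaton, which is decidable.

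The main obstacle is precisely the soundness of the narrowing step: without the hierarchy condition, an inner quantifier with a coarser observation could read information that the outer, finer-observing quantifier is not entitled to expose, and narrowing would no longer preserve the language — this is exactly the mechanism by which one encodes undecidable distributed synthesis problems, which is why Theorem~\ref{theo-undecidable-SLi} holds in full generality. All other steps (the translation, the automata constructions for \CTLs fragments, and emptiness of tree automata) are standard once the right setup is in place. Combining the hierarchy-preserving translation from \SLi to \QCTLsi with the decidability of model-checking hierarchical \QCTLsi yields Theorem~\ref{theo-SLi}.
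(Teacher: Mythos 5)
Your proposal is essentially the paper's own proof: Sections~\ref{sec-QCTL-imp-inf}--\ref{sec-modelcheckingSL} introduce \QCTLsi, encode each quantified strategy as a block of action-propositions $p^{\var}_{\mov}$ quantified under the corresponding observation (with a non-emptiness constraint and an outcome formula forcing paths to follow bound strategies), show the translation preserves hierarchy, and decide \QCTLsih by the narrowing/simulation/projection construction on alternating tree automata. The only point where you should be more careful is the order of automata operations in the $\exists^{\cobs}p$ case: narrowing is applied to the (alternating) automaton first, the result must then be nondeterminised via simulation, and only then can the $p$-projection be taken, since projection is sound only for nondeterministic automata and narrowing reintroduces alternation.
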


We  prove this result in Section~\ref{sec-modelcheckingSL} by reducing it to the model-checking 
problem for the hierarchical fragment of a logic called  \QCTLs with
imperfect information, which we now introduce and
study in order to use it as an
intermediate, ``low-level'' logic between tree automata and \SLi.
We then discuss some applications of this theorem in Section~\ref{sec-applications}.

\section{\QCTLs with imperfect information}
\label{sec-QCTL-imp-inf}

In this section we introduce an imperfect-information extension of
\QCTLs~\cite{Sis83,Kup95,KMTV00,french2001decidability,
DBLP:journals/corr/LaroussinieM14}, which is an extension of \CTLs
with second-order quantification on atomic propositions. In order to introduce imperfect
information, instead of considering equivalence relations between
states as in concurrent game structures, we
will enrich Kripke structures by giving internal structure to their states,
i.e., we see states as $n$-tuples of local states. This way of
modelling imperfect information is inspired from
  Reif's multi-player
  game structures~\cite{peterson2001lower} and distributed
  systems~\cite{halpern1989complexity}, and we find it very suitable to application of
  automata techniques, as discussed in
  Section~\ref{sec-discussion-QCTL}.

  The syntax of \QCTLsi is 
similar to that of \QCTLs, except that we annotate second-order quantifiers by
subsets $\cobs \subseteq [n]$. The idea is that quantifiers annotated by $\cobs$ 
can only ``observe'' the local states indexed by $i \in \cobs$. 
We define the tree-semantics of \QCTLsi: this means that we
interpret formulas on trees that are the unfoldings of Kripke structures (this
will capture the fact that players in \SLi have synchronous perfect recall). 
We then define the syntactic class of \emph{hierarchical formulas} and prove,
using an automata-theoretic approach, that  model checking this
class of formulas is decidable.

For the rest of the section we fix some natural number $n\in\setn$ which
parameterises the logic \QCTLsi, and which is the number of components
in states of the models.

\subsection{\QCTLsi Syntax}
\label{sec-syntax-QCTLi}

The syntax of \QCTLsi is very similar to that of \QCTLs: the only difference is
that we annotate quantifiers by a set of indices that defines the
``observation'' of that quantifier.

\halfline \head{Concrete observations} 
A set $\cobs \subseteq [n]$ is called a
\emph{concrete observation} (to distinguish it from observations
$\obs$ in the definitions of \SLi).

\begin{definition}[\QCTLsi Syntax]
  \label{def-syntax-QCTLsi}
  The syntax of \QCTLsi is defined by the following grammar:
  \begin{align*}
  \phi\egdef &\; p \mid \neg \phi \mid \phi\ou \phi \mid \E \psi \mid
  \existsp[p]{\cobs} \phi\\
    \psi\egdef &\; \phi \mid \neg \psi \mid \psi\ou \psi \mid \X \psi \mid
  \psi \until \psi
\end{align*}
where $p\in\APf$ 
and $\cobs\subseteq [n]$. 
\end{definition}

Formulas of type $\phi$ are called \emph{state formulas}, those of type $\psi$
are called \emph{path formulas}, and \QCTLsi consists of all the state formulas
defined by the grammar.   We use standard abbreviation 
$\A\psi \egdef \neg\E\neg\psi$.
We also use $\exists p.\,\phi$ as a shorthand for $\existsp[p]{[n]}
\phi$, 
and we let $\forall p.\, \phi\egdef \neg \exists p.\, \neg \phi$.

Given a \QCTLsi formula $\phi$, we define the set of \emph{quantified
  propositions} $\APq(\phi)\subseteq\APf$ as the set of atomic propositions $p$ such that
$\phi$ has a subformula of the form $\existsp[p]{\cobs}\phi$. We also
define the set of \emph{free propositions} $\APfree(\phi)\subseteq\APf$ as the set
of atomic propositions that have an occurrence which is not under the scope of any quantifier
of the form $\existsp[p]{\cobs}$ Observe that 
$\APq(\phi)\inter\APfree(\phi)$ may not be empty, \ie, a proposition may appear both free and quantified in (different
places of) a formula. 
 
\subsection{\QCTLsi semantics}
\label{sec-QCTLsi-semantics}


Several semantics have been considered for \QCTLs, the two
most studied being the \emph{structure semantics} and the \emph{tree
  semantics} (see \cite{DBLP:journals/corr/LaroussinieM14} for more
details). For the semantics of \QCTLsi we adapt the tree semantics,
and we explain the reasons for doing so in Section~\ref{sec-discussion-QCTL}.

As already mentioned, for \QCTLsi we consider  structures whose states are tuples
of local states. We now define these structures and related notions.

\begin{definition}[Compound Kripke structures]
A \emph{compound Kripke structure}, or \CKS, over $\APf$ is a tuple 
$\CKS=(\setstates,\relation,\lab,\sstate_\init)$ where
\begin{itemize}
\item $\setstates\subseteq \prod_{i\in [n]}\setlstates_i$  is a set of
\emph{states}, with $\{\setlstates_{i}\}_{i\in [n]}$ a family of $n$ disjoint finite sets of
\emph{local states}, 
\item $\relation\subseteq\setstates\times\setstates$ is a
left-total\footnote{\ie, for all $\sstate\in\setstates$, there exists $\sstate'$
such that $(\sstate,\sstate')\in\relation$.} \emph{transition
relation}, 
\item $\lab:\setstates\to 2^{\APf}$ is a \emph{\labeling function} and
\item $\sstate_\init \in \setstates$ is an \emph{initial state}.
\end{itemize}
\end{definition}
A \emph{path} in $\CKS$  is an infinite sequence of states
$\spath=\sstate_{0}\sstate_{1}\ldots$ such that
 for all $i\in\setn$,
$(\sstate_{i},\sstate_{i+1})\in \relation$. 
A \emph{finite path} is a finite non-empty prefix of a path.
We may write $\sstate\in\CKS$ for $\sstate\in\setstates$, and we
define the \emph{size} $|\CKS|$ of a \CKS
$\CKS=(\setstates,\relation,\sstate_\init,\lab)$ as its number of states: $|\CKS|\egdef
|\setstates|$. 

Since we will interpret \QCTLsi on unfoldings of \CKS, we now define
infinite trees.

\halfline
\head{Trees}
In many works, trees are defined as prefix-closed sets of words 
with the empty word $\epsilon$ as root. Here trees represent unfoldings of 
Kripke structures, and we find it more convenient to see a node $u$ as a sequence of
states and the root as the initial state. 
Let $\Dirtree$ be a finite set of \emph{directions} (typically a set of states). 
An \emph{$\Dirtree$-tree} $\tree$ 
 is a
nonempty set of words $\tree\subseteq \Dirtree^+$ such that:
\begin{itemize}
  \item\label{p-root} there exists $\racine\in\Dirtree$,  called the
    \emph{root} of $\tree$, such that each
    $\noeud\in\tree$ starts with $\racine$ ($\racine\pref\noeud$);
  \item if $\noeud\cdot\dir\in\tree$ and $\noeud\cdot\dir\neq\racine$, then
    $\noeud\in\tree$, 
  \item if $\noeud\in\tree$ then there exists $\dir\in\Dirtree$ such that $\noeud\cdot\dir\in\tree$.
\end{itemize}

The elements of a tree $\tree$ are called \emph{nodes}.  
  If 
 $\noeud\cdot\dir \in \tree$, we say that $\noeud\cdot\dir$ is a \emph{child} of
 $\noeud$.
 The \emph{depth} of a node $\noeud$ is $|\noeud|$.
An $\Dirtree$-tree $\tree$ is \emph{complete} if for every $\noeud \in
\tree$ and  $\dir \in \Dirtree$,  $\noeud \cdot \dir \in \tree$.
A \emph{\tpath} in $\tree$ is an infinite sequence of nodes $\tpath=\noeud_0\noeud_1\ldots$
such that for all $i\in\setn$, $\noeud_{i+1}$ is a child of
$\noeud_i$,
and $\tPaths(\noeud)$ is the set of \tpaths
 that start in node $\noeud$. 

 \halfline
 \head{Labellings}
 An \emph{$\APf$-\labeled $\Dirtree$-tree}, or
\emph{$(\APf,\Dirtree)$-tree} for short, is a pair
$\ltree=(\tree,\lab)$, where $\tree$ is an $\Dirtree$-tree called the
\emph{domain} of $\ltree$ and
$\lab:\tree \rightarrow 2^{\APf}$ is a \emph{\labeling}, which maps
each node to the set of propositions that hold there.
 For $p\in\APf$, a \emph{$p$-\labeling} for a  tree is a mapping
$\plab:\tree\to \{0,1\}$ that indicates in which nodes $p$ holds, and
for a \labeled tree $\ltree=(\tree,\lab)$, the $p$-\labeling of $\ltree$ is
the $p$-\labeling $\noeud \mapsto 1$ if $p\in\lab(\noeud)$, 0 otherwise. 
The composition of a \labeled tree $\ltree=(\tree,\lab)$ with a
$p$-\labeling $\plab$ for $\tree$ is defined as
$\ltree\prodlab\plab\egdef(\tree,\lab')$, where
$\lab'(\noeud)=\lab(\noeud)\union \{p\}$ if $\plab(\noeud)=1$, and
$\lab(\noeud)\setminus \{p\}$ otherwise.
A $p$-\labeling for a labelled tree $\ltree=(\tree,\lab)$ is a
$p$-\labeling for its domain $\tree$.
A \emph{pointed labelled tree} is a pair $(\ltree,\noeud)$ where
 $\noeud$ is a node of $\ltree$.


If $\noeud=w\cdot \dir$, the \emph{subtree} $\ltree_{\noeud}$ of $\ltree=(\tree,\lab)$ is defined
      as $\ltree_{\noeud}\egdef(\tree_{\noeud},\lab_{\noeud})$ with
      $\tree_{\noeud}=\{\dir\cdot w' \mid w\cdot\dir\cdot w' \in
      \tree\}$, and $\lab_{\noeud}(\dir\cdot w')=\lab(w\cdot\dir\cdot w')$.
A labelled tree is \emph{regular} if it has finitely many disctinct
subtrees.


In the tree semantics of \QCTLsi that we consider here, formulas are
evaluated on tree unfoldings of \CKS, which we now define.
  
\halfline
\head{Tree unfoldings} 
Let $\CKS=(\setstates,\relation,\lab,\sstate_\init)$ be a compound Kripke structure over $\APf$. 
The \emph{tree-unfolding of $\CKS$} is the $(\APf,\setstates)$-tree $\unfold{\sstate}\egdef (\tree,\lab')$, where
    $\tree$ is the set
    of all finite  paths that start in $\sstate_\init$, and
    for every $\noeud\in\tree$,
    $\lab'(\noeud)\egdef \lab(\last(\noeud))$.

    Note that a labelled tree is  regular if and only if it    
    is the unfolding of
some finite Kripke structure.


\halfline  
\head{Narrowing}
Let $\Dirtree$ and $\Dirtreea$ be two finite sets, and let $(\dir,\dira)\in\Dirtree\times\Dirtreea$.
 The
 \emph{$\Dirtree$-narrowing} of $(\dir,\dira)$ is
${\projI[\Dirtree]{(\dir,\dira)}}\egdef \dir$.
%
This definition extends naturally to words and trees over
$\Dirtree\times\Dirtreea$ (point-wise). 

Given a family of (disjoint) sets of local states
$\{\setlstates_{i}\}_{i\in [n]}$ and a subset  $I\subseteq [n]$, we let $\Dirtreei\egdef\prod_{i\in
I}\setlstates_{i}$ if $I\neq\emptyset$ and
$\Dirtreei[\emptyset]\egdef\{\blank\}$, where $\blank$ is a special symbol.
 For $I,J\subseteq [n]$ and $\dirz\in\Dirtreei[I]$,
 we also define 
 ${\projI[{J}]{\dirz}}\egdef
 \projI[{\Dirtreei[I\inter J]}]{\dirz}$,
 where $\dirz$ is seen as a pair $\dirz=(\dir,\dira)\in
 \Dirtreei[I\inter J]\times \Dirtreei[{I\setminus J}]$, \ie, we apply the
 above definition with
 $\Dirtree=\Dirtreei[I\inter J]$ and $\Dirtreea=\Dirtreei[{I\setminus J}]$. This is well defined
 because having taken sets $\setlstates_{i}$ to be disjoint, the
 ordering of local states in $\dirz$ is indifferent.
 We also extend this definition to words and trees.
  In particular, for every
 $\Dirtreei[I]$-tree $\tree$, $\projI[\emptyset]{\tree}$ is the only
 $\Dirtreei[\emptyset]$-tree, $\blank^{\omega}$.


\halfline
\head{Quantification and uniformity} 
In \QCTLsi $\existsp[p]{\cobs} \phi$ holds in a tree $\ltree$ if there is some
$\cobs$-uniform $p$-labelling of $\ltree$ such that $\ltree$ with this
$p$-labelling satisfies $\phi$.  Intuitively, a $p$-labelling of a
tree is $\cobs$-uniform  if 
 every two nodes that are indistinguishable for observation $\cobs$
agree on $p$.

\begin{definition}[$\cobs$-indistinguishability and $\cobs$-uniformity
  in $p$]
  \label{def-uniformity}
Fix $\cobs \subseteq [n]$ and $I \subseteq [n]$.

\begin{itemize}
 \item Two tuples $\dir,\dir'\in\Dirtreei[I]$ are \emph{$\cobs$-indistinguishable},
written $\dir\oequiv\dir'$, if 
$\projI[{\cobs}]{\dir}=\projI[{\cobs}]{\dir'}$.

\item Two words
   $\noeud=\noeud_{0}\ldots\noeud_{i}$ and
   $\noeud'=\noeud'_{0}\ldots\noeud'_{j}$ over alphabet $\Dirtreei[I]$ are
   \emph{$\cobs$-indistinguishable}, written $\noeud\oequivt\noeud'$, if
   $i=j$ and for all $k\in \{0,\ldots,i\}$ we have
   $\noeud_{k}\oequiv\noeud'_{k}$.
 \item A $p$-\labeling for a tree $\tree$ is \emph{$\cobs$-uniform} if for all
    $\noeud,\noeud'\in\tree$, $\noeud\oequivt\noeud'$ implies
 $\plab(\noeud)=\plab(\noeud')$.
\end{itemize}
\end{definition}


\allowdisplaybreaks
\begin{definition}[\QCTLsi semantics]
We define by induction the satisfaction relation $\modelst$ of
\QCTLsi. Let   $\ltree=(\tree,\lab)$ be
an $\APf$-\labeled $\Dirtreei$-tree, 
$\noeud$  a node and $\tpath$  a path in $\tree$:
\begingroup
  \addtolength{\jot}{-3pt}
\begin{alignat*}{3}
  \ltree,\noeud\modelst & 	\,p 			&& \mbox{ if } &&\quad p\in\lab(\noeud)\\
  \ltree,\noeud\modelst & 	\,\neg \phi		&& \mbox{ if } && \quad\ltree,\noeud\not\modelst \phi\\
  \ltree,\noeud\modelst & 	\,\phi \ou \phi'		&& \mbox{ if } &&\quad \ltree,\noeud \modelst \phi \mbox{ or    }\ltree,\noeud\modelst \phi' \\
  \ltree,\noeud\modelst & 	\,\E\psi			&& \mbox{ if } &&\quad \exists\,\tpath\in\tPaths(\noeud) \mbox{      s.t. }\ltree,\tpath\modelst \psi \\
  \ltree,\noeud\modelst & \,\existsp{\cobs} \phi && \mbox{ if }
  && \quad \exists\,\plab \mbox{ a $\cobs$-uniform $p$-\labeling for
    $\ltree$ such that
  }\ltree\prodlab\plab,\noeud\modelst\phi\\
\ltree,\tpath\modelst &		\,\phi 			&& \mbox{ if } &&\quad \ltree,\tpath_{0}\modelst\phi \\ 
\ltree,\tpath\modelst &		\,\neg \psi 		&& \mbox{ if }
&& \quad \ltree,\tpath\not\modelst \psi \\ 
\ltree,\tpath\modelst & \,\psi \ou \psi'	\quad		&& \mbox{ if } && \quad\ltree,\tpath \modelst \psi \mbox{ or }\ltree,\tpath\modelst \psi' \\ 
\ltree,\tpath\modelst & \,\X\psi 				&& \mbox{ if } && \quad\ltree,\tpath_{\geq 1}\modelst \psi \\ 
\ltree,\tpath\modelst & \,\psi\until\psi' 		&& \mbox{ if }
&& \quad\exists\, i\geq 0 \mbox{ s.t.    }\ltree,\tpath_{\geq
  i}\modelst\psi' \text{ and } \forall j \text{ s.t. }0\leq j <i,\; \ltree,\tpath_{\geq j}\modelst\psi
\end{alignat*}
\endgroup
\end{definition}


We write $\ltree\modelst\phi$ for $\ltree,\racine\modelst\phi$,
where $\racine$ is the root of $\ltree$.     Given a \CKS $\KS$ and a
\QCTLsi formula $\phi$, we also write 
$\CKS \modelst \phi$ if
$\CKS,\sstate_\init \models \phi$. 

\begin{example}
  \label{example-QCTLi}
  Consider the following \CTL formula:
\[\ligne{p}\egdef  \A \F p \wedge \A \always (p
\rightarrow \A\X \A\always \neg p).\]

This formula  holds in a labelled tree if and only if each path contains
exactly one node labelled with $p$. Now, consider the following
\QCTLsi formula:
\[\ligneb{p}\egdef\existsp{\emptyset}\ligne{p}.\]
For a blind quantifier, two nodes of a tree are
indistinguishable if and only if they have same depth. Therefore,
this formula holds on a tree iff the $p$'s label all and only the nodes at 
some fixed depth. This formula can thus be used to capture the equal
level predicate on trees. Actually, just as \QCTLs captures \MSO, 
one can prove that \QCTLsi with tree semantics subsumes \MSO with
equal level~\cite{elgot-rabin66,lauchli1987monadic,thomas-msoeqlevel}.
In Theorem~\ref{theo-undecidable} we make use of a similar observation to prove that
model-checking \QCTLsi is undecidable.
\end{example}



\subsection{Discussion on the definition of \QCTLsi}
 \label{sec-discussion-QCTL}

 We now motivate in detail some aspects of \QCTLsi.

 \halfline \head{Modelling of imperfect information} We model
 imperfect information by means of local states (rather than
 equivalence relations) because this greatly facilitates the use of
 automata techniques. More precisely, in our decision procedure of
 Section~\ref{sec-decidable} we use an operation on tree automata
 called \emph{narrowing}, which was introduced
 in~\cite{kupferman1999church} to deal with imperfect-information in
 the context of distributed synthesis for temporal
 specifications. Given an automaton $\auto$ that works on
 $\Dirtree\times \Dirtreea$-trees, where $\Dirtree$ and $\Dirtreea$
 are two finite sets, and assuming that we want to model an operation
 performed on trees while observing only the $\Dirtree$ component of
 each node, this narrowing operation allows one to build from $\auto$
 an automaton $\auto'$ that works on $\Dirtree$-trees, such that
 $\auto'$ accepts an $\Dirtree$-tree if and only if $\auto$ accepts
 its widening to $\Dirtree\times\Dirtreea$ (intuitively, this widening
 is the
 $\Dirtree\times\Dirtreea$-tree in which each node is \labeled as its
 projection on the original $\Dirtree$-tree; see
 Section~\ref{sec-decidable} for details). 

With our definition of compound Kripke structures, their
unfoldings are trees over the  Cartesian product
$\Dirtreei[{[n]}]$. To model a quantification  $\exists^{\cobs}p$ with observation
$\cobs\subseteq [n]$, we can thus use the narrowing operation to
forget about components $\setlstates_{i}$, for $i\in [n]\setminus\cobs$.
We then use the classic projection of nondeterministic tree automata to perform
existential quantification on atomic proposition $p$. Since the choice
of the $p$-labelling is made directly on $\Dirtreei[\cobs]$-trees, it
is necessarily $\cobs$-uniform. 

\halfline
\head{Choice of the tree semantics}
 The two most studied semantics for \QCTLs   are the \emph{structure
   semantics}, in which  formulas are evaluated directly on Kripke
 structures, and the \emph{tree semantics}, in which Kripke structures
 are first unfolded into infinite trees. Tree semantics thus allows
 quantifiers to
 choose the value of a quantified atomic proposition in each
 \emph{finite path} of the model, while in structure semantics the
 choice is only made in each state. 
When \QCTLs is used to express existence of strategies, existential
quantification on atomic propositions labels the structure with
strategic choices; in this kind of application, structure semantics
reflects so-called \emph{positional} or \emph{memoryless} strategies,
while tree semantics captures \emph{perfect-recall} or
\emph{memoryful} strategies. Since in this work we are interested in
perfect-recall strategies, we only consider the tree semantics.

\subsection{Model checking \QCTLsi}
\label{sec-modelchecking}


We now define the model-checking problem studied in the rest of this section.

\begin{definition}[Model checking \QCTLsi]
  \label{def-mc-QCTLi}
  The \emph{model-checking problem for \QCTLsi} is the 
following decision problem: given an instance $(\CKS,\Phi)$ where $\CKS$ is a CKS, 
and $\Phi$ is a \QCTLsi formula, return `Yes' if $\CKS \modelst \Phi$ and `No' otherwise.
\end{definition}

We now prove that the model-checking problem for \QCTLsi
 is undecidable. This comes as 
no surprise since, as we will show in Section~\ref{sec-modelcheckingSL}, \QCTLsi can  express the existence of
distributed winning strategies in imperfect-information games.
 However we propose a proof that shows the connection between \QCTLsi
 and \MSO with equal-level predicate~\cite{elgot-rabin66,lauchli1987monadic,thomas-msoeqlevel}. This proof also has the benefit
 of showing that \QCTLsi is undecidable already for formulas that
 involve only propositional quantifiers that observe either everything
 or nothing.
 
 \newcounter{theo-undecidable}
\setcounter{theo-undecidable}{\value{theorem}}

\begin{theorem}
    \label{theo-undecidable}
The model-checking problem for \QCTLsi is undecidable.
\end{theorem}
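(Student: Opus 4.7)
The plan is to reduce from the monadic second-order theory of the infinite complete binary tree extended with the equal-level predicate, shown undecidable in~\cite{lauchli1987monadic}. I first exhibit a small \CKS whose tree unfolding is (isomorphic to) the complete binary tree, and then give a compositional translation $\phi \mapsto \hat\phi$ from this \MSO{}+eqlevel theory into \QCTLsi such that $\phi$ holds in the binary tree if and only if the chosen \CKS satisfies $\hat\phi$. For the model, I take one component ($n=1$) with local states $\{L,R\}$, transition relation $\{L,R\}\times\{L,R\}$, an arbitrary initial state, and a distinguished atomic proposition $d$ labelling only $R$. The unfolding is then the full binary tree, with the two children of every node distinguishable by the truth value of $d$.

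For the \MSO part I recycle the classical translation of \MSO over trees into \QCTLs. Second-order existential quantifiers become full-observation propositional quantifiers: $\exists X.\,\phi$ is translated as $\existsp[p_X]{[n]}\hat\phi$, and the observation $[n]$ places no constraint on the labelling, so $p_X$ can realise any set of nodes. First-order variables are encoded in the standard way as singleton propositions: $\exists x.\,\phi$ becomes $\existsp[p_x]{[n]}(\mathrm{sing}(p_x) \wedge \hat\phi)$, where $\mathrm{sing}(p_x)$ is a \CTLs formula forcing $p_x$ to hold at exactly one node (reachability of some $p_x$-node, no two $p_x$-nodes on the same path, and, using $d$ to orient branches, no $p_x$-nodes in distinct subtrees of any node). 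The atomic predicates $x \in X$, $S_0(x,y)$, $S_1(x,y)$ and $x=y$ then all have routine \CTLs translations on top of this encoding.

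The crucial step is the translation of $\mathrm{eqlevel}(x,y)$. Following Example~\ref{example-QCTLi}, I translate it by
\[
\existsp[q]{\emptyset}\bigl(\ligne{q}\,\wedge\, \A\always(p_x \to q)\,\wedge\, \A\always(p_y \to q)\bigr).
\]
The blind quantifier $\existsp[q]{\emptyset}$ forces the $q$-labelling to depend only on the depth of a node, and combined with $\ligne{q}$ (``every path has exactly one $q$-labelled node'') this means $q$ marks precisely one full level of the tree. The remaining conjuncts then place both the unique $p_x$-node and the unique $p_y$-node on that level, which is the case if and only if they have equal depth.

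The main obstacle lies in carefully verifying the translation. In particular one must check that $\mathrm{sing}(p_x)$ genuinely forces a single node rather than an antichain or an entire branch, and that blind $q$-quantification together with $\ligne{q}$ really characterises levels exactly rather than some coarser class of depth-uniform configurations. Once this is verified, undecidability of \MSO{}+eqlevel transfers directly to model-checking \QCTLsi, and moreover the constructed formulas use only the two extreme propositional quantifiers $\existsp{[n]}$ and $\existsp{\emptyset}$, matching the additional strengthening noted before the theorem.
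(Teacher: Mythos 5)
Your proposal is correct and follows essentially the same route as the paper's proof: reduce from the \MSO theory of the full binary tree with the equal-level predicate~\cite{lauchli1987monadic}, encode set variables by full-observation propositional quantifiers, and capture the equal-level predicate exactly as in Example~\ref{example-QCTLi} by a blind quantifier $\existsp{\emptyset}$ marking a single level via $\ligne{q}$. The only (harmless) divergence is how singletons are enforced for first-order variables: the paper keeps the model unlabelled and uses a propositional quantifier ($\uniq[p]\egdef\E\F p\wedge\forall q.\,(\E\F(p\wedge q)\to\A\always(p\to q))$), whereas you add a direction atom $d$ to the \CKS and express uniqueness in pure \CTLs, which is sound because two incomparable $p$-nodes would lie in distinct child-subtrees of their least common ancestor and violate your third conjunct.
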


\begin{proof}
  Let \MSOeql denote the extension of the logic \MSO (without unary
  predicates) by a binary predicate symbol $\eql$.  \MSOeql is
  interpreted on the full binary tree, and the semantics of
  $\eql(x,y)$ is that $x$ and $y$ have the same depth in the tree. We
  show how to effectively translate \MSOeql into \QCTLsi, and our result follows since the \MSOeql-theory of the
  binary tree is undecidable~\cite{lauchli1987monadic}.
  The translation from \MSOeql to \QCTLsi is obtained by extending that
from \MSO to \QCTL~\cite{DBLP:journals/corr/LaroussinieM14}, using the formula $\ligneb{\cdot}$ from
  Example~\ref{example-QCTLi} to help capture the equal-length
  predicate.

  We define a translation $\transt{\quad}$ from \MSOeql to \QCTLsi such that
  for every tree $\ltree$ with root $\racine$, nodes
  $\noeud_1,\ldots,\noeud_i\in \ltree$ and sets of nodes
  $\setnodes_1,\ldots,\setnodes_j\subseteq \ltree$, and
every \MSOeql formula
  ${\phi(x,x_{1},\ldots,x_{i},X_{1},\ldots,X_{j})}$,
we  have that
  \begin{equation}
    \label{eqn-proof-MSO}
  \ltree,\racine,\liste{\noeud}{1}{i},\liste{\setnodes}{1}{j} \models
  \phi(x,x_1,\ldots,x_i,X_1,\ldots,X_j) \text{\quad if and only if \quad}
  \transt{\ltree},\racine\modelst \transt{\phi}    
  \end{equation}
 where 
  $\transt{\ltree}$ is obtained from $\ltree$ by defining
  the labelling for fresh atomic propositions $p_{x_{k}}$ and
  $p_{X_{k}}$, with $k\in [i]$, as follows:
  $p_{x_{k}}\in\transt{\lab}(\noeud)$ if $\noeud=\noeud_{k}$ and
  $p_{X_{k}}\in\transt{\lab}(\noeud)$ if $\noeud\in\setnodes_{k}$.  

The translation of \MSO to \QCTLs from \cite{DBLP:journals/corr/LaroussinieM14} can be extended
to one from \MSOeql to \QCTLsi by adding rules for the equal level predicate.
Indeed, for  $\phi(x,\liste{x}{1}{i},\liste{X}{1}{j})\in\MSOeql$, we inductively define the \QCTLsi formula 
$\transt{\phi}$ as follows, where $k\in [i]$:
\[
\begin{array}{rclcrcl}
  \transt{x = x_k} &  \egdef  & p_{x_k} & & \transt{x_k = x_l} &  \egdef  & \E\F(p_{x_k}\wedge
  p_{x_l}) \\[5pt]
\transt{x\in X_{k}} &  \egdef  & p_{X_{k}} & & \transt{x_k\in X_{l}} &  \egdef  & \E\F
(p_{x_k}\wedge p_{X_{l}})\\[5pt]
\transt{\neg \phi'} &  \egdef  & \neg \transt{\phi'} & & \transt{\phi_{1}\vee
\phi_{2}} &  \egdef  & \transt{\phi_{1}} \vee \transt{\phi_{2}}\\[5pt]
\transt{\exists x_k.\phi'} &  \egdef  & \multicolumn{5}{l}{ \existsp[p_{x_k}]{}\uniq[p_{x_k}] \wedge
  \transt{\phi'}}\\[5pt]
\transt{\exists X_{k}.\phi'} &  \egdef  & \multicolumn{5}{l}{ \existsp[p_{X_{k}}]{}
\transt{\phi'}}\\[5pt]
  \transt{\edge(x,x_k)} &  \egdef  & \E\X p_{x_k} & & \transt{\edge(x_k,x)} &  \egdef  &
  \perp \\[5pt]
  \transt{\edge(x_k,x_l)} &  \egdef  & \multicolumn{5}{l}{  \E\F
  (p_{x_k}\wedge \E\X p_{x_l})}
\end{array}
\]
where $\uniq[p] \egdef  \E\F p \wedge \forall q.\;\left ( \E\F(p \wedge
  q) \rightarrow \A\always (p \rightarrow q)\right )$ holds in a tree iff
it has exactly one node labelled with $p$. To understand the $x = x_k$
and $x\in X_k$
cases, consider that $x$ will be interpreted as the root. 
For the $\edge(x_k,x)$ case, observe that $x$ has no incoming edge since it is interpreted as the root.
Second-order quantification $\exists X_k$ is translated into
quantification on atomic proposition $p_{X_k}$, and first-order
quantification $\exists x_k$ is treated similarly, with the additional
constraint that quantification is limited to $p_{x_k}$-\labelings that
set $p_{x_k}$ to true in one and only one node of the tree.

The rules for $\eql$ are as follows:
\begin{align*}
  \transt{\eql(x,x_{k})} &  \egdef  p_{x_{k}} \\ \transt{\eql(x_{k},x_l)} &  \egdef  
  \existsp{\emptyset}\ligne{p} \wedge \A\always (p_{x_{k}}\rightarrow p \wedge p_{x_l}\rightarrow p)  
\end{align*}
To understand the first case, observe that since $x$ is interpreted as the root, $x_{k}$ is on the same
level as $x$ if and only if it is also assigned the root. 
For the second case, recall from Example~\ref{example-QCTLi} that the
\QCTLsi formula  $\existsp{\emptyset}\ligne{p}$ places one unique horizontal line of
$p$'s in the tree, and thus requiring that $x_{k}$ and $x_{l}$ be both on this line ensures
that they are on the same level.
The correctness of the translation follows from~\eqref{eqn-proof-MSO},
which is proven by induction.

Now take an instance $(\ltree,\phi(x))$ of the model-checking problem for
\MSOeql on the full binary tree $\ltree$.
Let $\CKS$ be a \CKS with two states $\sstate_{0}$ and
$\sstate_{1}$ (local states are irrelevant here),  whose transition
relation is the complete relation, and with empty labelling function.
Clearly, $\unfold{\sstate_{0}}=\ltree$, and
applying~\eqref{eqn-proof-MSO} we get:
\[\ltree,\sstate_0\models \phi(x)
 \text{\quad iff\quad }\transt{\ltree},\sstate_0\models\transt{\phi}.\]

Observe that in the previous line, because there are no free
variables besides $x$, which stands for the root, we have that $\transt{\ltree}=\ltree=\unfold{\sstate_0}$,
hence we have indeed produced an instance of the model-checking
problem for \QCTLsi. 
\end{proof}

\section{A decidable fragment of \QCTLsi: hierarchy on observations}
\label{sec-decidable}

The main result of this section is the identification of an important decidable fragment of \QCTLsi.

\begin{definition}[Hierarchical formulas]
  \label{def-hierarchical}
  A \QCTLsi formula $\phi$ is \emph{hierarchical} if for all
  subformulas 
  $\phi_{1}=\existsp[p_{1}]{\cobs_{1}}\phi'_{1}$ and
  $\phi_{2}=\existsp[p_{2}]{\cobs_{2}}\phi'_{2}$ of $\phi$ where  
  $\phi_{2}$
  is a subformula of $\phi'_{1}$, we have $\cobs_{1}\subseteq\cobs_{2}$.
\end{definition}

In other words, a formula is hierarchical if innermore propositional
quantifiers observe at least as much as  outermore ones.

\begin{example}
  \label{ex-hier-qctl}
  Formula $\existsp[p]{\{1,2\}}\existsp[q]{\{1,2,4\}}\A\always (p\ou
  q)$ is hierarchical because $\{1,2\}\subseteq \{1,2,4\}$. On the
  other hand, formula $\existsp[p]{\{1,2\}}\big(\existsp[q]{\{1,2,4\}}\A\always (p\ou
  q)\wedge \existsp[q']{\{3\}}\E\F (p\wedge q')\big)$ is not, because
  $\{1,2\}\not\subseteq \{3\}$. Note that neither is it the case
  that $\{3\}\subseteq \{1,2\}$: the observation
  power of quantifiers $\existsp[p]{\{1,2\}}$ and
  $\existsp[q']{\{3\}}$ are incomparable.
  Finally,  formula
  $\forallp[p]{\{1,2,3\}}\existsp[q]{\{1,2\}}.\A\always( p \ou q)$ is not hierarchical
even though $\{1,2\}\subseteq \{1,2,3\}$, as the quantifier that observes best is \emph{higher} in the syntactic tree.
\end{example}

We let \QCTLsih be the set of hierarchical \QCTLsi formulas.

\begin{theorem}
  \label{theo-decidable-QCTLi}
Model checking \QCTLsih is non-elementary decidable.
\end{theorem}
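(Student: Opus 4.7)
The plan is to prove decidability by an automata-theoretic induction on the structure of the hierarchical \QCTLsi formula $\Phi$, compiling each state subformula $\phi$ into an alternating parity tree automaton (APT) $\auto_\phi$ whose input alphabet is $2^{\APfree(\phi)}$ and whose tree domain is $\Dirtreei[\cobs(\phi)]$, where $\cobs(\phi)$ is the coarsest observation of any propositional quantifier on the branch from the root of $\Phi$ to $\phi$ (taking $\cobs(\Phi)=[n]$ at the top). The invariant will be that $\auto_\phi$ accepts a labelled $\Dirtreei[\cobs(\phi)]$-tree $\ltree$ at node $\noeud$ iff $\ltree,\noeud\modelst\phi$ under the appropriate $p$-labellings. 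Model-checking an instance $(\CKS,\Phi)$ then reduces to checking emptiness of the product of $\auto_\Phi$ with the regular $\Dirtreei[{[n]}]$-tree $\unfold{\sstate_\init}$, which is decidable by standard results on APT.

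The inductive cases for $p$, $\neg$, $\ou$ and $\E\psi$ are handled by the usual APT constructions (closure under Boolean operations, and product with an NPW for the \LTL-style path formula $\psi$, treating the nested state subformulas as fresh atoms). The crucial case is $\phi=\existsp[p]{\cobs}\phi'$: by the hierarchical condition (Definition~\ref{def-hierarchical}), every propositional quantifier occurring inside $\phi'$ observes at least $\cobs$, so by induction we obtain an APT $\auto_{\phi'}$ on $\Dirtreei[\cobs]$-trees. We then (i) simulate $\auto_{\phi'}$ by an equivalent nondeterministic parity tree automaton $\auto_{\phi'}^{\ND}$ on $\Dirtreei[\cobs]$-trees (Miyano--Hayashi style), (ii) project out the component of the alphabet corresponding to $p$, giving an NPT $\auto^{\exists}$ that accepts $\ltree$ iff some $p$-labelling of $\ltree$ is accepted by $\auto_{\phi'}^{\ND}$, and (iii) if $\cobs(\phi)\supsetneq\cobs$ is strictly coarser than $\cobs$, widen the input alphabet from $\Dirtreei[\cobs]$-trees to $\Dirtreei[\cobs(\phi)]$-trees (trivially). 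Dually, when $\phi$ is read as a subformula of an outer quantifier whose observation is coarser, we apply the \emph{narrowing} operation of Section~\ref{sec-ATA} to descend from $\Dirtreei[{[n]}]$-trees back to $\Dirtreei[\cobs(\phi)]$-trees; because narrowing only preserves acceptance for nondeterministic automata, this is exactly why the nondeterminization step is indispensable. Since the $p$-labelling is chosen at the $\Dirtreei[\cobs]$ level, it is automatically $\cobs$-uniform in the sense of Definition~\ref{def-uniformity}, matching the semantics of $\existsp[p]{\cobs}$.

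The hierarchical restriction is used in exactly one place, but it is essential there: it guarantees that when we reach a quantifier $\existsp[p]{\cobs}\phi'$, all inner quantifiers have already fixed their $p'$-labellings in a way that is measurable by the current observation $\cobs$, so the narrowing from $\Dirtreei[\cobs']$-trees (with $\cobs'\supseteq \cobs$) up to $\Dirtreei[\cobs]$-trees is well defined. Without the hierarchy, an inner quantifier observing some direction that $\cobs$ does not see would cause the nondeterministic/narrowing combination to collapse distinct choices that must remain separate, and one could no longer justify the correctness step for $\existsp[p]{\cobs}$; indeed Theorem~\ref{theo-undecidable} shows the unrestricted problem is undecidable.

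The main obstacle, and the source of the non-elementary bound, is the alternation-elimination required before each projection-and-narrowing step: each such step costs one exponential, so the overall automaton $\auto_\Phi$ has size bounded by a tower of exponentials whose height is the nesting depth of propositional quantifiers in $\Phi$ (refined to simulation depth in the complexity analysis elsewhere in the paper). Correctness of the construction follows by a routine induction using the semantics of Definition (\QCTLsi semantics) together with the defining property of the narrowing operation; decidability of the final emptiness check completes the proof.
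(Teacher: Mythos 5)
Your overall architecture --- a structural induction that compiles each subformula into an alternating parity tree automaton, a narrow/simulate/project step at each propositional quantifier, the hierarchy condition making the narrowing legal, and one exponential per simulation yielding the non-elementary tower --- is the same as the paper's. But there is a genuine gap in where the model lives. You let $\auto_\phi$ read input alphabet $2^{\APfree(\phi)}$, i.e.\ you make the (narrowed) input tree carry the valuations of \emph{all} free atomic propositions of $\phi$, including those whose truth values come from the labelling of the \CKS. This is incompatible with narrowing: Theorem~\ref{theo-narrow} only relates a $\Dirtreei[\cobs]$-tree to its \emph{widening}, whose labelling is by construction constant on $\cobs$-equivalence classes, whereas the unfolding $\unfold{\sstate_\init}$ is in general not $\cobs$-uniform in the model's propositions and hence is the widening of no $\Dirtreei[\cobs]$-tree at all. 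So for a subformula sitting under a quantifier there is simply no $\Dirtreei[\cobs(\phi)]$-labelled tree on which your invariant ``$\ltree,\noeud\modelst\phi$'' can be instantiated so as to agree with satisfaction in the unfolding. Concretely, for $\Phi=\existsp[p]{\emptyset}\E\F(p\wedge q)$ with $q$ labelling the model, your automaton for $\E\F(p\wedge q)$ would live on the single-branch $\Dirtreei[\emptyset]$-tree, where neither the value of $q$ at a given depth nor the range of the path quantifier $\E$ is well defined. The same defect resurfaces in your $\E\psi$ case: one node of a narrowed tree corresponds to several nodes of the unfolding that may disagree on the maximal state subformulas, so treating them as atoms read off the input is unsound. (A secondary slip: narrowing is defined for alternating automata; it is the \emph{projection} that forces nondeterminisation, and narrowing is what \emph{reintroduces} alternation --- this does not change the tower, but your justification of the simulation step is misattributed.)

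The paper closes exactly this gap with its one non-routine idea: the \CKS is encoded \emph{inside the automaton's state space} (each state of $\bigauto[\sstate]{\phi}$ carries a state of $\CKS$), so free propositions are evaluated and paths are guessed against that internal copy of the model, while the input tree carries \emph{only} the guessed, $\cobs$-uniform labellings of the quantified propositions; correctness is then stated relative to the merge $\liftI[{[n]}]{}{\ltree}\merge\;\unfold{\sstate}$ (Lemma~\ref{lem-final}). As written, your construction is essentially the Coordination Logic / distributed-synthesis one, and it proves the theorem only under the additional hypothesis that every free atom is observable by every quantifier occurring in the formula --- precisely the restriction (non-observable atoms and goals) that the paper's construction is designed to remove.
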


Since our decision procedure for the hierarchical fragment
of \QCTLsi is based on an automata-theoretic approach, we
recall some definitions and results for alternating tree automata.

\subsection{Alternating parity tree automata}
\label{sec-ATA}

  We  recall  alternating parity tree automata. Because  their
  semantics is defined via acceptance games,
  we start with basic definitions for two-player turn-based parity
  games, or simply parity games.

  \halfline
  \head{Parity games}
  A \emph{parity game} is a structure
  $\game=(\setpos,\moves,\pos_\init,\couleur)$, where $\setpos=\setpos_E\uplus\setpos_A$ is a
  set of \emph{positions} partitioned between positions of Eve ($\setpos_E$)
  and those of Adam ($\setpos_A$),
  $\moves\subseteq\setpos\times\setpos$ is a set of \emph{moves},
  $\pos_\init$ is an initial position and $\couleur:\setpos\to \setn$
  is a colouring function of finite codomain. In positions $\setpos_E$, Eve chooses the
  next position, while Adam chooses in positions $\setpos_A$. A play is an infinite sequence of positions
$\pos_0\pos_1\pos_2\ldots$ such that $\pos_0=\pos_\init$ and for all
$i\geq 0$, $(\pos_i,\pos_{i+1})\in\moves$ (written $\pos_i\to
\pos_{i+1}$). We assume that for every $\pos\in\setpos$ there exists
$\pos'\in\setpos$ such that $\pos\to\pos'$. A strategy for Eve is a
partial function $\setpos^*\partialto\setpos$ that maps each finite
prefix of a play ending in a position $\pos\in\setpos_E$ to a next
position $\pos'$ such that $\pos \to\pos'$. A play
$\pos_0\pos_1\pos_2\ldots$ \emph{follows} a strategy $\strat$ of Eve
if for every $i\geq 0$ such that $\pos_i\in\setpos_E$,
$\pos_{i+1}=\strat(\pos_0\ldots\pos_i)$. A strategy $\strat$ is
winning if every play that follows it satisfies the parity condition,
\ie, the least colour seen infinitely often along the play is even.

\halfline  
\head{Parity tree automata}  Because it is sufficient for our needs and simplifies definitions,
  we assume that all input trees are complete trees.
For a set $Z$, $\boolp(Z)$ 
is the set of
formulas built from the elements of $Z$ as atomic propositions using the connectives $\ou$ and
$\et$, 
and with $\top,\perp \in \boolp(Z)$.
An \emph{alternating tree automaton (\ATA) on $(\APf,\Dirtree)$-trees}
is a structure $\auto=(\tQ,\tdelta,\tq_{\init},\couleur)$
where 
$\tQ$ is a finite set of states, $\tq_{\init}\in \tQ$ is an initial
state, $\tdelta : \tQ\times 2^{\APf} \rightarrow \boolp(\Dirtree\times
\tQ)$ is a transition function, and $\couleur:\tQ\to \setn$ is a
colouring function.  To ease reading we shall write atoms in
$\boolp(\Dirtree\times\tQ)$ between brackets, such as
$[x,\tq]$.  
A \emph{nondeterministic tree automaton (\NTA) on
  $(\APf,\Dirtree)$-trees} is an \ATA
$\auto=(\tQ,\tdelta,\tq_{\init},\couleur)$ such that for every $\tq\in
\tQ$ and $a\in 2^{\APf}$, $\tdelta(\tq,a)$ is written in
disjunctive normal form and for every direction $\dir\in \Dirtree$ 
each disjunct contains exactly one element of $\{\dir\}\times Q$.
An \NTA is \emph{deterministic} if for each $\tq\in\tQ$ and $a\in
2^{\APf}$, $\tdelta(\tq,a)$ consists of a single disjunct.

Acceptance of a pointed \labeled tree $(\ltree,\noeud_\init)$, where
$\ltree=(\tree,\lab)$, by an \ATA
$\ATA=(\tQ,\tdelta,\tq_\init,\couleur)$ is defined via the parity game
$\tgame{\ATA}{\ltree}{\noeud_\init}=(\setpos,\moves,\pos_\init,\couleur')$
where $\setpos=\tree\times \tQ \times \boolp (\Dirtree\times \tQ)$,
position $(\noeud,\tq,\pform)$ belongs to Eve if $\pform$ is of
the form $\pform_1\vee \pform_2$ or $[\dir,\tq']$, and to Adam
otherwise,
$\pos_{\init}=(\noeud_\init,\tq_\init,\tdelta(\tq_\init,\noeud_\init))$,
 and  $\couleur'(\noeud,\tq,\pform)=\couleur(\tq)$.
Moves in $\tgame{\tauto}{\ltree}{\noeud_\init}$ are defined by
the following rules:
\[\begin{array}{ll}
 (\noeud,\tq,\pform_1 \;\op\; \pform_2) \move (\noeud,\tq,\pform_i) &
 \mbox{where } 
 \op \in\{\vee,\wedge\} \mbox{ and } i\in\{1,2\},  \\
 \multicolumn{2}{l}{
     (\noeud,\tq,[\dir,\tq']) \move (\noeud\cdot
            \dir,\tq',\tdelta(\tq',\lab(\noeud\cdot\dir)))}
\end{array}\]


Positions of the form $(\noeud,\tq,\top)$ and  $(\noeud,\tq,\perp)$
are sinks, winning for Eve and Adam respectively.

A pointed \labeled tree $(\ltree,\noeud)$ is \emph{accepted}   by
$\tauto$ if Eve has a winning strategy in
$\tgame{\tauto}{\ltree}{\noeud}$, and the \emph{language} of $\ATA$ is the
 set  of pointed \labeled trees accepted by $\ATA$, written
 $\lang(\ATA)$.
We write $\ltree\in\lang(\ATA)$ if $(\ltree,\racine)\in\lang(\ATA)$,
where $\racine$ is the root of $\ltree$.
Finally, the \emph{size} $|\ATA|$ of an \ATA $\ATA$ is its number of states plus
the sum of the sizes of all formulas appearing in the transition function.

\halfline
\head{Word automata}
When the set of directions $X$ is a singleton,  directions can be
forgotten and infinite trees can be identified with  
infinite words.
We thus call \emph{parity word automaton} a parity tree automaton on
$(\APf,\Dirtree)$-trees where $\Dirtree$ is a singleton. In the case of
a nondeterministic parity word automaton,
transitions can be represented as usual as a mapping
    $\Delta:\tQ\times 2^{\APf}\to 2^{\tQ}$ which,
     in a state $\tq\in\tQ$, reading the label $a\in 2^{\APf}$ of
     the current position in the word,
indicates    a set of states $\Delta(\tq,a)$ from which Eve can choose to
    send in the next position of the word.
    


\halfline    
We recall four classic operations on tree automata.

\halfline \head{Complementation} Given an \ATA
$\ATA=(\tQ,\tdelta,\tq_{\init},\couleur)$, we define its
\emph{dual}
$\compl{\ATA}=(\tQ,\compl{\tdelta},\tq_{\init},\compl{\couleur})$
where, for each $\tq\in\tQ$ and $a\in 2^{\APf}$,
$\compl{\tdelta}(\tq,a)$ is the dual of $\tdelta(\tq,a)$, \ie,
conjunctions become disjunctions and vice versa, and
$\couleur(\tq)\egdef\couleur(\tq)+1$.

\begin{theorem}[Complementation~\cite{DBLP:journals/tcs/MullerS95}]
  \label{lab-complement}
  For every \labeled tree $\ltree$ and node $\noeud$ in $\ltree$,
  \[(\ltree,\noeud)\in\lang(\compl{\ATA}) \mbox{ if, and only if, }(\ltree,\noeud)\notin\lang(\ATA).\]
\end{theorem}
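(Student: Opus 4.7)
The plan is to exploit the fact that acceptance of $\ATA$ is defined via a parity game, and that parity games are determined.

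First I would unpack the construction of $\compl{\ATA}$ and compare the two acceptance games $\tgame{\ATA}{\ltree}{\noeud}$ and $\tgame{\compl{\ATA}}{\ltree}{\noeud}$. Since the transition formulas in $\compl{\tdelta}$ are the De Morgan duals of those in $\tdelta$, every position that is owned by Eve in $\tgame{\ATA}{\ltree}{\noeud}$ is owned by Adam in $\tgame{\compl{\ATA}}{\ltree}{\noeud}$ and vice versa. The moves, the initial position, and the underlying arena (tree $\times$ states $\times$ subformulas) are otherwise identical, once one observes that a subformula of $\compl{\tdelta}(\tq,\lab(\noeud))$ is canonically the De Morgan dual of a subformula of $\tdelta(\tq,\lab(\noeud))$. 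In particular, a strategy for Eve in $\tgame{\compl{\ATA}}{\ltree}{\noeud}$ is, up to this renaming of positions, exactly a strategy for Adam in $\tgame{\ATA}{\ltree}{\noeud}$.

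Next I would observe that because $\compl{\couleur}(\tq) = \couleur(\tq)+1$, the parity condition is inverted: a play satisfies the parity condition for $\compl{\couleur}$ if and only if it fails the parity condition for $\couleur$. Combining this with the ownership swap, Eve wins $\tgame{\compl{\ATA}}{\ltree}{\noeud}$ if and only if Adam wins $\tgame{\ATA}{\ltree}{\noeud}$.

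Finally, I invoke positional determinacy of parity games (Mostowski, Emerson--Jutla), which applies here since $\couleur$ has finite codomain: exactly one of Eve and Adam has a winning strategy in $\tgame{\ATA}{\ltree}{\noeud}$. Hence Adam wins $\tgame{\ATA}{\ltree}{\noeud}$ if and only if Eve does not win $\tgame{\ATA}{\ltree}{\noeud}$, that is, $(\ltree,\noeud) \notin \lang(\ATA)$. Chaining the two equivalences yields the statement. The only delicate point in the write-up is justifying the bijection between plays/strategies in the two games carefully enough that the ownership swap combined with the colour shift gives exactly ``role swap'' — but this is routine once the two games are laid side by side.
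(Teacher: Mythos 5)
Your argument is correct and is exactly the standard determinacy-based proof of this classical result; the paper itself gives no proof, merely citing Muller and Schupp, whose argument proceeds in the same way (a role swap between Eve and Adam in the dual acceptance game, a flipped parity condition, and determinacy of parity games). The only point to watch in a full write-up is that atoms $[\dir,\tq']$ and the sinks $\top,\perp$ do not literally swap ownership under your bijection --- the former are Eve-owned in both games but have a unique successor, and the latter swap winner because dualisation exchanges $\top$ and $\perp$ --- but neither affects the role-swap correspondence.
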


\halfline
\head{Projection}  The second construction is
a projection operation, 
used by
Rabin to deal with second-order monadic quantification:
\begin{theorem}[Projection \cite{rabin1969decidability}]
  \label{theo-projection}
  Given an \NTA $\NTA$ on $(\APf,\Dirtree)$-trees and an atomic
  proposition $p\in\APf$, one can build in linear time an \NTA
  $\proj{\NTA}$ on $(\APf\setminus \{p\},\Dirtree)$-trees
  such that
  \[(\ltree,\noeud)\in\lang(\proj{\NTA})\mbox{\bigiff} \mbox{ there exists a
      $p$-\labeling $\plab$ for $\ltree$ s.t. }(\ltree\prodlab\plab,\noeud)\in\lang(\NTA).\]
\end{theorem}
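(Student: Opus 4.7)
The plan is to construct $\proj{\NTA}$ from $\NTA=(\tQ,\tdelta,\tq_{\init},\couleur)$ by keeping the state set, the initial state, and the colouring unchanged, and defining, for each $\tq \in \tQ$ and $a \in 2^{\APf \setminus \{p\}}$,
\[
\proj{\tdelta}(\tq, a) \egdef \tdelta(\tq, a) \vee \tdelta(\tq, a \cup \{p\}).
\]
Since $\NTA$'s transition function is already in disjunctive normal form with each disjunct containing exactly one element of $\{\dir\}\times\tQ$ per direction $\dir \in \Dirtree$, the same holds of $\proj{\tdelta}$, so $\proj{\NTA}$ is indeed a nondeterministic tree automaton. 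The construction is clearly linear in $|\NTA|$. Intuitively, when $\proj{\NTA}$ reads a node $\noeud$ with label $\lab(\noeud) \in 2^{\APf\setminus\{p\}}$, it nondeterministically guesses whether $p$ holds at $\noeud$ and then behaves as $\NTA$ would on the guessed enriched label.

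For the forward direction, suppose $\plab$ is a $p$-labeling of $\ltree$ such that $(\ltree\prodlab\plab,\noeud)$ is accepted by $\NTA$, witnessed by a winning strategy $\strat$ for Eve in $\tgame{\NTA}{\ltree\prodlab\plab}{\noeud}$. I would define a strategy $\strat'$ for Eve in $\tgame{\proj{\NTA}}{\ltree}{\noeud}$ that, at any position where she has to resolve the top-level disjunction of $\proj{\tdelta}(\tq,\lab(\noeud'))$, picks the disjunct $\tdelta(\tq,\lab(\noeud')\cup\{p\})$ if $\plab(\noeud')=1$ and $\tdelta(\tq,\lab(\noeud'))$ otherwise, and then replays the choices dictated by $\strat$. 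The subformulas on which Eve then plays are identical to those arising in the original game, and the colouring is unchanged, so $\strat'$ is winning.

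The converse is the more delicate direction, and it is exactly where nondeterminism is essential. Given a winning strategy $\strat'$ for Eve in $\tgame{\proj{\NTA}}{\ltree}{\noeud}$, one must extract a well-defined $p$-labeling from her choices. Because $\NTA$ is nondeterministic, each disjunct of a transition sends a single state into each direction, so along any play of the acceptance game a given tree node $\noeud'$ is reached in exactly one state $\tq_{\noeud'}$; Eve's first choice at $\noeud'$, namely which of the two disjuncts of $\proj{\tdelta}(\tq_{\noeud'},\lab(\noeud'))$ to take, is therefore unambiguous. I would set $\plab(\noeud')=1$ precisely when Eve picks the second disjunct, and arbitrarily on nodes that are unreachable under $\strat'$; then $\strat'$ transports naturally to a winning strategy in $\tgame{\NTA}{\ltree\prodlab\plab}{\noeud}$, witnessing that $(\ltree\prodlab\plab,\noeud)\in\lang(\NTA)$. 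The main obstacle addressed here is exactly this consistency of guesses: for alternating automata, several copies of the run may reach the same node in different states and resolve the disjunction differently, producing incompatible assertions about whether $p$ holds — which is precisely why the construction is stated for \NTA rather than \ATA.
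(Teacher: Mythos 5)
Your construction is exactly the one the paper describes (the paper only sketches it, citing Rabin): take $\proj{\tdelta}(\tq,a)=\tdelta(\tq,a)\vee\tdelta(\tq,a\cup\{p\})$ with states, initial state and colouring unchanged, and observe that nondeterminism is what makes the guessed $p$-labelling consistent across the run. Your fleshed-out correctness argument, including the key point that in a nondeterministic automaton each node is reached in a unique state under a fixed Eve strategy so her disjunct choice unambiguously defines $\plab$, is sound and matches the paper's intended reasoning.
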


Intuitively,  ${\proj{\NTA}}$ is automaton $\NTA$ 
with the only difference that when it reads the label of a node, it
can choose to run as if $p$ was either true or false: if $\tdelta$ is the transition
function of $\NTA$, that of ${\proj{\NTA}}$ is
$\tdelta'(q,a)=\tdelta(q,a\union \{p\}) \ou
\tdelta(q,a\setminus\{p\})$, for any state $q$ and label $a\in 2^{\APf}$. Another way of seeing it is that
$\proj{\NTA}$ guesses a $p$-labelling for the input tree, and
 simulates $\NTA$ on this modified input.

 \halfline
\head{Simulation} 
To prevent $\proj{\NTA}$ from guessing different labels for a same
node in different executions, it is crucial that $\NTA$ be nondeterministic, 
which is the reason why we need the following result: 

\begin{theorem}[Simulation \cite{DBLP:journals/tcs/MullerS95}]
\label{theo-simulation}
Given an \ATA $\ATA$, one can build in exponential time an \NTA $\NTA$ 
 such that $\lang(\NTA)=\lang(\ATA)$.
\end{theorem}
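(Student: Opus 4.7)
The plan is to encode in the states of $\NTA$ the data of a winning Eve-strategy in the acceptance game of $\ATA$, and to recover $\ATA$'s acceptance condition as a parity condition on $\NTA$. By positional determinacy of parity games, $(\ltree,\racine)\in\lang(\ATA)$ iff Eve has a positional winning strategy $\sigma$ in $\tgame{\ATA}{\ltree}{\racine}$; such a $\sigma$ selects, at each reachable position $(\noeud,\tq,\tdelta(\tq,\lab(\noeud)))$, a model of the positive Boolean formula $\tdelta(\tq,\lab(\noeud))$, i.e., a set of atoms $[\dir,\tq']$ sufficient to make it true. Collecting these choices along the input tree yields, at each node $\noeud$, a set $S(\noeud)\subseteq\tQ$ of active states together with, for each $\tq\in S(\noeud)$, a satisfying assignment of its transition formula.

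First I would use this annotation to remove alternation. Let the base states of $\NTA$ be subsets $S\subseteq\tQ$; upon reading the label at a node whose current state is $S$, $\NTA$ nondeterministically guesses a satisfying assignment of $\tdelta(\tq,\lab(\noeud))$ for each $\tq\in S$, and the union over $\tq\in S$ of the atoms $[\dir,\tq']$ chosen defines the subset $S_\dir\subseteq\tQ$ sent to each $\dir$-child. This already simulates the alternation faithfully, but the induced acceptance condition must still express that \emph{every} infinite trace of $\ATA$-states along \emph{every} branch of $\ltree$ satisfies the parity condition given by $\couleur$. This global trace requirement cannot be phrased as a parity condition on the bare subsets $S$, so extra memory is needed.

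The hard part is precisely to refine the state space so that this global trace requirement becomes a parity condition with only an exponential blow-up; this is the core of the Muller--Schupp construction. Each state of $\NTA$ would be enriched with a Safra-like tree whose nodes carry subsets of $\tQ$ grouped by their current minimal priority under $\couleur$, encoding how long various families of $\ATA$-traces have been waiting to see a dominating even colour. The transitions of $\NTA$ update these Safra trees by creating, splitting, merging and killing macro-nodes as the guessed $\ATA$-traces evolve, and births, deaths and ``greenings'' of macro-nodes are then translated into a parity colour on $\NTA$-states so that a branch is accepting in $\NTA$ iff every $\ATA$-trace along it satisfies $\couleur$. The resulting state space has size $2^{O(|\ATA|\log|\ATA|)}$ with polynomially many priorities, which gives the exponential bound on construction time. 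Correctness then reduces to the equivalence, branchwise, between positional winning strategies for Eve in $\tgame{\ATA}{\ltree}{\noeud}$ and accepting runs of $\NTA$ on $(\ltree,\noeud)$.
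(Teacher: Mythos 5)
The paper does not actually prove this statement: Theorem~\ref{theo-simulation} is imported wholesale from Muller and Schupp~\cite{DBLP:journals/tcs/MullerS95}, and the only quantitative trace of the construction in the paper is the bound recalled in Appendix~\ref{sec-appendix-a} ($2^{O(nl\log(nl))}$ states and $O(nl)$ colours for an \ATA with $n$ states and $l$ colours). So there is no in-paper proof to compare against; what I can assess is whether your sketch is a faithful outline of the cited construction.

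It is. The three ingredients you identify are exactly the standard ones: (i) positional determinacy of parity games, so that acceptance of $(\ltree,\racine)$ by $\ATA$ is witnessed by a positional Eve-strategy that picks, at each reachable position, a model of the positive Boolean transition formula; (ii) the strategy/subset construction, where the \NTA carries the set of active $\ATA$-states at each node and nondeterministically guesses the satisfying assignments, reducing acceptance to the requirement that every infinite trace of $\ATA$-states along every branch of the resulting run DAG satisfies the colouring $\couleur$; and (iii) a Safra-style memory (history trees with births, deaths and "greenings" converted into priorities) that turns this universal trace condition into a genuine parity condition on the \NTA, with the stated $2^{O(|\ATA|\log|\ATA|)}$ blow-up. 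You are also right that step (ii) alone does not suffice and that the entire difficulty lives in step (iii). That said, your proposal describes the Safra-tree update rules and the priority assignment rather than constructing and verifying them; that combinatorial core, together with the proof that a branch is accepting iff all traces along it satisfy $\couleur$, is precisely what the citation to~\cite{DBLP:journals/tcs/MullerS95} is carrying, so as written your text is an accurate roadmap rather than a self-contained proof. One small point worth making explicit if you were to flesh this out: the "all traces are good" check is a universality condition on infinite words, and the reason a Safra-like determinization is unavoidable (rather than a simple nondeterministic guess of a bad trace) is that the tree automaton must certify the absence of bad traces positively along each branch.
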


The last construction was introduced by Kupferman and Vardi to deal with
imperfect information aspects in distributed synthesis. To describe it we need
to define a widening operation on trees which expands the directions
in a tree.

\halfline
\head{Tree widening} 
We generalise the widening operation defined
in~\cite{kupferman1999church}. In the following definitions we fix a  \CKS
$\CKS=(\setstates,\relation,\sstate_\init,\lab)$, and for $I\subseteq [n]$
we let $\SI\egdef \{\projI{\sstate}\, \mid \sstate\in\setstates\}\subseteq \Dirtreei$
 (recall that 
  $\Dirtreei=\prod_{i\in I}\setlstates_{i}$).
Let $J\subseteq I \subseteq [n]$. For every $\SI[J]$-tree $\tree$ rooted
in $\sstate_J$ and  $\sstate_I\in\SI$ such that
$\projI[J]{\sstate_I}=\sstate_J$, we define the \emph{$I$-widening} of
$\tree$ as the $\SI$-tree
\[\liftI{\sstate_I}{\tree}\egdef\{\noeud \in \sstate_I\cdot \SI^* \mid \projI[J]{\noeud}\in\tree\}.\]

For an
 $(\APf,\SI[J])$-tree
$\ltree=(\tree,\lab)$ rooted
in $\sstate_J$ and  $\sstate_I\in\SI$ such that
$\projI[J]{\sstate_I}=\sstate_J$, we let \[\liftI{\sstate_I}{\ltree}\egdef (\liftI{\sstate_I}{\tree},\lab'),
\mbox{ where }\lab'(\noeud)\egdef \lab(\projI[J]{\noeud}).\]

When clear from the context we may omit the subscript $\sstate_I$.
It is the case in particular when referring to \emph{pointed} widenings of trees:
   $(\liftI{}{\ltree},\noeud)$ stands for    $(\liftI{\noeud_0}{\ltree},\noeud)$.

\halfline
\head{Narrowing} We now state a result from~\cite{kupferman1999church} in our slightly
more general setting (the proof can be adapted straightforwardly).
The rough idea of this narrowing operation on ATA is
that,  if one just observes
 $\SI[J]$,  uniform  $p$-labellings on 
$\SI$-trees  can be obtained by choosing the
labellings directly on $\SI[J]$-trees, and then lifting them to $\SI$.

\begin{theorem}[Narrowing \cite{kupferman1999church}]
  \label{theo-narrow}
  Given an \ATA $\ATA$ on $\SI$-trees one can build in linear time an
  \ATA ${\narrow[J]{\ATA}}$ on $\SI[J]$-trees such that for every
  pointed $(\APf,\SI[J])$-tree $(\ltree,\noeud)$ and every
  $\noeud'\in\SI^+$ such that
  $\projI[J]{\noeud'}=\noeud$, 
  \[(\ltree,\noeud)\in\lang(\narrow[J]{\ATA}) \mbox{ iff }(\liftI{}{\ltree},\noeud')\in\lang(\ATA).\]
\end{theorem}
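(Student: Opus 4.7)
The plan is to construct $\narrow[J]{\ATA}$ by a purely syntactic rewriting of $\ATA$'s transitions and then to prove correctness by a direct simulation between the two acceptance games. Concretely, if $\ATA=(\tQ,\tdelta,\tq_{\init},\couleur)$, I would define $\narrow[J]{\ATA}\egdef(\tQ,\tdelta',\tq_{\init},\couleur)$, where $\tdelta'(\tq,a)\in\boolp(\SI[J]\times \tQ)$ is obtained from $\tdelta(\tq,a)\in\boolp(\SI\times \tQ)$ by substituting every atom $[\dir,\tq']$ by $[\projI[J]{\dir},\tq']$. This substitution takes linear time and preserves the size of each transition formula, so $|\narrow[J]{\ATA}|=O(|\ATA|)$.

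For correctness I would exhibit a tight correspondence between plays of the acceptance games $\tgame{\narrow[J]{\ATA}}{\ltree}{\noeud}$ and $\tgame{\ATA}{\liftI{}{\ltree}}{\noeud'}$. I relate a narrowed position $(\mu,\tq,\theta')$ to a widening position $(\mu',\tq,\theta)$ whenever $\mu'$ extends $\noeud'$ with $\projI[J]{\mu'}=\mu$, and $\theta$ maps to $\theta'$ under the direction substitution. By definition of the widening, $\lab(\mu')=\lab(\mu)$, so related positions carry matching transition formulas up to projection, identical colours, and encounter $\top/\perp$ sinks simultaneously. Boolean moves correspond in the obvious way, and each atomic move $[\dir,\tq']$ in the widening — leading to $(\mu'\cdot\dir,\tq',\tdelta(\tq',\lab(\mu'\cdot\dir)))$ — is matched by the move from the narrowed atom $[\projI[J]{\dir},\tq']$. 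A winning Eve strategy in the widening game then induces one in the narrowed game by maintaining along each play a canonical lift into the widening (each narrowed atom is lifted back to the concrete widening atom that produced it under the substitution) and consulting the original strategy on the lifted play; conversely, a winning strategy on the narrowed game yields a winning strategy on the widening by projecting the current play and replaying the narrowed choices, which is well-defined because the narrowed transition at a node depends only on the label of its projection. Parity transfers because related positions share colours.

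The delicate point I expect to be the main obstacle is that the direction substitution is not injective: distinct widening atoms $[\dir_I^1,\tq']$ and $[\dir_I^2,\tq']$ with $\projI[J]{\dir_I^1}=\projI[J]{\dir_I^2}$ are conflated, so a single Adam move in the narrowed game can collapse several independent widening branches, while Eve's disjunctive choice in the narrowed game under-determines the lift. The resolution is that all such collapsed atoms, viewed in the widening, lead to positions with identical state, colour and residual transition formula — because every widening label descends from the projection — hence a single move of the lifted strategy wins uniformly against each branch, and the ambiguity of the lift causes no loss. This uniformity is precisely the structural property the widening provides, and it is the core reason why the narrowing construction is correct.
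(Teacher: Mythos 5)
Your proposal is correct and is essentially the standard argument: the paper itself does not prove this theorem but imports it from Kupferman and Vardi~\cite{kupferman1999church} (remarking only that the proof ``can be adapted straightforwardly'' to the slightly more general setting of pointed $\SI[J]$-trees), and your occurrence-wise direction substitution together with the game simulation is exactly that proof. You also correctly isolate the one genuinely delicate point --- the non-injectivity of $\dir\mapsto\projI[J]{\dir}$ --- and resolve it the right way, namely that all lifts of a narrowed node carry the same label in the widening, so the collapsed branches are interchangeable for acceptance.
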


\subsection{Translating \QCTLsih to \ATA}
In order to prove Theorem~\ref{theo-decidable-QCTLi} we need some more
notations and a technical lemma that contains the automata construction.

\begin{definition}
  \label{def-Iphi}
For every $\phi\in\QCTLsi$, we let \[\Iphi\egdef
\biginter_{\cobs\in\setobs}\cobs \subseteq [n],\] where $\setobs$ is the set of
concrete observations
  that occur in $\phi$, with the intersection over the empty set 
  defined as $[n]$. For a \CKS $\CKS$ with   state set
  $\setstates\subseteq \prod_{i\in [n]}\setlstates_i$ we also let
  $\SI[\phi]\egdef \{\projI[{\Iphi}]{\sstate}\mid \sstate\in\setstates\}$. 
\end{definition}
 
Elements of $\SI[\phi]$ will be the possible directions used by the automaton we
  build for $\phi$. In other words, the automaton for $\phi$ will
  work on $\SI[\phi]$-trees.  The intuition is that the observations
  in $\phi$ determine which components of the model's states can be
  observed by the automaton.

Our construction, that transforms a \QCTLsih formula $\phi$ and a \CKS
$\CKS$ into an ATA,
builds upon the classic construction 
  from~\cite{DBLP:journals/jacm/KupfermanVW00}, which builds 
  ATA for \CTLs formulas.  In addition, we use projection of automata
  to treat second-order quantification, and to deal with imperfect
  information we resort to automata narrowing.

  Moreover, we use
tree  automata in an original way that allows us to deal with
  non-observable atomic propositions, which in turn makes it possible
  to consider non-observable winning conditions in our decidable
  fragment of \SLi.
  The classical approach to model checking via tree automata is to
  build an automaton that accepts all tree models of the input
  formula,  and check whether it accepts the unfolding of the
  model~\cite{DBLP:journals/jacm/KupfermanVW00}.
  We instead encode the model in the automata, using the input tree
  only to guess labellings for quantified propositions.
  
\halfline
  \head{Encoding the model in the automaton} Quantification on atomic
  propositions is classically performed by means of automata
  projection (see Theorem~\ref{theo-projection}). But in order to
  obtain a labelling that is uniform with regards to the observation
  of the quantifier, we need to make use of the narrowing operation (see
  Theorem~\ref{theo-narrow}).  Intuitively, to check that a formula
  $\existsp{\cobs} \phi$ holds in a tree $\ltree$, we would like to
  work on its narrowing $\ltree'\egdef\projI[\cobs]{\ltree}$, guess a
  labelling for $p$ on this tree thanks to automata projection,
thus  obtaining a tree $\ltree'_{p}$, take its widening
  $\ltree_{p}''\egdef\liftI[{[n]}]{}{\ltree'_{p}}$, obtaining a
  tree with an $\cobs$-uniform labelling for $p$,  and then check
  that $\phi$ holds on $\ltree_{p}''$.
  The problem  is that unless $\ltree=(\tree,\lab)$ is
$\cobs$-uniform in every atomic proposition in $\APf$, there is no
 way
to define the labelling of $\projI[\cobs]{\tree}$ without losing
information.
This implies that, unless we restrict to models where all
atomic propositions are observable for all observations $\cobs$,
 we cannot pass the model as input to our automata, which will work on
 narrowings of trees.  

Therefore, to model check a
\QCTLsi formula $\phi$ on a \CKS $\CKS$, each state of the
automaton that we build for $\phi$ will contain a state of $\CKS$.  
The automaton can thus guess paths in $\CKS$, and evaluate
free occurrences of atomic propositions in $\CKS$ without reading the
input tree. The input tree 
no longer represents the model, but we use it to carry
labellings for quantified atomic propositions in $\APq(\phi)$: we provide the
automaton with an input tree whose labelling is initially empty, and
the automaton, through successive narrowing and projection operations,
decorates it with uniform labellings for quantified atomic propositions.

We remark that this technique allows one to go beyond Coordination Logic~\cite{DBLP:conf/csl/FinkbeinerS10}: 
by separating between quantified atomic propositions (that need to be
uniform and are carried by the input tree) and free atomic
propositions (that state facts about the model and are coded in the automaton), we manage to remove the
restriction present in \CL, that requires  all facts about
the model to be known to every strategy (see
Proposition~\ref{prop:CLtoSLi} in Section~\ref{subsec:SLi-comparison-CL}).
To do this we  assume without loss of generality that propositions
that are quantified in $\phi$ do not appear free in $\phi$, \ie,
$\APq(\phi)\inter\APfree(\phi)=\emptyset$. 

Finally, given a formula $\phi$, a \CKS $\CKS$ and a state $\sstate\in\CKS$, the
truth value of $\phi$ in $(\CKS,\sstate)$ does not depend on the
labelling of $\CKS$ for atoms in $\APq(\phi)$, which can
thus be forgotten.
Thus, from now on we will assume that an instance $(\CKS,\Phi)$ of the
model-checking problem for \QCTLsi is such that
$\APq(\Phi)\inter\APfree(\Phi)=\emptyset$ and $\CKS$ is a \CKS over $\APfree(\Phi)$.


\halfline
\head{Merging the  decorated input tree and the model}
To state the correctness of our construction, we will need to merge
the labels for quantified propositions, carried by the input tree,
with those for free propositions, carried by \CKS $\CKS$. Because,
through successive widenings, the input tree (represented by $\ltree$
in the definition below) will necessarily be a
complete tree, its domain will always contain the domain of the
unfolding of $\CKS$ (represented by $\ltree'$ below), hence the
following definition.

\begin{definition}[Merge]
  \label{def-merge}
Let
 $\ltree=(\tree,\lab)$ be a complete
$(\APf,\Dirtree)$-tree and  $\ltree'=(\tree',\lab')$ an
$(\APf\,',\Dirtree)$-tree with same root as $\ltree$, where $\APf\inter\APf\,'=\emptyset$. We
 define the \emph{merge} of $\ltree$ and $\ltree'$
 as the $(\APf\union\APf\,',\Dirtree)$-tree \[\ltree\merge\ltree'\egdef
(\tree\cap\tree'=\tree',\lab''),\] where
$\lab''(\noeud)=\lab(\noeud) \union \lab'(\noeud)$.
\end{definition}

We now describe our automata construction.
Let $(\CKS,\Phi)$ be an instance of the model-checking problem
for \QCTLsih, where $\CKS=(\setstates,\relation,\labS,\sstate_\init)$.

\begin{lemma}[Translation]
    \label{lem-final}
    For every subformula
    $\phi$ of $\Phi$ and state $\sstate$ of $\CKS$, one can build an
    \ATA $\bigauto[\sstate]{\phi}$ on $(\APq(\Phi),\SI[\phi])$-trees
    such that for every 
    $(\APq(\Phi),\SI[\phi])$-tree $\ltree$ rooted in
    $\projI[{\Iphi}]{\sstate_\init}$, every $\noeud\in\unfold{}$
    ending in $\sstate$, it holds that
    \begin{equation*}
      (\ltree,\projI[{\Iphi}]{\noeud})\in\lang(\bigauto[\sstate]{\phi}) \mbox{\;\;\;iff\;\;\;}
      \liftI[{[n]}]{}{\ltree}\merge\;\unfold{\sstate},\noeud \modelst
      \phi. 
    \end{equation*}

  \end{lemma}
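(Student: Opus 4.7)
The plan is to proceed by induction on the structure of the subformula $\phi$, adapting the classical translation of \CTLs to alternating tree automata from~\cite{DBLP:journals/jacm/KupfermanVW00} and using the narrowing, simulation, projection and complementation operations from Section~\ref{sec-ATA} to handle the propositional quantifiers. As anticipated in the preceding paragraphs, the key design choice is that the automaton encodes a copy of $\CKS$ in its state space: each state of $\bigauto[\sstate]{\phi}$ carries a state of $\CKS$, so that successors and the truth of free propositions are read from $\CKS$ rather than from the input tree, while the input tree only supplies labellings for propositions in $\APq(\Phi)$.

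For the base cases, an atom $p\in\APfree(\Phi)$ is handled by an automaton that accepts or rejects depending only on whether $p\in\labS(\sstate)$, ignoring the input tree, while an atom $p\in\APq(\Phi)$ is handled by an automaton that accepts iff $p$ belongs to the label of the current input-tree node. Disjunction is handled by the standard union construction, and negation by dualisation (Theorem~\ref{lab-complement}). For a state formula of the form $\E\psi$ one follows the classical construction: translate the \LTL skeleton of $\psi$, with its maximal state subformulas viewed as atoms, into an alternating word automaton, and compose it with $\CKS$ so that the resulting \ATA guesses an infinite path $\sstate=\sstate_{0}\sstate_{1}\ldots$ through $\relation$, simulates the word automaton along it, and at each position $i$ delegates any maximal state-subformula atom $\phi'$ to the automaton $\bigauto[\sstate_{i}]{\phi'}$ produced by the induction hypothesis.

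The only genuinely new step is the quantifier case $\phi=\existsp{\cobs}\phi'$. By induction we have $\bigauto[\sstate]{\phi'}$ on $(\APq(\Phi),\SI[\phi'])$-trees, and the hierarchy assumption on $\Phi$ is used exactly here: every observation occurring in $\phi'$ is a superset of $\cobs$, hence $\cobs\subseteq\Iphi[\phi']$, and moreover $\Iphi[\phi]=\cobs$. We then (i) apply narrowing (Theorem~\ref{theo-narrow}) to obtain an \ATA $\narrow[\cobs]{\bigauto[\sstate]{\phi'}}$ on $\SI[\cobs]$-trees, (ii) apply simulation (Theorem~\ref{theo-simulation}) to obtain an equivalent \NTA, and (iii) apply projection on $p$ (Theorem~\ref{theo-projection}) to obtain $\bigauto[\sstate]{\phi}$, which operates on $(\APq(\Phi),\SI[\phi])$-trees since $\SI[\phi]=\SI[\cobs]$. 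Correctness rests on the bijective correspondence between $\cobs$-uniform $p$-labellings on $\SI[{[n]}]$-trees and arbitrary $p$-labellings on their $\cobs$-narrowings: choosing an $\cobs$-uniform witness for $\existsp{\cobs}$ on $\liftI[{[n]}]{}{\ltree}\merge\unfold{\sstate}$ is the same as choosing a $p$-labelling $\plab$ on the narrowed input tree $\ltree$, which is precisely what narrowing followed by projection captures.

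The main obstacle is the bookkeeping in this last step. One has to verify that widening, narrowing, $p$-labelling and merging commute in the expected way: specifically, for any $p$-labelling $\plab$ on $\ltree$, the structure $\liftI[{[n]}]{}{\ltree\prodlab\plab}\merge\unfold{\sstate}$ must coincide with $\liftI[{[n]}]{}{\ltree}\merge\unfold{\sstate}$ decorated with the $\cobs$-uniform $p$-labelling induced by $\plab$; and Theorem~\ref{theo-narrow} must correctly relate acceptance of $\ltree\prodlab\plab$ by $\narrow[\cobs]{\bigauto[\sstate]{\phi'}}$ to acceptance of its widening by $\bigauto[\sstate]{\phi'}$, so that the induction hypothesis for $\phi'$ applies on the appropriate lifted input. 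Once these identifications are spelled out, the four automata-theoretic theorems combine to discharge the quantifier case, and the rest of the induction is routine, with the hierarchy assumption being invoked only to license each narrowing step.
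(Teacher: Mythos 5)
Your proposal follows essentially the same route as the paper's proof: encode $\CKS$ in the automaton's state space, evaluate free atoms from the model and quantified atoms from the input tree, use the classical construction of~\cite{DBLP:journals/jacm/KupfermanVW00} composed with $\CKS$ for $\E\psi$, and discharge $\existsp[p]{\cobs}\phi'$ by narrowing, simulation and projection, with hierarchy guaranteeing $\cobs\subseteq\Iphi[\phi']$ and hence $\Iphi=\cobs$. The commutation facts you isolate for the quantifier case (widening, merging and $p$-labellings) are exactly the content of equations \eqref{eq:5}--\eqref{eq:9} in the paper.

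There is, however, one concrete point you gloss over that would make the construction ill-typed as literally written. In the cases $\phi=\phi_1\ou\phi_2$ and $\phi=\E\psi$, the subautomata $\bigauto[\sstate']{\phi_i}$ given by the induction hypothesis run on $\SI[\phi_i]$-trees, and the sets $\Iphi[\phi_i]$ are in general pairwise incomparable and strictly larger than $\Iphi=\bigcap_i\Iphi[\phi_i]$. A ``standard union construction'', or ``delegating'' a guessed maximal state subformula to $\bigauto[\sstate']{\phi_i}$, therefore combines automata over \emph{different} direction alphabets; each $\bigauto[\sstate']{\phi_i}$ must first be narrowed to $\SI[\phi]$ via Theorem~\ref{theo-narrow} before being plugged into the transition function. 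The paper does exactly this, and its footnote in the $\E\psi$ case records that omitting precisely this narrowing was an error in the conference version, so it is not a cosmetic detail. Relatedly, in the $\E\psi$ case the automaton guesses a letter $a'\in 2^{\max(\psi)}$ and must verify the guess in both directions, launching $\bigauto[\sstate']{\phi_i}$ when $\phi_i\in a'$ and the dual automaton $\compl{\bigauto[\sstate']{\phi_i}}$ when $\phi_i\notin a'$; ``delegates any maximal state-subformula atom'' leaves the negative checks implicit. Neither point changes the architecture of your argument, but both are needed for the induction to go through.
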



\begin{proof} 
  Let $\APq=\APq(\Phi)$ and
  $\APfree=\APfree(\Phi)$, and recall that $\CKS$ is labelled over $\APfree$.
  For each state $\sstate\in\setstates$ and each subformula
  $\phi$ of $\Phi$ (note that all subformulas of $\Phi$ are also
  hierarchical), we define by induction on $\phi$ the \ATA
  $\bigauto{\phi}$ on $(\APq,\SI[\phi])$-trees.

\halfline
\noindent$\bm{\phi=p:}$
First, by Definition~\ref{def-Iphi},
 $\SI[\phi]=\SI[{[n]}]=\setstates$. We let $\bigauto{p}$ be the \ATA over $\setstates$-trees with one unique
  state $\tq_\init$, with transition function defined as follows:
  \[\tdelta(\tq_\init,a)=
  \begin{cases}
    \top  & \mbox{if } 
    \begin{array}{c}
p\in\APfree \mbox{ and }p\in\labS(\sstate)  \\
\mbox{ or   }\\ p\in \APq \mbox{ and }p\in a
    \end{array}
\\
    \perp & \mbox{if }  
    \begin{array}{c}
p\in\APfree \mbox{ and }p\notin\labS(\sstate) \\
 \mbox{ or  } \\
p\in \APq \mbox{ and }p\notin a      
    \end{array}
  \end{cases}
\]

\noindent$\bm{\phi=\neg\phi':}$
We let  $\bigauto{\phi}\egdef\compl{\bigauto{\phi'}}$. 

\halfline
\noindent$\bm{\phi=\phi_{1}\ou\phi_{2}:}$
 Because
  $\Iphi=\Iphi[\phi_{1}]\cap\Iphi[\phi_{2}]$, and each
  $\bigauto{\phi_{i}}$ for $i\in\{1,2\}$ works on $\Dirtreei[\phi_{i}]$-trees, we first 
  narrow them so that they work on $\Dirtreei[\phi]$-trees: 
  for $i\in \{1,2\}$, we let $\ATA_{i}\egdef
  {\narrow[\Iphi]{\bigauto{\phi_{i}}}}=(\tQ^{i},\tdelta^{i},\tq_\init^{i},\couleur^{i})$.
  %
  Letting
  $\tq_\init$ be a fresh initial state we define
  $\bigauto{\phi}\egdef(\{\tq_\init\}\union\tQ^1\union\tQ^2,\tdelta,\tq_\init,\couleur)$, where
  $\tdelta$ and $\couleur$ agree with $\tdelta^i$ and $\couleur^i$,
  respectively, on states from $\tQ^i$,  and
  $\tdelta(\tq_\init,a)=\tdelta^1(\tq_\init^1,a)\ou
  \tdelta^2(\tq_\init^2,a)$. The colour of $\tq_\init$ does not matter.
  
  
\halfline
\noindent$\bm{\phi=\E\psi:}$
Let $\max(\psi)=\{\phi_1,\ldots,\phi_k\}$ be the
  set of maximal state subformulas of $\psi$.
  In a first step we
  see these maximal state subformulas as atomic propositions, 
 we see  $\psi$
 as an \LTL formula over $\max(\psi)$, and we  build 
  a nondeterministic
  parity word automaton
  $\autopsi=(\Qpsi,\Deltapsi,\qpsi_\init,\couleurpsi)$ over alphabet $2^{\max(\psi)}$
  that accepts exactly the models of $\psi$ (and uses two colours)~\cite{vardi1994reasoning}.  We define
  the \ATA $\tauto$ that, given as input a
  $(\max(\psi),\SI[\phi])$-tree $\ltree$, nondeterministically
  guesses a path $\tpath$ in
  $\liftI[{[n]}]{}{\ltree}\merge\;\unfold{\sstate}$, or equivalently
a path  in $\CKS$ starting from $\sstate$, and simulates
  $\autopsi$ on it, assuming that the labels it reads while following
  $\projI[\Iphi]{\tpath}$ in its input $\ltree$ correctly represent the truth
  value of formulas in $\max(\psi)$ along
  $\tpath$. 
Recall that $\CKS=(\setstates,\relation,\sstate_\init,\labS)$; we define
 $\tauto\egdef(\tQ,\tdelta,\tq_{\init},\couleur)$, where
\begin{itemize}
\item $\tQ=\Qpsi\times\setstates$, 
\item $\tq_{\init}=(\qpsi_{\init},\sstate)$,
\item for each $(\qpsi,\sstate')\in\tQ$, $\couleur(\qpsi,\sstate')=\couleurpsi(\qpsi)$, and
\item  for each $(\qpsi,\sstate')\in\tQ$
  and $a\in 2^{\max(\psi)}$, 
  \[\tdelta((\qpsi,\sstate'),a)=\bigvee_{\tq'\in\Deltapsi(\qpsi,a)}\bigvee_{
    \sstate''\in\relation(\sstate')}[\projI[\Iphi]{\sstate''},\left(\tq',\sstate''\right)].\]
\end{itemize}
The intuition is that $\tauto$ reads the current label in $2^{\max(\psi)}$, chooses nondeterministically
a transition  in $\autopsi$, chooses a next state $\sstate''$ in $\setstates$
and proceeds in the corresponding direction  $\projI[\Iphi]{\sstate''}\in\SI[\phi]$. 

Now from $\tauto$ we build the automaton $\bigauto{\phi}$ over
$\SI[\phi]$-trees labelled with ``real'' atomic propositions in
$\APq$.
Intuitively, in each node it visits, $\bigauto{\phi}$ guesses what should be its
labelling over $\max(\psi)$, it simulates $\tauto$
accordingly, and checks
that the guess it made is correct.
If, after having guessed a finite path $\noeud\in\unfold{\sstate}$
ending in state $\sstate'$, $\bigauto{\phi}$
 guesses that $\phi_{i}$ holds,
it checks this guess by starting a
 copy of automaton $\bigauto[\sstate']{\phi_{i}}$ from node
 $\noeuda=\projI[\Iphi]{\noeud}$ in its input $\ltree$.

Formally, for each $\sstate'\in\CKS$ and each $\phi_{i}\in\max(\psi)$ we first
build $\bigauto[\sstate']{\phi_i}$, which works on $\SI[\phi_i]$-trees.
Observe that $\Iphi[\phi]=\inter_{i=1}^k \Iphi[\phi_i]$, so that we need to
narrow down these automata\footnote{In the conference version of this
  work~\cite{BMMRV17} we made a mistake here: we wrote that $\Iphi[\phi]=\Iphi[\phi_i]$,
  which is not the case in general. As a consequence we do need to
  narrow down automata, unlike what was written in the conference
  version.}:
We let $\ATA^i_{\sstate'}\egdef\narrow[{\Iphi[\phi]}]{\bigauto[\sstate']{\phi_i}}
=(\tQ^{i}_{\sstate'},\tdelta^{i}_{\sstate'},\tq^{i}_{\sstate'},\couleur^{i}_{\sstate'})$.
We also let
$\compl{\ATA^{i}_{\sstate'}}=(\compl{\tQ^{i}_{\sstate'}},\compl{\delta^{i}_{\sstate'}},\compl{\tq^{i}_{\sstate'}},\compl{\couleur^{i}_{\sstate'}})$
be the dualisation of $\ATA^i_{\sstate'}$, and we assume without loss
of generality  all the state sets are
pairwise disjoint.
We define the \ATA
\[\bigauto{\phi}=(\tQ\cup
\bigcup_{i,\sstate'} \tQ^{i}_{\sstate'} \cup
\compl{\tQ^{i}_{\sstate'}},\tdelta',\tq_{\init},\couleur'),\] where the
colours of states are left as they were in their original automaton,
and $\tdelta'$ is defined as follows. For states in $\tQ^{i}_{\sstate'}$
(resp. $\compl{\tQ^{i}_{\sstate'}}$), $\tdelta'$ agrees with $\tdelta^{i}_{\sstate'}$
(resp. $\compl{\delta^{i}_{\sstate'}}$), and for $(\qpsi,\sstate')\in \tQ$ and $a\in
2^{\APq}$ we let $\tdelta'((\qpsi,\sstate'),a)$ be the disjunction over ${a'\in
    2^{\max(\psi)}}$ of 
  \begin{align}
    \label{lab-transition-automata}
    \Bigg ( \tdelta\left((\qpsi,\sstate'),a'\right)
    \et 
      \biget_{\phi_i\in a'}\tdelta^{i}_{\sstate'}(\tq^{i}_{\sstate'},a) 
    \;\et
     \biget_{\phi_i\notin
      a'}\compl{\delta^{i}_{\sstate'}}(\compl{\tq^{i}_{\sstate'}},a)
    \Bigg ).
\end{align}

\bas{Note that in general it is not possible to define a $\max(\psi)$-labelling
  of $\ltree$ that faithfully represents the truth values of formulas
  in $\max(\psi)$ for all nodes in $\unfold{\sstate}$, because a node in
$\ltree$ may correspond to  different nodes in
$\unfold{\sstate}$ that have same projection on $\SI[\phi]$ but satisfy different formulas of $\max(\psi)$. However this is
not a problem because different copies of $\bigauto{\phi}$ that
visit the same node can guess different
labellings, depending on the actual state of $\KS$ (which is part of
the state of $\bigauto{\phi}$).}

\noindent$\bm{\phi=\exists}^{\bm{\cobs}}\bm{p.\,\phi':}$
We  build automaton $\bigauto{\phi'}$ that works on $\SI[\phi']$-trees;
because $\phi$ is hierarchical, we have that $\cobs\subseteq  \Iphi[\phi']$
and we can narrow down $\bigauto{\phi'}$ to work on $\SI[\cobs]$-trees and obtain
$\ATA_{1}\egdef{\narrow[{\cobs}]{\bigauto{\phi'}}}$. By
Theorem~\ref{theo-simulation} we can
nondeterminise it to get $\ATA_{2}$, which by
Theorem~\ref{theo-projection} we can project with respect to
$p$, finally obtaining $\bigauto{\phi}\egdef \proj{\ATA_{2}}$.

\halfline
\head{Correctness}
We now prove by induction on $\phi$ that the construction is correct. 
In each case, we let $\ltree=(\tree,\lab)$ be a
complete $(\APq,\SI[\phi])$-tree rooted in $\projI[{\Iphi}]{\sstate_\init}$.

\halfline $\bm{\phi=p:}$ First, note that $I_{p}=[n]$, so that
$\ltree$ is rooted in
$\projI[{\Iphi}]{\sstate_\init}=\sstate_\init$, and
$\projI[{\Iphi}]{\noeud}=\noeud$. Also recall that $\noeud$ ends in $\sstate$. Let us
consider first the case where $p\in\APfree$. By definition of
$\bigauto{p}$, we have that $(\ltree,\noeud)\in\lang(\bigauto{p})$ if
and only if
$p\in\labS(\sstate)$. We also have 
$\liftI[{[n]}]{}{\ltree}\merge\;\unfold{\sstate},\noeud\models p$ if
and only if
$p\in\lab'(\noeud)$, where $\lab'$ is the labelling of tree
$\liftI[{[n]}]{}{\ltree}\merge\;\unfold{\sstate}$.
By definition of unfolding and merge, we have that $\lab'(\noeud)=\labS(\sstate)$, which concludes this direction. Now if $p\in\APq$: by
definition of $\bigauto{p}$, we have $(\ltree,\noeud)\in\lang(\bigauto{p})$ if
and only if
$p\in \lab(\noeud)$; also, by definition of the merge and
unfolding, 
 we have that
 $\liftI[{[n]}]{}{\ltree}\merge\;\unfold{\sstate},\noeud\models p$ if and
 only if
 $p\in\lab(\noeud)$, and we are done.

\halfline $\bm{\phi=\neg\phi':}$ 
Correctness follows from the induction hypothesis and Theorem~\ref{lab-complement}.

 \halfline $\bm{\phi_{1}\ou \phi_{2}:}$
 We have $\ATA_{i} =
\narrow[{\Iphi}]{\bigauto{\phi_{i}}}$, so by Theorem~\ref{theo-narrow}
 we have $(\ltree,\projI[{\Iphi}]{\noeud})\in\lang(\ATA_{i})$
 if and only if
$(\liftI[{\Iphi[\phi_i]}]{}{\ltree},\projI[{\Iphi[\phi_i]}]{\noeud})\in\lang(\bigauto{\phi_{i}})$, which by
induction hypothesis holds if and only if
$\liftI[{[n]}]{}{(\liftI[{\Iphi[\phi_{i}]}]{}{\ltree})}\merge\;
\unfold{\sstate},\noeud \modelst \phi_{i}$, \ie, 
$\liftI[{[n]}]{}{\ltree}\merge\;\unfold{\sstate},\noeud\modelst\phi_{i}$.
We conclude by observing that
$\lang(\bigauto{\phi})=\lang(\ATA_{1})\union\lang(\ATA_{2})$.

\halfline $\bm{\phi=\E\psi :}$ Suppose that
$\liftI[{[n]}]{}{\ltree}\merge\;\unfold{\sstate},\noeud\modelst\E\psi$. There
exists an infinite path $\tpath$ in
$\liftI[{[n]}]{}{\ltree}\merge\;\unfold{\sstate}$ starting at $\noeud$
such that
$\liftI[{[n]}]{}{\ltree}\merge\;\unfold{\sstate},\tpath\models\psi$.
Again, let $\max(\psi)$ be the set of maximal state subformulas of
$\phi$, and let $w$ be the infinite word over $2^{\max(\psi)}$ that
agrees with $\tpath$ on the state formulas in $\max(\psi)$, \ie, for
each node $\tpath_k$ of $\tpath$ and formula $\phi_i\in\max(\psi)$,
it holds that $\phi_i\in w_k$ if and only if
$\liftI[{[n]}]{}{\ltree}\merge\;\unfold{\sstate},\tpath_k \modelst
\phi_i$.  To show that
$(\ltree,\projI[\Iphi]{\noeud})\in\lang(\bigauto[\sstate]{\phi})$ we
show that Eve can win the acceptance game
$\tgame{\bigauto[\sstate]{\phi}}{\ltree}{\projI[\Iphi]{\noeud}}$.  In
this game, Eve can guess the path $\tpath$ while the automaton
follows $\projI[{\Iphi[\phi]}]{\tpath}$ in its input $\ltree$,
and she can also guess the corresponding word $w$ on
$2^{\max(\psi)}$. By construction of $\autopsi$, Eve has a winning
strategy $\strat_\psi$ in the acceptance game of $\autopsi$ on
$w$. From $\tpath$, $w$ and $\strat_\psi$
we can easily define a
strategy for Eve in
$\tgame{\bigauto[\sstate]{\phi}}{\ltree}{\projI[\Iphi]{\noeud}}$ on
all positions that can be reached while Adam does not choose to
challenge her on a guess she made for the truth value of some maximal
state subformula, and on such plays this strategy is winning because
$\strat_\psi$ is winning.

Now if Adam challenges her on one of these guesses:
     Let
     $\tpath_k\in\liftI[{[n]}]{}{\ltree}\merge\;\unfold{\sstate}$ be a node along $\tpath$, let
     $\sstate'$ be its last direction and
let     $\tpath_k'=\projI[{\Iphi[\phi]}]{\tpath_k}\in\ltree$.
Assume that in node $\tpath'_k$ of the input tree, in a state
$(\qpsi,\sstate')\in \tQ$, Adam challenges Eve on some
$\phi_i\in \max(\psi)$ that she assumes to be true in $\tpath'_k$,
\ie, such that $\phi_i\in w_k$. Formally, in the evaluation game this
means that Adam chooses the conjunct
$\tdelta^{i}_{\sstate'}(\tq^{i}_{\sstate'},a)$ in transition
formula~\ref{lab-transition-automata}, where
$a=\lab(\tpath'_k)$, thus moving to position
$(\tpath'_k,(\qpsi,\sstate'),\tdelta^{i}_{\sstate'}(\tq^{i}_{\sstate'},a))$.
We want to show that Eve wins from this position.  To do so we first
show that $(\ltree,\tpath'_k)\in\lang(\ATA^{i}_{\sstate'})$.
 
First, since
$\ATA^{i}_{\sstate'}=\narrow[I_{\phi}]{\bigauto[\sstate']{\phi_{i}}}$,
by Theorem~\ref{theo-narrow},
$(\ltree,\tpath'_k)\in\lang(\ATA^i_{\sstate'})$ if and only if
$(\liftI[{\Iphi[\phi_i]}]{}{\ltree},\projI[{\Iphi[\phi_i]}]{\tpath_k})\in\lang(\bigauto[\sstate']{\phi_{i}})$. Next,
by applying the induction hypothesis we get that
$(\liftI[{\Iphi[\phi_i]}]{}{\ltree},\projI[{\Iphi[\phi_i]}]{\tpath_k})\in\lang(\bigauto[\sstate']{\phi_{i}})$
if and only if
$\liftI[{[n]}]{}{\liftI[{\Iphi[\phi_i]}]{}{\ltree}}\merge\;
\unfold{},\tpath_k\modelst \phi_{i}$, \ie,
$\liftI[{[n]}]{}{\ltree}\merge\; \unfold{},\tpath_k\modelst
\phi_{i}$. The latter holds because $\phi_i\in w_k$, and by assumption
$w_k$ agrees with $\tpath_k$ on $\phi_i$. Thus
$(\ltree,\tpath'_k)\in\lang(\ATA^i_{\sstate'})$.

This means that Eve
has a winning strategy from the initial position
$(\tpath'_k,\tq^i_{\sstate'},\tdelta^i_{\sstate'}(\tq^i_{\sstate'},a))$
of the acceptance game of $\ATA^{i}_{\sstate'}$ on $(\ltree,\tpath'_k)$.
 Since
$(\tpath'_k,\tq^i_{\sstate'},\tdelta^i_{\sstate'}(\tq^i_{\sstate'},a))$
and
$(\tpath'_k,(\qpsi,\sstate'),\tdelta^i_{\sstate'}(\tq^i_{\sstate'},a))$
contain the same node $\tpath'_k$ and transition formula
$\tdelta^i_{\sstate'}(\tq^i_{\sstate'},a)$, the subgames that start in
these positions are isomorphic and a winning strategy in one
of these positions induces
 a winning strategy in the other, and therefore Eve wins Adam's
challenge (recall that positional strategies are
  sufficient in parity games \cite{DBLP:journals/tcs/Zielonka98}).  With a similar argument, we get that also when Adam
challenges Eve on some $\phi_i\in \max(\psi)$ assumed not to be true in node
$\tpath_k$, Eve wins the challenge. Finally,
    Eve wins the acceptance game of
    $\bigauto{\phi}$ on $(\ltree,\projI[\Iphi]{\noeud})$, and thus $(\ltree,\projI[\Iphi]{\noeud})\in\lang(\bigauto{\phi})$.

    For the other direction, assume that
    $(\ltree,\projI[\Iphi]{\noeud})\in\lang(\bigauto{\phi})$, \ie, Eve wins the evaluation game
    of $\bigauto{\phi}$ on $(\ltree,\projI[\Iphi]{\noeud})$.  A winning strategy for Eve
    describes a path $\tpath$ 
    in $\unfold{\sstate}$ from $\sstate$, which is also
 a path in 
    $\liftI[{[n]}]{}{\ltree}\merge\;\unfold{\sstate}$ from $\noeud$. This
    winning strategy also defines an infinite word $w$
    over $2^{\max(\psi)}$ such that $w$ agrees with $\tpath$ on the
    formulas in $\max(\psi)$, and it also describes a winning strategy
    for Eve in the acceptance game of 
 $\autopsi$ on $w$. Hence $\liftI[{[n]}]{}{\ltree}\merge\;\unfold{\sstate},\tpath\modelst\psi$, and
    $\liftI[{[n]}]{}{\ltree}\merge\;\unfold{\sstate},\noeud\modelst \phi$.

    \halfline $\bm{\phi=\exists}^{\bm{\cobs}}\bm{p.\,\phi':}$
    First, by definition we have $\Iphi=\cobs\inter\Iphi[\phi']$. Because
    $\phi$ is hierarchical, $\cobs\subseteq\cobs'$ for
    every $\cobs'$ that occurs in $\phi'$, and thus $\cobs\subseteq\Iphi[\phi']$.
    It follows that $\Iphi=\cobs$. Next, by
    Theorem~\ref{theo-projection} we have that
    \begin{equation}
      \label{eq:5}
      (\ltree,\projI[\Iphi]{\noeud})\in\lang(\bigauto{\phi}) \mbox{\bigiff}
      \exists\,\plab \mbox{ a $p$-\labeling for
    $\ltree$ such that
  }(\ltree\prodlab\plab,\noeud)\in\lang(\ATA_{2}).        
    \end{equation}
By Theorem~\ref{theo-simulation},
    $\lang(\ATA_{2})=\lang(\ATA_{1})$, and since $\ATA_{1}
    =\narrow[\cobs]{\bigauto{\phi'}}=\narrow[{\Iphi[\phi]}]{\bigauto{\phi'}}$
    we get by Theorem~\ref{theo-narrow} that
    \begin{equation}
      \label{eq:6}
      (\ltree\prodlab\plab,\projI[\Iphi]{\noeud})\in\lang(\ATA_{2}) \mbox{\bigiff}
      (\liftI[{\Dirtreei[\phi']}]{}{(\ltree\prodlab\plab)},\projI[{\Iphi[\phi']}]{\noeud})\in\lang(\bigauto{\phi'}).   
    \end{equation}
By induction hypothesis, 
\begin{equation}
  \label{eq:7}
(\liftI[{\Dirtreei[\phi']}]{}{(\ltree\prodlab\plab)},\projI[{\Iphi[\phi']}]{\noeud})\in\lang(\bigauto{\phi'})
\mbox{\bigiff} \liftI[{[n]}]{}{\liftI[{\Dirtreei[\phi']}]{}{(\ltree\prodlab\plab)}}\merge\;\unfold{\sstate},\noeud\modelst
\phi'.  
\end{equation}
Now, by points \eqref{eq:5}, \eqref{eq:6} and \eqref{eq:7}
and the fact that
 $\liftI[{[n]}]{}{\liftI[{\Iphi[\phi']}]{}{(\ltree\prodlab\plab)}}=   \liftI[{[n]}]{}{(\ltree\prodlab\plab)}$,
we get that
\begin{equation}
  \label{eq:8}
  (\ltree,\projI[\Iphi]{\noeud})\in\lang(\bigauto{\phi}) \mbox{\bigiff}
        \exists\,\plab \mbox{ a $p$-\labeling for
    $\ltree$ such that
  }
  \liftI[{[n]}]{}{(\ltree\prodlab\plab)}\merge\;\unfold{\sstate},\noeud\modelst\phi'.
\end{equation}

We now prove the following equation  which, together with point \eqref{eq:8},
concludes the proof:

\begin{equation}
  \label{eq:9}
  \begin{array}{c}
      \exists\,\plab \mbox{ a $p$-\labeling for
    $\ltree$ such that
  }
  \liftI[{[n]}]{}{(\ltree\prodlab\plab)}\merge\;\unfold{\sstate},\noeud\modelst\phi'
  \\
    \mbox{\bigiff}\\
  \liftI[{[n]}]{}{\ltree}\merge\;\unfold{\sstate},\noeud\modelst\existsp{\cobs}\phi'
  \end{array}
\end{equation}

Assume that there exists a $p$-\labeling $\plab$ for $\ltree$ such
that
$\liftI[{[n]}]{}{(\ltree\prodlab\plab)}\merge\;\unfold{\sstate},\noeud\modelst\phi'$.
Let $\plab'$ be the $p$-\labeling of
$\liftI[{[n]}]{}{(\ltree\prodlab\plab)}\merge\;\unfold{\sstate}$.  By
definition of the merge, $\plab'$ is equal to the $p$-labelling of
$\liftI[{[n]}]{}{(\ltree\prodlab\plab)}$, which by definition of the
widening is $\Iphi$-uniform, \ie, it is $\cobs$-uniform. In addition,
it is clear that
$\liftI[{[n]}]{}{(\ltree\prodlab\plab)}\merge\;\unfold{\sstate}=(\liftI[{[n]}]{}{\ltree}\merge\;\unfold{\sstate})\prodlab\plab'$,
which concludes this direction.

For the other direction, assume that
$\liftI[{[n]}]{}{\ltree}\merge\;\unfold{\sstate},\noeud\modelst\existsp{\cobs}\phi'$:
there exists a $\cobs$-uniform  $p$-\labeling $\plab'$ for $\liftI[{[n]}]{}{\ltree}\merge\;\unfold{\sstate}$ 
 such that $(\liftI[{[n]}]{}{\ltree}\merge\;\unfold{\sstate})\prodlab\plab',\noeud\modelst\phi'$.  We define
 a $p$-\labeling $\plab$ for $\ltree$ such that
 $\liftI[{[n]}]{}{(\ltree\prodlab\plab)}\merge\;\unfold{\sstate},\noeud\modelst\phi'$.
 First, let us write $\ltree'=\liftI[{[n]}]{}{\ltree}\merge\;\unfold{\sstate}=(\tree',\lab')$.
 For each node $\noeud$ of $\ltree$, 
 let
 \[
\plab(\noeud)=
\begin{cases}
\plab'(\noeud') &   \mbox{if there exists }\noeud'\in\tree' \mbox{
  such that }\projI[\cobs]{\noeud'}=\noeud,\\
0 & \mbox{otherwise.}
\end{cases}
   \]
   This is well defined because $\plab'$ is $\cobs$-uniform in $p$, so
   that if two nodes $\noeud',\noeuda'$ project on $\noeud$, we have
   $\noeud'\oequivt\noeuda'$ and thus
   $\plab'(\noeud')=\plab'(\noeuda')$.  In case there is no
   $\noeud'\in\tree'$ such that
   $\projI[{\Iphi[\phi]}]{\noeud'}=\noeud$, the value of
   $\plab(\noeud)$ has no impact on
   $\liftI[{[n]}]{}{(\ltree\prodlab\plab)}\merge\;\unfold{\sstate}$.
Finally,
   $\liftI[{[n]}]{}{(\ltree\prodlab\plab)}\merge\;\unfold{\sstate}=
   (\liftI[{[n]}]{}{\ltree}\merge\;\unfold{\sstate})\prodlab\plab'$,
hence the result.

%
\end{proof}

\subsection{Proof of Theorem~\ref{theo-decidable-QCTLi}}
\label{sec-proof-theo-decidable}

We  now prove Theorem~\ref{theo-decidable-QCTLi}. Let
$\CKS$ be a \CKS with initial state $\sstate_\init$, and let $\Phi\in\QCTLsih$.
By Lemma~\ref{lem-final} one can build an \ATA
$\bigauto[\sstate_\init]{\Phi}$ such that for every
 labelled $\SI[\phi]$-tree $\ltree$ rooted in
 $\projI[{\Iphi[\phi]}]{\sstate_\init}$, and every node
 $\noeud\in\unfold{\sstate_\init}$, 
     $(\ltree,\projI[{\Iphi}]{\noeud})\in\lang(\bigauto[\sstate_\init]{\phi})$
     if, and only if,
      $\liftI[{[n]}]{}{\ltree}\merge \;\unfold{\sstate_\init},\noeud \modelst
      \Phi$.      

      Let $\tree$ be the full $\SI[\phi]$-tree rooted in
      $\projI[{\Iphi[\phi]}]{\sstate_\init}$, and let
      $\ltree=(\tree,\lab_{\emptyset})$,  where $\lab_{\emptyset}$ is
       the empty labelling.
      Clearly,
$\liftI[{[n]}]{}{\ltree}\merge
\;\unfold{\sstate_\init}=\unfold{\sstate_\init}$, and because $\ltree$ is
rooted in $\projI[{\Iphi[\phi]}]{\sstate_\init}$, we get that 
      $\ltree\in\lang(\bigauto[\sstate_\init]{\phi})$ if, and only if
      $\unfold{\sstate_\init}\modelst \Phi$, \ie, $\CKS\modelst\Phi$. It remains to 
check whether tree $\ltree$, which is regular, is accepted by
$\bigauto[\sstate_\init]{\Phi}$. This can be done by solving a parity
game built from the
product of $\bigauto[\sstate_\init]{\Phi}$ with a finite Kripke structure
representing $\ltree$~\cite{loeding}.


      \subsection{Complexity}
      \label{sec-complexity-QCTL}


     To state a precise upper bound on the complexity of our
        procedure, we first introduce a syntactic notion of
        \emph{simulation depth} for formulas of \SLi. While
        alternation depth (see, \eg,
        \cite{DBLP:journals/tocl/MogaveroMPV14}) simply counts the number of
        alternations between existential and universal strategy
        quantifications, simulation depth reflects automata operations required to treat a
        formula, and  counts the
        maximum number of nested simulations of alternating
        tree automata that need to be performed when applying our automata
        construction. However, like alternation depth, it is a purely
        syntactic notion. Formally we define a function
        $\ndd:\QCTLsi\to \setn \times \{\nd,\alt\}$ which returns, for
        each formula $\phi$, a pair $\ndd(\phi)=(k,x)$ where $k$ is the
        simulation depth of $\phi$, and $x\in\{\nd,\alt\}$
        indicates whether the automaton $\bigauto{\phi}$ built from
        $\phi$ and a state $\sstate$ of a \CKS $\CKS$ is
        nondeterministic ($\nd$) or alternating ($\alt$). If
        $\ndd(\phi)=(k,x)$ we shall denote $k$ by $\ndd_k(\phi)$ and
        $x$ by $\ndd_x(\phi)$.  The inductive definition for state
        formulas is as follows:
      \[
        \begin{array}{l}
          \ndd(p) \egdef (0,\nd)\\[5pt]
          
          \ndd(\neg \phi) \egdef (\ndd_k(\phi),\alt)\\[5pt]
          
          \ndd(\phi_1\ou\phi_2) \egdef \left
          (\max_{i\in\{1,2\}}\ndd_k(\phi_i), x \right ),\\
          \hfill \mbox{where
          }x=
          \begin{cases}
            \nd & \mbox{if }\ndd_x(\phi_1)=\ndd_x(\phi_2)=\nd\\
            \alt & \mbox{otherwise}
          \end{cases}
          \\[15pt]
\bas{   \ndd(\E\psi)\egdef
          \begin{cases}
            (0,\nd) & \mbox{if }\psi\in\LTL\\
            (\max_{\phi\in\max(\psi)}\ndd_k(\phi), \alt) & \mbox{otherwise}
          \end{cases}
          }\\[15pt]
          \ndd(\existsp{\cobs}\phi)\egdef (k,\nd),\\
          \hfill\quad\quad\quad\quad\quad\mbox{where }
          k=\begin{cases}
            \ndd_k(\phi) & \mbox{if }\ndd_x(\phi)=\nd \mbox{ and
            }\cobs=\Iphi \quad \mbox{(recall Definition~\ref{def-Iphi})}\\
            \ndd_k(\phi)+1 & \mbox{otherwise}
          \end{cases}                              
        \end{array}
      \]

We explain each case. For an atomic proposition $p$, the automaton
$\bigauto{p}$ is clearly nondeterministic and no simulation is
involved in its construction. For a formula $\neg \phi$, the automaton
$\bigauto{\neg\phi}$ is obtained by dualising $\bigauto{\phi}$, an operation
that in general does not return a nondeterministic automaton but an
alternating one; also this dualisation does not involve any
simulation, hence the definition of the first component.
Now for the disjunction, the first component should be clear; for the
second one, observe that by construction of $\bigauto{\phi_1 \ou
  \phi_2}$, if both $\bigauto{\phi_1}$ and $\bigauto{\phi_2}$ are
nondeterministic, then so is $\bigauto{\phi_1\ou\phi_2}$; otherwise,
it is alternating. For the path quantifier, by construction
$\bigauto{\E\psi}$ is alternating in the general case as it starts copies of automata for
each maximal state subformula in $\psi$; for the first component, we recall that $\max(\psi)$
denotes the set of these maximal state subformulas and we observe that
no additional simulation is performed to build
$\bigauto{\E\psi}$ besides those needed to construct the automata for
the maximal state subformulas. \bas{If $\psi$ is an \LTL formula, 
 then one can
build the nondeterministic word automaton $\autopsi$ directly working on ``real''
atomic propositions in $\APq\union\APfree$. The automaton $\tauto$
can then be built working directly on $\APq$, with $\autopsi$ reading
valuations for $\APq$ in the input tree and those for atoms in $\APfree$ in the current
state of $\CKS$. Because we do not need to guess
valuations of maximal state subformulas and launch
additional automata to check that these guesses are correct, we obtain
a nondeterministic automaton.}
Finally, for a formula of the form $\existsp{\cobs}\phi$, to
build automaton $\bigauto{\existsp{\cobs}\phi}$ we first
build $\bigauto{\phi}$, which we then narrow down to work on
$\Dirtreei[\cobs]$-trees. Since the narrowing operation introduces alternation,
we need to nondeterminise the resulting automaton before projecting it
with respect to $p$. Now observe that if $\Iphi=\cobs$ we do not
need to perform this narrowing, and thus if $\bigauto{\phi}$ is a
nondeterministic automaton we can directly perform the
projection. This justifies the definition of the first component; for
the second one, observe that the projection of a nondeterministic
automaton is also nondeterministic. 

\bas{
\begin{example}
  \label{ex-ndd}
  Assume that $n=3$, \ie, states of \CKS have three components (recall
  that $[3]=\{1,2,3\}$). Let us
  consider formula $\phi=\forallp[p]{\{1,3\}}\forallp[q]{[3]}\existsp[r]{[3]}\E\always
  (p\et q\ou r)$. We
  describe how its simulation  depth is computed.
  First, let us rewrite $\phi=\neg\existsp[p]{\{1,3\}} \existsp[q]{[3]}\neg\existsp[r]{[3]}\E\always
  (p\et q\ou r)$.
  
  Since $\always (p\et q \ou r)$ is an \LTL formula,
  $\ndd(\E\always (p\et q \ou r))=(0,\nd)$. Next, because
  $\Iphi[\E\always (p\et q \ou r)]=[3]$, it follows that
  $\ndd(\existsp[r]{[3]}\E\always (p\et q \ou r))=(0,\nd)$, and
  $\ndd(\neg\existsp[r]{[3]}\E\always (p\et q \ou r))=(0,\alt)$.  Next
  we have that
  $\ndd(\existsp[q]{[3]}\neg\existsp[r]{[3]}\E\always (p\et q \ou
  r))=(1,\nd)$. This reflects the fact that the automaton obtained
  for formula $\neg\existsp[r]{[3]}\E\always (p\et q \ou r)$, which is
  alternating because of complementation, needs to be simulated
  before projecting it over $q$. Then, because $\{1,3\}\neq[3]$, it
  holds that
  $\ndd(\existsp[p]{\{1,3\}}\existsp[q]{[3]}\neg\existsp[r]{[3]}\E\always
  (p\et q \ou r))=(2,\nd)$: to project
  over $p$ we first need to narrow down the previous automaton to make
  it see only components 1 and 3, and because the narrowing
  operation introduces alternation, the resulting automaton needs to
  be simulated before projecting it. Finally, we get
  that $\ndd(\phi)=(2,\alt)$
\end{example}
}

We now introduce two additional depth measures on \QCTLsi formulas,
which help us establish more precise upper bounds  on the sizes of the automata
we build. For every \QCTLsi formula $\phi$, we let $\Ed(\phi)$ be the
maximum number of nested path quantifiers $\E$ in $\phi$, and $\rEd(\phi)$
is the maximum number of nested second-order quantifiers $\exists$ in $\phi$.
We also inductively define the function $\tower{k}{n}$, for
$k,n\in\setn$, as follows: $\tower{0}{n}\egdef n$ and
$\tower{k+1}{n}\egdef 2^{\tower{k}{n}}$.


      \begin{proposition}
        \label{prop-size-automata}
        Let $\Phi$ be a $\QCTLsih$ formula, $\CKS$ a \CKS and
        $\sstate\in\CKS$ a state.  
        \begin{itemize}
        \item If
        $\ndd_k(\Phi)=0$, $\bigauto[\sstate]{\Phi}$ has at most
        $\gsphi[\Phi]$ states  and
        2 colours, and
        \item 
       if $\ndd_k(\Phi)\geq 1$, 
        $\bigauto[\sstate]{\Phi}$ has at most
        $\tower{\ndd_k(\Phi)}{\gsphi[\Phi] \log \gsphi[\Phi]}$  states
        and its number of colours is at most
        $\tower{\ndd_k(\Phi)-1}{\gsphi[\Phi] \log \gsphi[\Phi]}$,
      \end{itemize}
              where
              $\gsphi[\Phi]=m_1^{\rEd(\Phi)}|\Phi||\CKS|^{\Ed(\Phi)}2^{m_2|\Phi|\Ed(\Phi)}$,
              with $m_1, m_2\in\setn$  constants.
              
 Also, if $\bigauto{\phi}$ has state set $\tQ$ then for each
 $\tq\in\tQ$ and $a\in 2^{\APq(\Phi)}$ we have
     $|\tdelta(\tq,a)|\leq |\CKS| |\tQ|^{|\CKS|} 2^{\paraphi|\phi|}$,
  where $\paraphi=1+\Ed(\phi)$.
      \end{proposition}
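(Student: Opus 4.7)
The plan is to proceed by structural induction on the subformula $\phi$ of $\Phi$, establishing jointly the three claimed bounds for $\bigauto[\sstate]{\phi}$ (state count, colour count, transition formula size) for every $\sstate\in\CKS$. Throughout, I would use the monotonicity $|\phi|\leq|\Phi|$, $\Ed(\phi)\leq\Ed(\Phi)$, $\rEd(\phi)\leq\rEd(\Phi)$ and $\ndd_k(\phi)\leq\ndd_k(\Phi)$ to express all intermediate bounds in terms of $\gsphi[\Phi]$ rather than $\gsphi[\phi]$, which is what ultimately allows the two regimes ($\ndd_k=0$ versus $\ndd_k\geq 1$) stated in the proposition.

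The easy cases handle themselves by inspection of the construction in Lemma~\ref{lem-final}. For $\phi=p$ the automaton has a single state and two colours, well within $\gsphi[\Phi]$. For $\phi=\neg\phi'$, dualisation preserves state count, shifts colours by one, and only swaps conjunctions and disjunctions in transition formulas; the bounds pass through the induction hypothesis unchanged. For $\phi=\phi_1\ou\phi_2$, the narrowings of each $\bigauto{\phi_i}$ are linear by Theorem~\ref{theo-narrow} and the fresh initial state only adds a constant, so both components of the bound survive. In each of these cases $\ndd_k$ and $\ndd_x$ behave consistently with the definition so the correct branch of the proposition applies.

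The two substantive cases are $\phi=\E\psi$ and $\phi=\existsp{\cobs}\phi'$. For $\phi=\E\psi$, I would first recall the Vardi--Wolper construction of the nondeterministic parity word automaton $\autopsi$ treating the formulas in $\max(\psi)$ as atoms: this costs $2^{O(|\psi|)}$ states and a bounded number of colours, contributing one factor of $2^{m_2|\psi|}$ at this $\E$-level and hence $2^{m_2|\Phi|\Ed(\Phi)}$ in total across the at most $\Ed(\Phi)$ nested path quantifiers. The product with $\setstates$ inside $\tauto$ contributes a factor $|\CKS|$, accounting for the $|\CKS|^{\Ed(\Phi)}$ term. Adding the states of the $\bigauto[\sstate']{\phi_i}$ and their duals for every $\phi_i\in\max(\psi)$ and $\sstate'\in\CKS$ gives an additive contribution bounded by $|\CKS||\Phi|$ times the tower bound on each $\bigauto[\sstate']{\phi_i}$ from the induction hypothesis; since $\ndd_k(\phi)=\max_i\ndd_k(\phi_i)$ when $\psi\notin\LTL$ and $\ndd_k(\phi)=0$ when $\psi\in\LTL$ (in which case $\autopsi$ is built directly over $\APq\cup\APfree$, as noted in the definition of $\ndd$), this addition is absorbed without raising the tower level. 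For $\phi=\existsp{\cobs}\phi'$, narrowing and projection (Theorems~\ref{theo-narrow} and \ref{theo-projection}) are linear in the state count and at most double the size of transition formulas, producing the $m_1^{\rEd(\Phi)}$ factor over the $\rEd(\Phi)$ nested second-order quantifiers; the only source of exponential blowup is the simulation step of Theorem~\ref{theo-simulation}, which contributes $2^{O(s\log s)}$ states and $O(s)$ colours. This step is skipped exactly when $\ndd_x(\phi')=\nd$ and $\cobs=\Iphi[\phi']$, matching the definition of $\ndd_k(\phi)$ and therefore preserving the correspondence between the number of simulation steps taken and the height of the tower.

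The main obstacle is the bookkeeping for the transition formula bound $|\tdelta(\tq,a)|\leq|\CKS||\tQ|^{|\CKS|}2^{\paraphi|\phi|}$, since the transition~(\ref{lab-transition-automata}) for $\E\psi$ is the only place where a multiplicative explosion in the transition formula occurs: its disjunction ranges over $2^{|\max(\psi)|}$ choices of $a'$, each conjoining one transition of $\tauto$ with one transition per maximal state subformula, and each transition of $\tauto$ in turn ranges over $\Delta^\psi(\qpsi,a')\times\relation(\sstate')$, contributing the $|\CKS|$ and $|\tQ|^{|\CKS|}$ factors. Tracking that these transition-size blowups compound at most $\Ed(\phi)$ times along nested path quantifiers but do not compound across propositional quantifiers (which only double the transition formula via projection) is what yields the exponent $\paraphi=1+\Ed(\phi)$ in the final bound. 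All other cases are checked by a routine application of the induction hypothesis and the complexity bounds of the automata-theoretic constructions recalled in Section~\ref{sec-ATA}.
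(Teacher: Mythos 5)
Your overall strategy --- structural induction over subformulas, with a case analysis tracking the automata operations of Lemma~\ref{lem-final}, and the observation that the tower height increases exactly when a simulation is forced (i.e., when $\ndd_x(\phi')=\alt$ or $\cobs\neq\Iphi[\phi']$) --- is the same as the paper's, and your identification of $\E\psi$ and $\existsp{\cobs}\phi'$ as the substantive cases is correct. However, there is a genuine flaw in how you set up the induction, and it is not cosmetic. You propose to ``express all intermediate bounds in terms of $\gsphi[\Phi]$ rather than $\gsphi[\phi]$'', i.e., to prove a uniform bound for every subformula in terms of the parameters of the top-level formula. That invariant does not survive the additive cases. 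For $\phi=\phi_1\ou\phi_2$ the construction yields $1+|\tQ_1|+|\tQ_2|$ states; if all the induction hypothesis gives you is $|\tQ_i|\leq\gsphi[\Phi]$, then you obtain $1+2\,\gsphi[\Phi]$, which is not $\leq\gsphi[\Phi]$, and the same problem recurs with larger multiplicities in the $\E\psi$ case, where the automaton contains $2|\CKS|$ copies of $\bigauto[\sstate']{\phi_i}$ for each $\phi_i\in\max(\psi)$. The induction hypothesis must be the sharper per-subformula bound $\gsphi[\phi]$ (with $|\phi|$, $\Ed(\phi)$, $\rEd(\phi)$ those of the subformula); the whole point of the shape of $\gsphi$ is that it is super-additive in $|\phi|$, the factor $2^{m_2|\phi|\Ed(\phi)}$ being what turns $1+\gsphi[\phi_1]+\gsphi[\phi_2]$ into something $\leq\gsphi[\phi_1\ou\phi_2]$, since $2^{c|\phi_1|}+2^{c|\phi_2|}\leq 2^{c(|\phi_1|+|\phi_2|)}$. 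Monotonicity in the direction you invoke is never needed: the proposition concerns $\phi=\Phi$, where $\gsphi[\phi]$ and $\gsphi[\Phi]$ coincide, and the case split on $\ndd_k$ works perfectly well per subformula.

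Two smaller inaccuracies, which would come out in the wash once the induction hypothesis is repaired: the factor $m_1^{\rEd(\Phi)}$ is not produced by narrowing and projection (which leave state and colour counts unchanged), but is there to absorb the multiplicative constants of the simulation theorem each time a second-order quantifier is crossed; and the transition-formula bound in the $\existsp{\cobs}$ case is obtained not by tracking a doubling through projection but directly from the normal form of nondeterministic automata, each $\tdelta(\tq,a)$ being a disjunction of conjunctions with one atom per direction, of size $O(|\SI[\phi]|\cdot|\tQ|^{|\SI[\phi]|})$. Neither of these affects the viability of the plan, but the induction-hypothesis issue in the first paragraph must be fixed for the proof to go through.
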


Constants $m_1$ and $m_2$ are derived from constants in the
              complexity of, respectively, the simulation procedure, and  the
              procedure that builds a nondeterministic word automaton
              for an \LTL formula. For more detail, see the proof of
              Proposition~\ref{prop-size-automata} in Appendix~\ref{sec-appendix-a}.

From this we get the following complexity result.
      
      \begin{proposition}
        \label{prop-compl-QCTL}
        The model-checking problem for \QCTLsih formulas of
        simulation depth at most $k$  is  \kEXPTIME[(k+1)]-complete. 
      \end{proposition}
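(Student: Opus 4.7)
The plan is to prove the upper and lower bounds separately, the upper bound following directly from the automata construction of Lemma~\ref{lem-final} combined with the size analysis of Proposition~\ref{prop-size-automata}, and the lower bound by a reduction from hierarchical distributed synthesis.

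For the upper bound, given an instance $(\CKS,\Phi)$ with $\ndd_k(\Phi)\leq k$, I would follow the procedure described in Section~\ref{sec-proof-theo-decidable}: build the \ATA $\bigauto[\sstate_\init]{\Phi}$ by induction on $\Phi$, then build the acceptance parity game $\tgame{\bigauto[\sstate_\init]{\Phi}}{\ltree}{\projI[\Iphi]{\sstate_\init}}$ where $\ltree$ is the empty-labelled complete $\SI[\Phi]$-tree rooted in $\projI[\Iphi]{\sstate_\init}$, represented by a finite Kripke structure of size $|\SI[\Phi]|\leq |\CKS|$, and finally solve this parity game. The size of the game is bounded by $|\bigauto[\sstate_\init]{\Phi}|\cdot|\SI[\Phi]|$, and by Proposition~\ref{prop-size-automata} this is at most $\tower{k}{\gsphi\log \gsphi}\cdot|\CKS|$ states with at most $\tower{k-1}{\gsphi\log \gsphi}$ priorities (with the convention $\tower{-1}{\cdot}=0$ for the case $k=0$, which is handled separately and yields 2 priorities). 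Solving a parity game with $N$ states and $C$ priorities takes time $N^{O(C)}$, so the total time is at most
\[
\bigl(\tower{k}{\gsphi\log\gsphi}\cdot|\CKS|\bigr)^{O(\tower{k-1}{\gsphi\log\gsphi})}.
\]
A direct calculation, using that $\gsphi = m_1^{\rEd(\Phi)}|\Phi||\CKS|^{\Ed(\Phi)}2^{m_2|\Phi|\Ed(\Phi)}$ is already simply exponential in $|\Phi|$, shows that this expression collapses to $\tower{k+1}{\mathrm{poly}(|\Phi|,|\CKS|)}$, giving the $(k+1)$-\EXPTIME\ bound.

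For the lower bound, the plan is to reduce from the distributed synthesis problem on a pipeline architecture of $k+1$ processes with linear-time specifications, which is known to be \kEXPTIME[(k+1)]-hard~\cite{PR90,kupermann2001synthesizing,DBLP:conf/lics/FinkbeinerS05}. The pipeline architecture is the paradigmatic example of hierarchical information: process $i+1$ observes strictly less than process $i$, so the existence of a distributed implementation $f_0,\ldots,f_k$ satisfying an \LTL specification $\psi$ over the environment's inputs can be encoded as
\[
\existsp[p_0]{\cobs_0}\existsp[p_1]{\cobs_1}\cdots\existsp[p_k]{\cobs_k}\A\,\transLTL{\psi}
\]
with $\cobs_0\subseteq\cobs_1\subseteq\cdots\subseteq\cobs_k$, where each $p_i$ encodes the decisions of process $i$ and $\transLTL{\psi}$ is obtained from $\psi$ by rewriting outputs in terms of the $p_i$'s. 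This formula is hierarchical, has simulation depth $k$ (each $\exists^{\cobs_i}$ contributes one simulation because observations change strictly at each quantifier, and the innermost $\A$ does not add any nesting since its argument is an \LTL formula), and holds on the \CKS\ encoding the pipeline exactly when the synthesis instance has a solution.

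The main obstacle is the upper-bound complexity analysis: checking that the recursion $\tower{k}{N}^{O(\tower{k-1}{N})}$ from parity-game solving telescopes to $\tower{k+1}{\mathrm{poly}}$ rather than exploding by an extra exponential, and verifying that the various parameters in $\gsphi$ (depending on $\Ed(\Phi)$ and $\rEd(\Phi)$) do not push the bound beyond $(k+1)$-\EXPTIME. A related subtlety is handling the boundary case $k=0$, where $\bigauto[\sstate_\init]{\Phi}$ is nondeterministic of exponential size with two priorities, so the game is solved in polynomial time in its size, giving \EXPTIME\ = $1$-\EXPTIME, matching $k+1=1$.
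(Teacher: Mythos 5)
Your upper bound is essentially the paper's own argument: build $\bigauto[\sstate_\init]{\Phi}$, bound its states and colours via Proposition~\ref{prop-size-automata}, and solve the acceptance parity game on the empty-labelled tree; the telescoping of $N^{O(C)}$ with $N$ being $(k+1)$-exponential and $C$ being $k$-exponential is correct, as is the separate treatment of $k=0$.

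The lower bound, however, has a genuine gap, and it is precisely the trap the paper flags in Remark~\ref{rem-lower-bounds}. Distributed synthesis for hierarchical (pipeline) architectures is \kEXPTIME-complete for \LTL/\CTLs specifications, and its natural encoding uses formulas of simulation depth $k$; so the reduction you propose can only yield \kEXPTIME-hardness for simulation depth $k$, one exponential short of the claimed \kEXPTIME[(k+1)]-hardness. Your depth count is also off on its own terms: the block $\existsp[p_0]{\cobs_0}\cdots\existsp[p_k]{\cobs_k}\A\,\transLTL{\psi}$ contains $k+1$ quantifiers, and since $\A\transLTL{\psi}=\neg\E\neg\transLTL{\psi}$ yields an \emph{alternating} automaton ($\ndd=(0,\alt)$), the innermost quantifier already costs one simulation, and each strictly coarser observation above it costs another; the formula therefore has simulation depth $k+1$, not $k$. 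Either way the hardness and the depth move in lockstep and the gap never closes. The paper instead reduces from \EQkCTLs, the prenex fragment of \QCTLs (perfect information, so trivially hierarchical with $n=1$) with at most $k$ alternations between existential and universal propositional quantifiers, which is \kEXPTIME[(k+1)]-hard by~\cite{DBLP:journals/corr/LaroussinieM14} and has simulation depth at most $k$: the extra exponential comes from $\exists/\forall$ \emph{alternation} — each $\forall p = \neg\exists p\neg$ introduces a complementation that makes the automaton alternating and forces a simulation before the next projection even with no change of observation — not from the nesting of observations, which is all that hierarchical distributed synthesis exercises.
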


      \begin{proof} 
 We start with the upper bounds. For an
        instance $(\Phi,\CKS)$, our decision
        procedure in  Section~\ref{sec-proof-theo-decidable} first builds
        automaton $\bigauto[\sstate_\init]{\Phi}$,
        and concludes by testing whether the full $\SI[\Phi]$-tree
        with empty labelling $\ltree$
        is accepted by $\bigauto[\sstate_\init]{\Phi}$. 
        This can be done in time
$O((|\bigauto[\sstate_\init]{\Phi}|\cdot|\ltree|)^l)$, where $|\ltree|$
is the size of a smallest Kripke structure representing the regular
tree $\ltree$, $|\bigauto[\sstate_\init]{\Phi}|$ is the sum of the
number of states and sizes of formulas in the transition function of $\bigauto[\sstate_\init]{\Phi}$, and $l$  the
number of colours it uses~\cite{loeding}.
Clearly $\ltree$ can be represented by  a Kripke structure of size
 $|\SI[\Phi]|$, so that
 $|\ltree|\leq|\SI[\Phi]|\leq |\CKS|$.

 By Proposition~\ref{prop-size-automata}, each formula in the
 transition function of $\bigauto[\sstate_\init]{\Phi}$ is of size at
 most $|\CKS||\tQ|^{|\CKS|}2^{H|\Phi|}$, where $\tQ$ is the set of
 states in $\bigauto[\sstate_\init]{\Phi}$ and $H=1+\Ed(\Phi)$. There are at most
 $|\tQ|2^{|\APq(\Phi)|}$ such formulas\footnote{In fact the final
   automaton $\bigauto[\sstate_\init]{\Phi}$ does not read anything in
 its input, hence the alphabet could be considered to be a
 singleton. We thus have only $|\tQ|$ different formulas in the
 transition function, at most.} and $|\APq(\Phi)|\leq |\Phi|$,
so that
 $|\bigauto[\sstate_\init]{\Phi}|\leq
 |\tQ|+|\tQ|2^{|\APq(\Phi)|}|\CKS||\tQ|^{|\CKS|}2^{H|\Phi|}\leq
 2|\CKS||\tQ|^{|\CKS|+1}2^{(H+1)|\Phi|}$. Also $H+1\leq |\Phi|$, so we
finally have $|\bigauto[\sstate_\init]{\Phi}|\leq 2|\CKS||\tQ|^{|\CKS|+1}2^{|\Phi|^2}$.

        If $k=0$, by Proposition~\ref{prop-size-automata}
        $\bigauto[\sstate_\init]{\Phi}$ has at most $\gsphi[\Phi]$ states
        and 2 colours, and $\gsphi[\Phi]$  is polynomial in $|\CKS|$
        but exponential in $|\Phi|$. Therefore 
$|\bigauto[\sstate_\init]{\Phi}|$ is exponential
in $|\Phi|$ and in $|\CKS|$, and so is the complexity of checking that
$\ltree$ is accepted by $\bigauto[\sstate_\init]{\Phi}$.

 If $k\geq 1$, by Proposition~\ref{prop-size-automata},
$|\tQ|$ is $k$-exponential
in $\gsphi[\Phi]\log \gsphi[\Phi]$,
and $\gsphi[\Phi]\log \gsphi[\Phi]$ itself is polynomial in $|\CKS|$ but exponential in $|\Phi|$.
As a result, $|\bigauto[\sstate_\init]{\Phi}|$ is $(k+1)$-exponential
in $|\Phi|$ and $k$-exponential in $|\CKS|$.
Finally, still by Proposition~\ref{prop-size-automata}, the number of
colours $l$ is
 $(k-1)$-exponential in $\gsphi[\Phi] \log \gsphi[\Phi]$, hence
 $k$-exponential in $|\Phi|$. Checking that $\ltree$ is accepted by
 $\bigauto[\sstate_\init]{\Phi}$ can thus be done in time
 $(k+1)$-exponential in $|\Phi|$, and $k$-exponential in $|\CKS|$,
 which finishes to establish the upper bounds.

        For the lower bounds, consider the fragment \EQkCTLs of \QCTLs
        (with perfect information) which  consists in formulas in
        prenex normal form, \ie, with all
        second-order quantifications at the beginning,  with at most
        $k$ alternations between existential and universal
        quantifiers, counting the first quantifier as one alternation (see~\cite[p.8]{DBLP:journals/corr/LaroussinieM14} for
        a formal definition). Clearly, \EQkCTLs is a fragment of \QCTLsi (with
        $n=1$), and formulas of \EQkCTLs have simulation depth
        at most $k$. It is proved in~\cite{DBLP:journals/corr/LaroussinieM14} that model
        checking \EQkCTLs is \kEXPTIME[(k+1)]-hard.        
      \end{proof}

      \begin{remark}
        \label{rem-lower-bounds}
        One may wonder why we do not get our lower bounds from the
        distributed synthesis problem in systems with hierarchical
        information. The reason is that this problem is \kEXPTIME-complete for
        \LTL or \CTLs
        specifications~\cite{PR90,kupermann2001synthesizing} and can
        be expressed with formulas of simulation depth $k$, and thus
        would only provide \kEXPTIME lower-bounds for simulation depth
        $k$, while our problem is \kEXPTIME[{k+1}]-complete. This may seem
        surprising, but we point out that thanks to
        alternation of existential and universal quantifiers, \QCTLsi formulas with
        simulation depth $k$ can express more complex problems than
        classic distributed synthesis, such as existence of Nash
        equilibria (see Section~\ref{sec-NE-hier}).
      \end{remark}
      
\head{Improved upper bound} We now refine the previous result by observing that 
some subformulas can be model-checked independently in a bottom-up
labelling algorithm which uses the above model-checking procedure as a
subroutine. The height of exponential of the overall procedure for a
formula $\Phi$ is thus
determined by the maximal simulation-depth of the successive
independent subformulas $\phi$ treated by the labelling algorithm, instead of
the simulation depth of the full formula $\Phi$. To make this
precise we define the \emph{simulation number} of a sentence,
akin to the alternation number introduced
in~\cite{DBLP:journals/tocl/MogaveroMPV14}.

Let $\Phi\in\QCTLsi$, and assume without loss of generality that
 $\APq(\Phi)\inter\APfree(\Phi)=\emptyset$.
A state subformula $\phi$ of
$\Phi$ is a \emph{subsentence} if no atom quantified in $\Phi$ appears
free in $\phi$, \ie, $\phi$ is a subsentence of $\Phi$ if
 $\APq(\Phi)\inter\APfree(\phi)=\emptyset$.\footnote{Observe that since
we always assume that $\APq(\Phi)\inter\APfree(\Phi)=\emptyset$,
$\Phi$ is a subsentence of itself.}
The \emph{simulation
  number} $\ndn(\Phi)$ of a \QCTLsi formula $\Phi$ is the maximal
simulation depth of $\Phi$'s subsentences, where the simulation depth
is computed by considering strict subsentences as atoms.

Note that because temporal operators of \SLi can only talk about the
future, the truth value of a subsentence in a node $\noeud$ of an
unfolding $\unfold{\sstate}$ only depends on the current state
$\last(\noeud)$. The bottom-up labelling algorithm for an instance $(\Phi,\CKS)$ thus consists in
iteratively model checking innermore subsentences of $\Phi$ in all states
of $\CKS$, marking the states where they hold with fresh atomic
propositions with which the corresponding subsentences are replaced in
$\Phi$.

      \begin{proposition}
        \label{prop-compl-QCTL-refined}
        The model-checking problem for \QCTLsih formulas of
        simulation number at most $k$  is  \kEXPTIME[(k+1)]-complete. 
      \end{proposition}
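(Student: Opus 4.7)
The plan is to prove both bounds by reducing to Proposition~\ref{prop-compl-QCTL} via a bottom-up labelling strategy that exploits the definition of simulation number.

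\textbf{Upper bound.} I would enumerate the subsentences of $\Phi$ from innermost to outermost as $\phi_1,\ldots,\phi_m$, with $\phi_m=\Phi$ and $m\leq|\Phi|$. Processing them in order, for each $\phi_i$ I replace each maximal strict subsentence $\phi_j$ (with $j<i$) by a fresh atomic proposition $p_j$, obtaining a formula $\phi_i'$. By the definition of \emph{subsentence} we have $\APq(\Phi)\cap\APfree(\phi_i)=\emptyset$, so after this substitution $\phi_i'$ is a well-formed sentence of \QCTLsih whose simulation depth, computed treating strict subsentences as atoms, is by definition bounded by $\ndn(\Phi)\leq k$. I then run the procedure of Proposition~\ref{prop-compl-QCTL} on $(\CKS',\phi_i')$ for every state $\sstate$ taken as initial state, where $\CKS'$ is the current \CKS (its labelling having been augmented with the fresh atoms $p_1,\ldots,p_{i-1}$ according to previous iterations), and I label with $p_i$ exactly those states where $\phi_i'$ holds. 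Correctness of the substitution rests on the fact, noted just before the proposition, that the truth value of a subsentence at a node $\noeud$ of $\unfold{\sstate}$ depends only on $\last(\noeud)$; the final answer is then obtained from the label of $\sstate_\init$ after processing $\phi_m=\Phi$.

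\textbf{Complexity.} At each iteration the size of $\CKS'$ (the number of states) is unchanged, and its alphabet grows by at most $|\Phi|$ atoms overall, so both $|\CKS'|$ and $|\phi_i'|$ stay polynomially bounded in $|\CKS|+|\Phi|$. By Proposition~\ref{prop-compl-QCTL}, each of the $O(|\CKS|\cdot|\Phi|)$ invocations runs in time $(k+1)$-exponential in $|\phi_i'|$ and $|\CKS'|$; a polynomial number of compositions of a $(k+1)$-exponential procedure remains $(k+1)$-exponential, giving the claimed upper bound.

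\textbf{Lower bound.} Under the standing assumption $\APq(\Phi)\cap\APfree(\Phi)=\emptyset$, any sentence $\Phi$ is a subsentence of itself, so its simulation number equals its simulation depth. Hence the \kEXPTIME[(k+1)]-hardness established in Proposition~\ref{prop-compl-QCTL} via the fragment \EQkCTLs transfers directly to the simulation-number measure.

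The main obstacle is the compositionality argument: to replace a subsentence by a fresh atom without changing the meaning, one must check that its truth value at a node $\noeud$ of the unfolding depends only on $\last(\noeud)$. This uses three ingredients that are essentially built into \QCTLsi: temporal operators look only forward, path quantifiers reset the evaluation to a fresh path starting at the current node, and by definition of subsentence no outer quantified proposition occurs free inside. Once this point is established, the labelling argument goes through routinely.
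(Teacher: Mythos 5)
Your proof is correct and follows essentially the same route as the paper, which sketches exactly this bottom-up labelling algorithm (replacing innermost subsentences of $\Phi$ by fresh atoms, justified by the observation that a subsentence's truth value at a node $\noeud$ of the unfolding depends only on $\last(\noeud)$) and inherits the lower bound from the \EQkCTLs-hardness of Proposition~\ref{prop-compl-QCTL}. The paper leaves these details implicit; your write-up supplies them faithfully, including the complexity accounting over the polynomially many invocations.
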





\section{Model-checking hierarchical instances of {\SLi}}
\label{sec-modelcheckingSL}

In this section we establish that the model-checking problem for \SLi
restricted to the class of hierarchical instances is decidable
(Theorem~\ref{theo-SLi}).

\subsection{Reduction to \QCTLsi}
\label{sec-reduction}

We build upon the proof in
\cite{DBLP:journals/iandc/LaroussinieM15} that
establishes the decidability of the model-checking problem for \ATLssc
by reduction to the model-checking problem for \QCTLs. The main difference is that
we reduce to the model-checking problem for \QCTLsi instead, using
quantifiers on atomic propositions parameterised with observations that reflect the ones used
in the \SLi model-checking instance. 

Let $(\CGSi,\Phi)$ be a hierarchical instance of the \SLi
model-checking problem, and assume without loss of generality that
each strategy variable is quantified at most once in $\Phi$. We define an equivalent instance of the
model-checking problem for \QCTLsih.

\halfline
 \head{Constructing the \CKS $\CKS_{\CGSi}$} We define
 $\CKS_{\CGSi}$ so that (indistinguishable) nodes in its
 tree-unfolding correspond to (indistinguishable) finite plays in
 $\CGSi$.  The \CKS will make use of atomic
 propositions $\APv\egdef\{p_{\pos}\mid\pos\in\setpos\}$ (that we
 assume to be disjoint from $\APf$).  The idea is that $p_{\pos}$
 allows the \QCTLsi formula $ \tr[\emptyset]{\Phi}$ to refer to the current position $\pos$ in
 $\CGSi$.
 Later we will see that $\tr[\emptyset]{\Phi}$ will also make use of
 atomic propositions $\APm\egdef\{p_{\mov}^{\var}\mid\mov\in\setmoves
 \mbox{ and }\var \in \Varf\}$ that we assume, again, are disjoint
 from $\APf \cup \APv$. This allows the formula to use
 $p_{\mov}^{\var}$ to refer to the actions $\mov$ {advised by
   strategies $x$. }

 Suppose $\Obsf=\{\obs_{1},\ldots,\obs_{n}\}$, and let
$\CGSi=(\Act,\setpos,\trans,\val,\pos_\init,\obsint)$. For $i \in [n]$, define
 the local states
 $\setlstates_{i}\egdef\{\eqc{\obs_{i}}\mid\pos\in\setpos\}$ where
 $\eqc{\obs}$ is the equivalence class of $\pos$ for relation
 $\obseq$. Since we need to know the actual position of the \CGSi to
 define the dynamics, we also let $\setlstates_{n+1}\egdef\setpos$.

Define the \CKS $\CKS_{\CGSi}\egdef(\setstates,\relation,\sstate_{\init},\lab')$ where
\begin{itemize}
\item $\setstates\egdef\{\sstate_{\pos} \mid \pos\in\setpos\}$,
\item $\relation\egdef\{(\sstate_{\pos},\sstate_{\pos'})\mid
  \exists\jmov\in\Mov^{\Agf} \mbox{ s.t. }\trans(\pos,\jmov)=\pos'\}
  \subseteq \setstates^2$,
  \item $\sstate_{\init}\egdef\sstate_{\pos_{\init}}$,
\item $\lab'(\sstate_{\pos})\egdef\val(\pos)\union \{p_{\pos}\} \subseteq \APf \cup \APv$,
\end{itemize}
and $\sstate_{\pos}\egdef(\eqc{\obs_{1}},\ldots,\eqc{\obs_{n}},\pos) \in \prod_{i\in [n+1]}\setlstates_{i}$.


For every finite play $\fplay=\pos_{0}\ldots\pos_{k}$, define
the node $\noeud_{\fplay}\egdef \sstate_{\pos_{0}}\ldots \sstate_{\pos_{k}}$ in
$\unfold[\CKS_{\CGSi}]{\sstate_{\pos_{0}}}$ (which exists, by definition of
$\CKS_{\CGSi}$ and of tree unfoldings).  Note that the mapping
$\fplay\mapsto\noeud_{\fplay}$ defines a bijection between the set
of finite plays and the set of
nodes in $\unfold[\CKS_{\CGSi}]{\sstate_{\init}}$.

\halfline
\head{Constructing the \QCTLsih formulas $\tr[f]{\phi}$}
 We now describe how to transform an \SLi formula $\phi$ and a partial
function $f:\Agf \partialto  \Varf$ into a \QCTLsi
formula $\tr[f]{\phi}$ (that will also depend on $\CGSi$).
Suppose that $\Mov=\{\mov_{1},\ldots,\mov_{\maxmov}\}$, and define
$\tr[f]{\phi}$ and $\trp[f]{\psi}$ by mutual induction on state and path formulas. 
The base cases are as follows:
$\tr[f]{p} 		 \egdef p$ and $\trp[f]{\phi} \egdef
\tr[f]{\phi}$. Boolean and temporal operators are simply obtained by
distributing the translation:
$\tr[f]{\neg \phi} 	 \egdef \neg \tr[f]{\phi}$, $\trp[f]{\neg
  \psi} \egdef \neg \trp[f]{\psi}$,
$\tr[f]{\phi_1\ou\phi_2}  \egdef \tr[f]{\phi_1}\ou\tr[f]{\phi_2}$,
$\trp[f]{\psi_1\ou\psi_2}  \egdef \trp[f]{\psi_1}\ou\trp[f]{\psi_2}$,
$\trp[f]{\X\psi}  \egdef \X\trp[f]{\psi}$ and $\trp[f]{\psi_1\until\psi_2}  \egdef \trp[f]{\psi_1}\until\trp[f]{\psi_2}$.

We continue with the case of the strategy quantifier:
\[
  \begin{array}{lrl}
& \tr[f]{\Estrato{\obs}\phi}	& \egdef  \exists^{\trobs{\obs}}
                                  p_{\mov_{1}}^{\var}\ldots
                                  \exists^{\trobs{\obs}}
                                  p_{\mov_{\maxmov}}^{\var}. \phistrat
                                  \et \tr[f]{\phi}\\[5pt]
  \mbox{where} &  \phistrat & \egdef \A\always
                       \bigou_{\mov\in\Mov}p_{\mov}^{\var}\\[5pt]
  \mbox{and} & \trobs{\obs_i} & \egdef \{j\mid
\obsint(\obs_{i})\subseteq\obsint(\obs_{j})\}.
\end{array}
\]

The intuition is that for each possible action $\mov\in\Mov$, an
existential quantification on the atomic proposition $p_{\mov}^{\var}$
``chooses'' for each  node $\noeud_{\fplay}$ of the tree 
$\unfold[\CKS_{\CGSi}]{\sstate_{{\pos_0}}}$ whether strategy $\var$
allows action $\mov$ in $\fplay$ or not, and it does so uniformly with
regards to observation $\trobs{\obs}$. 
$\phistrat$  checks that at least one action is allowed in each
node, and thus that atomic propositions
$p_{\mov}^{\var}$ indeed define a
strategy.

We define $\trobs{\obs_i}$ as $\{j\mid
\obsint(\obs_{i})\subseteq\obsint(\obs_{j})\}$
instead of $\{i\}$ in order to obtain a hierarchical instance. 
Note that including all coarser observations does not increase the
information accessible to the quantifier: indeed, 
two nodes are $\{i\}$-indistinguishable if and only if they are
$\trobs{\obs_{i}}$-indistinguishable.

Here are the remaining cases:
\[
\begin{array}{lrl}
& \tr[f]{\bind{\var}\phi}	& \egdef \tr[{f[\ag\mapsto \var]}]{\phi} \quad\quad
                          \text{for }\var\in\Varf\union\{\unb\}  \\[5pt]
\mbox{and} & \bas{\tr[f]{\Eout\psi}}	& \bas{\egdef \E\,(\psiout[f] \wedge \trp[f]{\psi})}\\[5pt]
\mbox{where} &
\psiout[f] & \egdef \always
  \bigou_{\pos\in\setpos}\left ( p_{\pos} \et \bigou_{\jmov\in\Mov^{\Agf}} 
  \biget_{\ag\in\dom(f)}p_{\jmov_{\ag}}^{f(\ag)}\et \X\,
                          p_{\trans(\pos,\jmov)}\right ).
\end{array}
\]

$\psiout[f]$ checks that  each player $\ag$ in the domain of $f$
follows the strategy coded by the $p_\act^{f(\ag)}$. 

\begin{remark}
  \label{rem-tr-deterministic}
  If we consider the fragment of \SLi that only allows for
  deterministic strategies, the translation can be adapted by simply
  replacing formula $\phistrat$ above with its deterministic variant
 \[\phistratdet \egdef
\A\always\bigou_{\mov\in\Mov}(p_{\mov}^{\var}\et\biget_{\mov'\neq\mov}\neg
p_{\mov'}^{\var}),\]
which ensures that \emph{exactly one} action is chosen for strategy $\var$  in each finite
play, and thus that atomic propositions
$p_{\mov}^{\var}$  characterise a
deterministic strategy.
\end{remark}
   
To prove correctness of the translation, given a strategy $\strat$ and
a strategy variable $\var$ we
let  $\stratlab{\var}\egdef\{\plab[{p_\act^\var}]\mid
\act\in\Act\}$ be the family of $p_\act^\var$-\labelings for tree
$\unfold[\CKS_{\CGSi}]{}$ defined as follows: for each
finite play $\fplay$ in $\CGSi$ and $\act\in\Act$,
we let $\plab[{p_\act^\var}](\noeud_\fplay)\egdef1$ if $\act\in\strat(\fplay)$, 0 otherwise.
For a \labeled tree $\ltree$ with same domain as
$\unfold[\CKS_{\CGSi}]{}$ we write $\ltree\prodlab \stratlab{\var}$ for
$\ltree\prodlab \plab[{p_{\act_1}^\var}]\prodlab\ldots\prodlab \plab[{p_{\act_\maxmov}^\var}]$.

Given an infinite play $\iplay$ and a point $i\in\setn$, we also let
$\tpath_{\iplay,i}$ be the infinite path in
$\unfold[\CKS_{\CGSi}]{\sstate_{\pos_\init}}$ that starts in node
$\noeud_{\iplay_{\leq i}}$ and is defined as
$\tpath_{\iplay,i}\egdef\noeud_{\iplay_{\leq i}}\noeud_{\iplay_{\leq
    i+1}}\noeud_{\iplay_{\leq i+2}}\ldots$

Finally, for an assignment $\assign$ and
a partial   function $f:\Agf\partialto\Varf$, we say that $f$ is
\emph{compatible} with $\assign$ if
 $\dom(\assign)\inter \Agf=\dom(f)$ and  for all $a \in \dom(f)$,  $\assign(a) = \assign(f(a))$.

 \begin{proposition}
   \label{prop-redux}
   For every  state subformula $\phi$ and path subformula $\psi$ of
   $\Phi$, finite play $\fplay$, infinite play $\iplay$, point
   $i\in\setn$, for every  assignment $\assign$ variable-complete for
   $\phi$ (resp. $\psi$) and
partial   function $f:\Agf\partialto\Varf$ compatible with $\assign$, assuming
   also that no $\var_i$ in $\dom(\assign)\inter \Varf=\{\var_1,\ldots,\var_k\}$ is
 quantified in $\phi$ or $\psi$, we have
   \begin{align*}
\CGSi,\assign,{\fplay}\models\phi && \mbox{ if and only if } &&
  \unfold[\CKS_{\CGSi}]{\sstate_{\pos_\init}}\prodlab
  \stratlab[\assign(\var_1)]{\var_1}\prodlab\ldots \prodlab
  \stratlab[\assign(\var_k)]{\var_k},\noeud_{\fplay} \modelst
                                                              \tr[f]{\phi}\\
\CGSi,\assign,{\iplay},i\models\psi && \mbox{ if and only if } &&
  \unfold[\CKS_{\CGSi}]{\sstate_{\pos_\init}}\prodlab
  \stratlab[\assign(\var_1)]{\var_1}\prodlab\ldots \prodlab
  \stratlab[\assign(\var_k)]{\var_k},\tpath_{\iplay,i} \modelst
  \trp[f]{\psi}     
   \end{align*}

  In addition, $\CKS_{\CGSi}$ is of size linear in
$|\CGSi|$, and $\tr[f]{\phi}$ and $\trp[f]{\psi}$ are of size linear in $|\CGSi|^2+|\phi|$.
 \end{proposition}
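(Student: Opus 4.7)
The plan is to prove both equivalences simultaneously by structural induction on the state subformula $\phi$ and the path subformula $\psi$. Throughout, let $\ltree^{\assign}$ denote the labelled tree $\unfold[\CKS_{\CGSi}]{\sstate_{\pos_\init}}\prodlab\stratlab[\assign(\var_1)]{\var_1}\prodlab\ldots\prodlab\stratlab[\assign(\var_k)]{\var_k}$ appearing in the statement. The atomic case $\phi = p$ follows from $\lab'(\sstate_\pos)\cap\APf = \val(\pos)$ together with the assumption that the quantified atoms $p_\act^\var$ are disjoint from those occurring in $\Phi$. The Boolean cases are immediate from the inductive hypothesis, and the temporal cases for path formulas follow via the bijection $\fplay\mapsto\noeud_\fplay$ extended to $(\iplay,i)\mapsto\tpath_{\iplay,i}$, which preserves the successor and labelling structure on atoms of $\APf$.

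For the binding case $\phi = \bind{\var}\phi'$, the labelled tree $\ltree^{\assign}$ is unchanged, and since $\var\in\dom(\assign)$ the updated assignment $\assign[\ag\mapsto\assign(\var)]$ is compatible with $f[\ag\mapsto\var]$; the equivalence therefore reduces directly to the inductive hypothesis, and the unbinding case is analogous. For the outcome case $\phi = \Eout\psi$, I would use that the infinite plays $\iplay$ extending $\fplay$ in $\CGSi$ are in bijection with the infinite paths starting at $\noeud_\fplay$ in $\unfold[\CKS_{\CGSi}]{\sstate_{\pos_\init}}$ via $\iplay\mapsto\tpath_{\iplay,|\fplay|-1}$. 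Under this bijection, $\psiout[f]$ holds on $\tpath_{\iplay,|\fplay|-1}$ in $\ltree^{\assign}$ precisely when at every step the next position matches $\trans(\pos,\jmov)$ for some joint action $\jmov$ whose components for bound agents $\ag\in\dom(f)$ are advised by the $p_\act^{f(\ag)}$-labellings, i.e., by $\assign(f(\ag)) = \assign(\ag)$ (using compatibility of $f$ with $\assign$); this is exactly the defining condition of $\iplay\in\out(\assign,\fplay)$. Combined with the inductive hypothesis for $\psi$, the case closes.

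The hard part will be the strategy quantifier case $\phi = \Estrato{\obs}\phi'$, which hinges on the following indistinguishability correspondence: for all finite plays $\fplay,\fplay'$ in $\CGSi$, the nodes $\noeud_\fplay$ and $\noeud_{\fplay'}$ of $\unfold[\CKS_{\CGSi}]{\sstate_{\pos_\init}}$ are $\trobs{\obs}$-indistinguishable iff $\fplay\obseq\fplay'$. Indeed, by construction the $j$-th local component of $\sstate_\pos$ encodes the $\obs_j$-equivalence class of $\pos$, so $\{j\}$-indistinguishability of states of $\CKS_{\CGSi}$ coincides with $\obs_j$-equivalence of positions, and this lifts to finite plays by synchronous perfect recall. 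Since $\trobs{\obs}$ contains the index of $\obs$ itself, and the requirements coming from coarser observations are implied by that of $\obs$, the relation $\trobs{\obs}$-indistinguishability on nodes coincides with $\obs$-equivalence on plays. For the forward direction, given an $\obs$-strategy $\strat$ witnessing $\Estrato{\obs}\phi'$, the family $\stratlab[\strat]{\var}$ is $\trobs{\obs}$-uniform by the correspondence and satisfies $\phistrat$ because $\strat(\fplay)\neq\emptyset$; the inductive hypothesis applied to $\assign[\var\mapsto\strat]$ (still compatible with $f$, as $\var$ is not quantified in $\phi'$ by the assumption that each variable is quantified at most once in $\Phi$) then yields the equivalence for $\tr[f]{\phi'}$. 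Conversely, any $\trobs{\obs}$-uniform family of $p_\act^\var$-labellings satisfying $\phistrat$ defines the strategy $\strat(\fplay)\egdef\{\act \mid \plab[p_\act^\var](\noeud_\fplay) = 1\}$, which is non-empty by $\phistrat$ and $\obs$-uniform by the correspondence, hence an $\obs$-strategy, and the inductive hypothesis closes this direction.

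Finally, the size bounds follow from a direct accounting: $\CKS_{\CGSi}$ has $|\setpos|$ states and its transition relation has cardinality bounded by $|\CGSi|$. In the translation each subformula of $\Phi$ contributes a constant-size wrapper, except at each strategy quantifier (adding $O(|\Act|)$ quantifiers and a formula $\phistrat$ of size $O(|\Act|)$) and at each outcome quantifier (contributing $\psiout[f]$ of size $O(|\setpos|\cdot|\Mov|^{|\Agf|}) = O(|\CGSi|)$); summing over the at most $|\phi|$ subformulas yields the claimed bound.
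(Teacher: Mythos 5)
Your proposal is correct and follows essentially the same route as the paper's own proof: induction on the formula with the key work in the binding, strategy-quantifier, and outcome-quantifier cases, using the bijection $\fplay\mapsto\noeud_\fplay$ and the fact that $\trobs{\obs}$-indistinguishability of nodes coincides with $\obseq$ on plays so that $\obs$-uniform strategies correspond exactly to $\trobs{\obs}$-uniform $p_\act^\var$-labellings satisfying $\phistrat$. The only difference is presentational: you state the indistinguishability correspondence as an explicit intermediate claim, whereas the paper uses it implicitly inside the quantifier case.
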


 \begin{proof}
   The proof is by induction on $\phi$.
We detail the cases for binding,  strategy quantification and
outcome quantification, the others follow simply by definition of
$\CKS_{\CGSi}$ for atomic propositions and induction hypothesis for
remaining cases.

\halfline
For $\phi=\bind{\var}\phi'$, we have
$\CGSi,\assign,{\fplay}\models\bind{\var}\phi'$ if and only if 
$\CGSi,\assign[\ag\mapsto\assign(\var)],{\fplay}\models\phi'$.
The result follows by using the induction hypothesis with assignment
$\assign[\ag\mapsto\var]$ and function
 $f[a\mapsto\var]$. This is possible because $f[a\mapsto\var]$ is compatible with $\assign[\ag\mapsto\var]$: indeed
 $\dom(\assign[\ag\mapsto\var])\inter\Agf$ is equal to
 $\dom(\assign)\inter\Agf \union \{a\}$ which, by assumption, is equal
 to $\dom(f) \union \{a\}=\dom(f[a\mapsto
 \var])$. Also
 by assumption, for all $a'\in\dom(f)$, $\assign(a')=\assign(f(a'))$, and 
by definition $\assign[a\mapsto \assign(\var)](a)=\assign(\var)=\assign(f[a\mapsto\var](a))$.

\halfline
For $\phi=\Estrato{\obs}\phi'$, assume first that
$\CGSi,\assign,{\fplay}\models\Estrato{\obs}\phi'$. There exists an
$\obs$-uniform strategy $\strat$ such that
\[\CGSi,\assign[\var\mapsto \strat],\fplay\models \phi'.\] Since $f$ is
compatible with $\assign$, it is also compatible with assignment
$\assign'=\assign[\var\mapsto \strat]$. By assumption, no variable in
$\{\var_1,\ldots,\var_k\}$ is quantified in $\phi$, so that $\var\neq
\var_i$ for all $i$, and thus $\assign'(\var_i)=\assign(\var_i)$ for
all $i$; and because no strategy variable is
quantified twice in a same formula,
$\var$ is not quantified in $\phi'$, so that no variable in
$\{\var_1,\ldots,\var_k,\var\}$ is quantified in $\phi'$.
By induction hypothesis 
  \[\unfold[\CKS_{\CGSi}]{\sstate_{\pos_\init}}\prodlab
  \stratlab[\assign'(\var_1)]{\var_1}\prodlab\ldots \prodlab
  \stratlab[\assign'(\var_k)]{\var_k}\prodlab   \stratlab[\assign'(\var)]{\var},\noeud_{\fplay}
  \modelst \tr[f]{\phi'}.\]

  Because $\strat$ is $\obs$-uniform, each
  $\plab[{p_\act^\var}]\in\stratlab{\var}=\stratlab[\assign'(\var)]{\var}$ is $\trobs{\obs}$-uniform,
  and it follows that   \[\unfold[\CKS_{\CGSi}]{\sstate_{\pos_\init}}\prodlab
  \stratlab[\assign'(\var_1)]{\var_1}\prodlab\ldots \prodlab
  \stratlab[\assign'(\var_k)]{\var_k},\noeud_{\fplay}
  \modelst \exists^{\trobs{\obs}} p_{\mov_{1}}^{\var}\ldots
  \exists^{\trobs{\obs}}
  p_{\mov_{\maxmov}}^{\var}. \phistrat\et\tr[f]{\phi'}.\]
Finally, since $\assign'(\var_i)=\assign(\var_i)$ for all $i$, we
conclude that 
\[\unfold[\CKS_{\CGSi}]{\sstate_{\pos_\init}}\prodlab
  \stratlab[\assign(\var_1)]{\var_1}\prodlab\ldots \prodlab
  \stratlab[\assign(\var_k)]{\var_k},\noeud_{\fplay}
  \modelst \tr[f]{\Estrato{\obs}\phi'}.\]

For the other direction, assume
that
\[\unfold[\CKS_{\CGSi}]{\sstate_{\pos_\init}}\prodlab
  \stratlab[\assign(\var_1)]{\var_1}\prodlab\ldots \prodlab
  \stratlab[\assign(\var_k)]{\var_k},\noeud_{\fplay} \modelst
  \tr[f]{\phi},\] and recall that
$\tr[f]{\phi}=\exists^{\trobs{\obs}} p_{\mov_{1}}^{\var}\ldots
\exists^{\trobs{\obs}}
p_{\mov_{\maxmov}}^{\var}. \phistrat\et\tr[f]{\phi'}$.  Write
$\ltree=\unfold[\CKS_{\CGSi}]{\sstate_{\pos_\init}}\prodlab
\stratlab[\assign(\var_1)]{\var_1}\prodlab\ldots \prodlab
\stratlab[\assign(\var_k)]{\var_k}$. There exist
$\trobs{\obs}$-uniform $\plab[{p_\act^\var}]$-\labelings such that
\[\ltree\prodlab  \plab[{p_{\act_1}^\var}]\prodlab\ldots\prodlab \plab[{p_{\act_\maxmov}^\var}]
  \modelst \phistrat\et\tr[f]{\phi'}.\]
By $\phistrat$, these \labelings  
 code for a strategy $\strat$, and because they are
 $\trobs{\obs}$-uniform, $\strat$ is $\obs$-uniform. Let
 $\assign'=\assign[\var\mapsto \strat]$. For all $1\leq i\leq k$, by
 assumption $\var\neq \var_i$, and thus $\assign'(\var_i)=\assign(\var_i)$.
 The above can thus be rewritten
 \[\unfold[\CKS_{\CGSi}]{\sstate_{\pos_\init}}\prodlab
\stratlab[\assign'(\var_1)]{\var_1}\prodlab\ldots \prodlab
\stratlab[\assign'(\var_k)]{\var_k}\prodlab  \stratlab[\assign'(\var)]{\var}
  \modelst \phistrat\et\tr[f]{\phi'}.\]
 By induction hypothesis we have
$\CGSi,\assign[\var\mapsto
\strat],\fplay\models \phi'$, hence $\CGSi,\assign,\fplay\models
\Estrato{\obs}\phi'$.

\halfline
For $\phi=\Eout\psi$,
assume first that $\CGSi,\assign,{\fplay}\models\E\psi$. 
There exists a play $\iplay\in\out(\assign,\fplay)$ such that
$\CGSi,\assign,\iplay,|\fplay|-1\modelsSL \psi$. By induction
hypothesis,
$\unfold[\CKS_{\CGSi}]{\sstate_{\pos_\init}}\prodlab
  \stratlab[\assign(\var_1)]{\var_1}\prodlab\ldots \prodlab
  \stratlab[\assign(\var_k)]{\var_k},\tpath_{\iplay,|\fplay|-1} \modelst
  \trp[f]{\psi}$. Since $\iplay$ is an outcome of $\assign$, each agent $a\in\dom(\assign)\inter\Agf$ 
follows strategy $\assign(a)$ in $\iplay$.
Because  $\dom(\assign)\inter \Agf=\dom(f)$ and for all $a \in \dom(f)$,
  $\assign(a) = \assign(f(a))$, each agent $a\in\dom(f)$ follows
the  strategy $\assign(f(a))$, which is coded by atoms
$p_\mov^{f(\ag)}$ in the translation of $\Phi$. Therefore $\tpath_{\iplay,|\fplay|-1}$ also
satisfies $\psiout$, hence $\unfold[\CKS_{\CGSi}]{\sstate_{\pos_\init}}\prodlab
  \stratlab[\assign(\var_1)]{\var_1}\prodlab\ldots \prodlab
  \stratlab[\assign(\var_k)]{\var_k},\tpath_{\iplay,|\fplay|-1} \modelst
  \psiout \et   \trp[f]{\psi}$, and we are done.

  For the other direction, assume that 
  $\unfold[\CKS_{\CGSi}]{\sstate_{\pos_\init}}\prodlab
  \stratlab[\assign(\var_1)]{\var_1}\prodlab\ldots \prodlab
  \stratlab[\assign(\var_k)]{\var_k},\noeud_\fplay \modelst
  \E(\psiout[f] \et   \trp[f]{\psi})$.
There exists a path $\tpath$ in $\unfold[\CKS_{\CGSi}]{\sstate_{\pos_\init}}\prodlab
  \stratlab[\assign(\var_1)]{\var_1}\prodlab\ldots \prodlab
  \stratlab[\assign(\var_k)]{\var_k}$ starting in
node $\noeud_\fplay$ that satisfies both $\psiout[f]$ and $\trp[f]{\psi}$.
By construction of $\CKS_{\CGSi}$ there exists an infinite play $\iplay$
such that $\iplay_{\leq |\fplay|-1}=\fplay$ and $\tpath=\tpath_{\iplay,|\fplay|-1}$.
By induction hypothesis, $\CGSi,\assign,\iplay,|\fplay|-1 \modelsSL \psi$.
Because $\tpath_{\iplay,|\fplay|-1}$ satisfies $\psiout[f]$, $\dom(\assign)\inter \Agf=\dom(f)$, and for all $a \in \dom(f)$,
  $\assign(a) = \assign(f(a))$, it is also the case that
  $\iplay\in\out(\assign,\fplay)$, 
hence  $\CGSi,\assign,\fplay \modelsSL \Eout\psi$.

The size of $\CKS_{\CGSi}$, $\tr[f]{\phi}$ and $\trp[f]{\psi}$ are easily verified.
 \end{proof}
 
 Applying Proposition~\ref{prop-redux} to the sentence $\Phi$, $\fplay=\pos_\init$, any assignment $\assign$, and
 the empty function $\emptyset$, we get:
 \[\CGSi \models \Phi \quad \mbox{if and only if}\quad
\unfold[\CKS_{\CGSi}]{s_{\pos_{\init}}} \models
 \tr[\emptyset]{\Phi}.\]

\head{Preserving hierarchy}
To complete the proof of Theorem~\ref{theo-SLi} it remains to check that $\tr[\emptyset]{\Phi}$ is a
hierarchical \QCTLsi formula, which is the case because
 $\Phi$ is hierarchical in
$\CGSi$ and for every two observations $\obs_{i}$ and $\obs_{j}$ in $\Obsf$ such that
$\obsint(\obs_{i})\subseteq\obsint(\obs_{j})$, by definition of $\trobs{\obs_{k}}$
we have that $\trobs{\obs_{i}}\subseteq \trobs{\obs_{j}}$.

\subsection{Complexity}
\label{sec-complexity-SL}

We now establish the complexity of model checking hierarchical
instances of \SLi. As we did for \QCTLsi, we first define the
simulation depth of \SLi state formulas. In the following inductive
definition,  $\obsint_\phi$  denotes the intersection of all
indistinguishability relations used in $\phi$:
$\obsint_\phi\egdef\inter_{\obs\in\phi} \obsint(\obs)$, with the
empty intersection being defined as the identity relation (perfect information). Also, for a
path formula $\psi$, $\max(\psi)$ is the set of maximal state
subformulas in $\psi$.

      \[
        \begin{array}{lcc}%
          \ndd(p) \egdef (0,\nd)& \hspace{5cm} & \ndd(\neg \phi) \egdef
                                  (\ndd_k(\phi),\alt)\\[7pt]
          \multicolumn{3}{l}{
          \ndd(\phi_1\ou\phi_2) \egdef \left
          (\max_{i\in\{1,2\}}\ndd_k(\phi_i), x \right ),}\\[5pt]
          \multicolumn{3}{r}{
          \mbox{where
          }x=    \begin{cases}
            \nd & \mbox{if }\ndd_x(\phi_1)=\ndd_x(\phi_2)=\nd\\
            \alt & \mbox{otherwise}
          \end{cases}
          }\\[17pt]
          \multicolumn{3}{l}{
          \ndd( \Estrato{\obs}\phi)\egdef (k,\nd),}\\[5pt]
          \multicolumn{3}{r}{
          \mbox{where }
          k=\begin{cases}
            \ndd_k(\phi) & \mbox{if }\ndd_x(\phi)=\nd \mbox{ and
            }\obsint(\obs)=\obsint_\phi\\
            \ndd_k(\phi)+1 & \mbox{otherwise}
          \end{cases}
                             }\\[7pt]
          \ndd(\bind{\var}\phi) \egdef \ndd(\phi)\\[7pt]
           \multicolumn{3}{l}{
          \bas{\ndd(\Eout\psi)\egdef
          \begin{cases}
            (0,\nd) &\mbox{if }\psi\in\LTL\\
            (\max_{\phi\in\max(\psi)}\ndd_k(\phi),\alt) &\mbox{otherwise}
          \end{cases}
                                                          }}
        \end{array}
      \]

      \begin{proposition}
        \label{prop-compl-SLi}
The model-checking problem for hierarchical instances of \SLi of
        simulation depth at most $k$  is  \kEXPTIME[(k+1)]-complete. 
      \end{proposition}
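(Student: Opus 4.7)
The plan is to derive both bounds by going through the hierarchical fragment of \QCTLsi via the translation developed in Section~\ref{sec-reduction}.

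\textbf{Upper bound.} Given a hierarchical \SLi-instance $(\CGSi,\Phi)$ with $\ndd_k(\Phi)\leq k$, I would invoke Proposition~\ref{prop-redux} to produce an equivalent hierarchical \QCTLsi-instance $(\CKS_{\CGSi},\tr[\emptyset]{\Phi})$, using that hierarchy is preserved (argument at the end of Section~\ref{sec-reduction}) and that the sizes of $\CKS_{\CGSi}$ and $\tr[\emptyset]{\Phi}$ are polynomial in $|\CGSi|+|\Phi|$. The core technical step is to show, by induction on $\Phi$, that $\ndd_k(\tr[\emptyset]{\Phi})\leq \ndd_k(\Phi)$. The two definitions of $\ndd$ were tailored to align through the translation, and in every case except the strategy quantifier the correspondence is immediate (atoms, booleans and $\Eout$ with an \LTL path formula translate to their direct counterparts, while $\bind{\var}$ is erased). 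The strategy-quantifier case turns $\Estrato{\obs}\phi$ into a block $\existsp{\trobs{\obs}} p_{\mov_1}^\var\ldots\existsp{\trobs{\obs}} p_{\mov_\maxmov}^\var(\phistrat\et\tr[f]{\phi})$: although this block contains $\maxmov$ propositional quantifiers, hierarchy ensures that every observation $\obs'$ occurring in $\phi$ satisfies $\trobs{\obs}\subseteq \trobs{\obs'}$, so $\trobs{\obs}\subseteq\Iphi[\tr[f]{\phi}]$ and only the innermost quantifier of the block possibly raises $\ndd_k$, exactly mirroring the $+1$ in the \SLi rule for $\Estrato{\obs}\phi$. The \kEXPTIME[(k+1)] upper bound then follows by applying Proposition~\ref{prop-compl-QCTL} to $(\CKS_{\CGSi},\tr[\emptyset]{\Phi})$.

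\textbf{Lower bound.} I would inherit this from the \kEXPTIME[(k+1)]-hardness of model-checking \EQkCTLs established in the proof of Proposition~\ref{prop-compl-QCTL}. Since \EQkCTLs is a perfect-information fragment of \QCTLs (and hence trivially hierarchical), it suffices to encode its instances as hierarchical \SLi-instances of the same simulation depth. This is done by replacing each propositional quantifier $\exists p$ by a strategy quantifier for a dedicated agent with perfect-information observation, whose action at each step determines the truth value of $p$ at the corresponding node; one augments the \CKS with a state component recording this agent's last move so that the atom $p$ can be read directly from the state. Under perfect information the resulting instance is trivially hierarchical, and the encoding is easily seen to preserve simulation depth, yielding the announced \kEXPTIME[(k+1)] lower bound for hierarchical \SLi.

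\textbf{Main obstacle.} The delicate point is the simulation-depth accounting in the strategy-quantifier case, and specifically the alternation contributed by $\phistrat=\A\always\bigou_\mov p_\mov^\var$, which as a state formula has $\ndd_x=\alt$. One must verify that the innermost propositional quantifier of the translated block absorbs this alternation at the cost of exactly one simulation, and that the remaining propositional quantifiers of the block are absorbed for free, relying on the observation that $\phistrat$ uses no second-order quantifier and that, by hierarchy, the observation condition $\trobs{\obs}=\Iphi[\phistrat\et\tr[f]{\phi}]$ holds at every level of the block. Getting this counting tight, uniformly over all subformulas, is what is needed to translate the bound $\ndd_k(\Phi)\leq k$ through Proposition~\ref{prop-compl-QCTL} without any loss in the height of the exponential tower.
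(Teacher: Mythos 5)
Your overall route is the paper's: push the instance through the translation of Section~\ref{sec-reduction} and invoke Proposition~\ref{prop-compl-QCTL} for the upper bound, and pull the \kEXPTIME[(k+1)]-hardness of \EQkCTLs back through a \QCTLs-to-\SLi encoding for the lower bound (your ``record the last move in the state'' gadget differs only cosmetically from the side-position gadget of Section~\ref{sec-QCTL-to-SL}, and either works). The problem is in the upper bound: the inductive invariant you announce, $\ndd_k(\tr[\emptyset]{\Phi})\leq\ndd_k(\Phi)$, is false, and your ``main obstacle'' paragraph does not repair it. The \SLi rule for $\Estrato{\obs}\phi$ adds $+1$ only in the \emph{otherwise} branch; when $\ndd_x(\phi)=\nd$ and $\obsint(\obs)=\obsint_\phi$ it adds nothing. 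But the translated block always pays: $\phistrat=\A\always\bigou_{\mov\in\Mov}p_{\mov}^{\var}$ has $\ndd_x=\alt$, so $\ndd_x(\phistrat\et\tr[f]{\phi})=\alt$ even when $\ndd_x(\tr[f]{\phi})=\nd$, and the innermost quantifier of the block then increments $\ndd_k$ by the \QCTLsi rule -- this is exactly the ``cost of one simulation'' you concede, and it does \emph{not} mirror the \SLi rule in the no-increment case. The discrepancy accumulates: nested perfect-information strategy quantifiers over \LTL goals have \SLi simulation depth $0$, yet under your accounting each translated block raises the depth by one. Feeding the translated formula to Proposition~\ref{prop-compl-QCTL} then loses at least one exponential, and in general as many as the quantifier nesting.

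The missing idea -- which the paper supplies -- is that this extra simulation is an artefact of the syntactic measure and is not actually required by the automata construction: $\phistrat$ is a fixed formula recognised by a two-state \emph{deterministic} tree automaton, and the product of a deterministic automaton with a nondeterministic one is again nondeterministic, of size the product of the sizes. Hence the conjunction $\phistrat\et\tr[f]{\phi}$ does not introduce alternation when the automaton built for $\tr[f]{\phi}$ is nondeterministic, and the projections for the block of atoms $p_{\mov}^{\var}$ can be performed without a prior simulation. This is an argument inside the construction underlying Proposition~\ref{prop-compl-QCTL}, not a computation of $\ndd$ on the translated formula; no purely syntactic bookkeeping of $\tr[\emptyset]{\Phi}$ yields the tight bound.
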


      \begin{proof}
        The upper bounds follow from the fact that the translated
        formulas in our reduction have essentially the same simulation
        depth as the original ones. However this is not quite right,
        because in the case where $\ndd_x(\phi)=\nd$ and
            $\obsint(\obs)=\obsint_\phi$ we have
            $\ndd(\Estrato{\obs}\phi)=(\ndd_k(\phi),\nd)$, while
            $\ndd(\tr[f]{\Estrato{\obs}\phi})=(\ndd_k(\tr[f]{\phi})+1,\nd)$:
            indeed, while it is the case that
      $\obsint(\obs)=\obsint_\phi$ implies that
      $\trobs{\obs}=\Iphi[{\tr[f]{\phi}}]$, the translation introduces
      a conjunction with $\phistrat$, and even when
      $\ndd_x(\tr[f]{\phi})=\nd$, we have $\ndd_x(\phistrat\et\tr[f]{\phi})=\alt$.
According to Proposition~\ref{prop-compl-QCTL}, this should thus
induce an additional exponential to check the translated
formula. However, this can be avoided by noticing that the fixed
formula $\phistrat= \A\always \bigou_{\mov\in\Mov}p_{\mov}^{\var}$  can be checked by a
simple \emph{deterministic} tree automaton with two states
$\tq_{\text{check}}$ and $\tq_{\text{rej}}$: the automaton starts in
state $\tq_{\text{check}}$, which is accepting (it has parity zero); when it visits a node $\noeud$ in state $\tq_{\text{check}}$,
if $\lab(\noeud)$ satisfies $\bigou_{\mov\in\Mov}p_{\mov}^{\var}$,
then the automaton sends state $\tq_{\text{check}}$ to all children of
$\noeud$,
otherwise it sends the  state $\tq_{\text{rej}}$ to all
children. State $\tq_{\text{rej}}$ is rejecting (it has parity one) and is a sink:
it sends itself to all children, independently on the label of the
visited node.
If we restrict \SLi to deterministic strategies, the same observation can be made:
the automaton that checks formula $\phistratdet =
\A\always\bigou_{\mov\in\Mov}(p_{\mov}^{\var}\et\biget_{\mov'\neq\mov}\neg
p_{\mov'}^{\var})$ 
is the same as the one described above, except that it checks whether
$\bigou_{\mov\in\Mov}(p_{\mov}^{\var}\et\biget_{\mov'\neq\mov}\neg
p_{\mov'}^{\var})$ is satisfied by the label of the current node.

Given two 
tree automata $\ATA_1$ and $\ATA_2$, one deterministic and one
nondeterministic, one can easily build a nondeterministic automaton
$\ATA_1\inter\ATA_2$ of size $|\ATA_1|\times |\ATA_2|$ that accepts
the intersection of their languages, so that in this case the
conjunction does not introduce alternation, and thus we do not need an
additional simulation before projecting to guess the strategy.
We could refine the notion of simulation depth to reflect this, but we
find that it would become very cumbersome for little added benefit, so
we keep this observation in this proof.

      The lower bounds are inherited from \QCTLsi thanks to the
polynomial      reduction presented in Section~\ref{sec-QCTL-to-SL},
which preserves simulation depth.
      \end{proof}
      
We point out that all instances of the model-checking problem for the
perfect-information fragment are hierarchical, and thus  this result
provides improved upper-bounds for \SL, which was
only known to be in
\kEXPTIME[k] for formulas of length at most $k$~\cite{DBLP:journals/tocl/MogaveroMPV14}.
Also the lower bounds for \QCTLsi are inherited directly from the
perfect-information fragment \QCTLs,
which reduces to the perfect-information fragment of \SLi following
the construction from Section~\ref{sec-QCTL-to-SL}.
Therefore the lower bounds  hold already for the perfect-information fragment  of \SLi.
Note however that this does  not provide lower bounds for the usual, linear-time variant
of Strategy Logic, where path quantifiers in \QCTLs formulas must be
simulated with strategy quantifications which increase the
simulation depth of the resulting Strategy Logic formulas.
The exact complexity of the linear-time variant is not known, even in
the perfect-information case. 


\halfline
\head{Simulation number} The intuition behind the alternation number
as considered in~\cite{DBLP:journals/tocl/MogaveroMPV14} is to refine the  classic alternation depth between existential
and universal quantifiers by observing that subsentences of a sentence $\Phi$
to model-check can be treated independently thanks to a
bottom-up labelling algorithm: innermost sentences are evaluated in
all states of the model and replaced in $\Phi$ by atomic propositions
that label the states where they hold. The alternation number of
$\Phi$ is the maximum alternation depth of the successive subsentences
that are treated by this bottom-up procedure, and it determines the
complexity of the overall model-checking procedure.

However, as discussed in Remark~\ref{rem-sentences}, the semantics of
the outcome quantifier  makes
sentences sensitive to the assignment in which they are evaluated. As
a result, to define the notion of alternation number in our setting,
we introduce a notion of \emph{independent subsentence}. Intuitively,
a subsentence $\phi$ of a sentence $\Phi$ is \emph{independent} if 
it redefines or unbinds the strategies of all players who are bound to
a strategy when $\phi$ is reached in the evaluation of $\Phi$.
More precisely, we say that an agent $\ag$ is \emph{bound} in a syntactic
subformula $\phi$ of $\Phi$ if the path that
leads to $\phi$ in $\Phi$'s syntactic tree  contains a binding
operator $\bind{\var}$ for $\ag$ which is not followed by an unbinding
$\unbind$ for her. A subsentence $\phi$ of $\Phi$ is
\emph{independent} if all agents that are bound in $\phi$ are either
rebound by an operator  $\bind{\var}$ or unbound by an operator $\unbind$ before any outcome
quantifier is met in $\phi$. In an independent subsentence $\phi$, the semantics
of the outcome quantifier does not depend on strategies that are
quantified outside $\phi$, and in fact a subsentence $\phi$ of $\Phi$ is
  independent if and only if the formula that corresponds to $\phi$ in
  $\tr[\emptyset]{\Phi}$ is a subsentence of $\tr[\emptyset]{\Phi}$. 

Similarly to what we did for \QCTLsi we now define the
\emph{simulation number} $\ndn(\Phi)$ of an \SLi sentence $\Phi$
as the maximum of the simulation depths for independent
subsentences, where strict independent subsentences are counted as atoms.

\begin{lemma}
  \label{lem-corres-number}
          For every hierarchical instance $(\CGSi,\Phi)$ of \SLi, 
$\ndn(\Phi)=\ndn(\tr[\emptyset]{\Phi})$.
\end{lemma}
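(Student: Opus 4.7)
The plan is to exploit the bijection, stated just before the lemma, between independent subsentences of $\Phi$ and subsentences of $\tr[\emptyset]{\Phi}$, associating to each independent subsentence $\phi$ (together with the partial function $f$ accumulated along the traversal of $\Phi$) its translation $\tr[f]{\phi}$. Since $\ndn(\Phi)$ and $\ndn(\tr[\emptyset]{\Phi})$ are both defined as the maximum simulation depth over the corresponding family (with strict independent subsentences, resp. strict subsentences, treated as atoms), it would suffice to prove $\ndd(\phi)=\ndd(\tr[f]{\phi})$ for every such corresponding pair. This I would establish by structural induction on $\phi$, with strict independent subsentences treated as atoms on both sides of the equation.

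The atomic, boolean, temporal, and binding cases follow directly from the inductive hypothesis, since $\tr[\cdot]{\cdot}$ distributes over these connectives and the binding operator only updates $f$ without introducing any new operator in the translation. For the outcome quantifier $\Eout\psi$, translated as $\E(\psiout[f]\wedge\trp[f]{\psi})$, the key observation is that $\psiout[f]$ is itself an \LTL formula, so $\psiout[f]\wedge\trp[f]{\psi}$ lies in \LTL precisely when $\trp[f]{\psi}$ does, and otherwise the set of maximal state subformulas of the conjunction coincides with that of $\trp[f]{\psi}$; hence the two clauses in the definition of $\ndd(\E\cdot)$ match on both sides.

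The hard case is the strategy quantifier $\Estrato{\obs}\phi$, whose translation is a block of $|\Mov|$ same-observation existential propositional quantifiers applied to $\phistrat\wedge\tr[f]{\phi}$. Two ingredients are needed. First, the equivalence $\obsint(\obs)=\obsint_\phi$ iff $\trobs{\obs}=\Iphi[\tr[f]{\phi}]$, which follows from the definition $\trobs{\obs_i}=\{j:\obsint(\obs_i)\subseteq\obsint(\obs_j)\}$ together with the hierarchical hypothesis on $(\CGSi,\Phi)$ and a straightforward inductive computation of $\Iphi[\tr[f]{\phi}]$ in terms of the observations appearing in $\phi$. Second, and more delicate, the fact that the fixed auxiliary formula $\phistrat$ should not contribute to simulation depth: formally $\ndd_x(\phistrat)=\alt$, which would spuriously force an extra simulation in the favourable case $\ndd_x(\tr[f]{\phi})=\nd$ and $\trobs{\obs}=\Iphi[\tr[f]{\phi}]$. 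This is exactly the discrepancy already flagged in the proof of Proposition~\ref{prop-compl-SLi}: since $\phistrat$ is recognised by a fixed deterministic tree automaton, conjoining it with $\tr[f]{\phi}$ introduces no alternation at the automata level, so the proof would proceed modulo the same tacit refinement of $\ndd$ on \QCTLsi under which $\ndd(\phistrat\wedge\tr[f]{\phi})=\ndd(\tr[f]{\phi})$. Under this refinement the successive $|\Mov|$ propositional quantifications all share the observation $\trobs{\obs}$ and operate on a formula of type $\nd$, so each preserves $\ndd$; in the unfavourable case both the \SLi and \QCTLsi sides add exactly one unit of simulation depth; and in either case $\ndd(\Estrato{\obs}\phi)=\ndd(\tr[f]{\Estrato{\obs}\phi})$, completing the induction and hence the lemma.
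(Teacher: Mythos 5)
The paper states Lemma~\ref{lem-corres-number} without proof, so there is no official argument to compare against; your proposal is a reconstruction of what the authors evidently intend. Your route — the bijection between independent subsentences of $\Phi$ and subsentences of $\tr[\emptyset]{\Phi}$, followed by a structural induction proving $\ndd(\phi)=\ndd(\tr[f]{\phi})$ with strict independent subsentences treated as atoms on both sides — is exactly the natural one, and your treatment of the individual cases matches the paper's definitions. In particular you correctly identify the one issue the paper itself acknowledges (in the proof of Proposition~\ref{prop-compl-SLi}): taken literally, $\ndd_x(\phistrat\et\tr[f]{\phi})=\alt$ even when $\ndd_x(\tr[f]{\phi})=\nd$, so the translated quantifier block picks up a spurious $+1$; the lemma as stated only holds modulo the tacit refinement under which conjunction with the fixed deterministic $\phistrat$ does not count as introducing alternation. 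Your resolution is the same as the paper's, and being explicit that the lemma needs this refinement is if anything more honest than the paper.

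The one step you should not dismiss as ``straightforward'' is the biconditional $\obsint(\obs)=\obsint_\phi$ iff $\trobs{\obs}=\Iphi[{\tr[f]{\phi}}]$. The paper asserts only the left-to-right implication (which is all that is needed for hierarchy preservation), and your induction genuinely needs both directions, since a failure of the right-to-left direction would make the \QCTLsi side strictly smaller. The converse does seem provable from the hierarchy assumption (inner observations are finer than $\obs$, so $\trobs{\obs}\subseteq\trobs{\obs'}$ for every $\obs'$ occurring in $\phi$, and one then argues that strict inclusion of $\obsint(\obs)$ in some $\obsint(\obs')$ forces strict inclusion of the index sets), but this deserves to be written out rather than asserted. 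Moreover there is a degenerate case where even the forward implication fails syntactically: if $\phi$ contains no strategy quantifier then $\Iphi[{\tr[f]{\phi}}]$ is the full component set $[n+1]$ of $\CKS_{\CGSi}$, including the position component, which no $\trobs{\obs}$ ever contains; so a perfect-information quantifier over a quantifier-free body gets depth $0$ on the \SLi side but $1$ on the \QCTLsi side under the literal definitions. This is the same flavour of discrepancy as the $\phistrat$ one (harmless at the automata level, since $\trobs{\obs}$-indistinguishability coincides with $[n+1]$-indistinguishability when $\obsint(\obs)$ is the identity, so no narrowing/simulation is actually needed), but an honest proof of the lemma has to either fold it into the same refinement or adjust the definition of $\trobs{\obs}$; as written, your argument silently assumes it away.
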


The following then follows from Proposition~\ref{prop-redux}, 
Lemma~\ref{lem-corres-number} and
Proposition~\ref{prop-compl-QCTL-refined}.

\begin{proposition}
  \label{prop-compl-SLi-refined}
  The model-checking problem for hierarchical instances of \SLi of
        simulation number at most $k$  is  \kEXPTIME[(k+1)]-complete. 
\end{proposition}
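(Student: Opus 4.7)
The plan is to derive both bounds from the \QCTLsi result. For the upper bound, given a hierarchical instance $(\CGSi,\Phi)$ with $\ndn(\Phi)\leq k$, I would apply the translation of Section~\ref{sec-reduction} to obtain the \QCTLsi instance $(\CKS_{\CGSi},\tr[\emptyset]{\Phi})$. By Proposition~\ref{prop-redux}, $\CGSi\models\Phi$ iff $\unfold[\CKS_{\CGSi}]{s_{\pos_\init}}\modelst\tr[\emptyset]{\Phi}$, the size of $\CKS_{\CGSi}$ is linear in $|\CGSi|$, and $|\tr[\emptyset]{\Phi}|$ is polynomial in $|\CGSi|$ and $|\Phi|$. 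Moreover, as shown at the end of Section~\ref{sec-reduction}, the translation preserves hierarchy, so $\tr[\emptyset]{\Phi}\in\QCTLsih$. By Lemma~\ref{lem-corres-number}, $\ndn(\tr[\emptyset]{\Phi})=\ndn(\Phi)\leq k$, so Proposition~\ref{prop-compl-QCTL-refined} yields a \kEXPTIME[(k+1)] decision procedure.

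One subtlety, already encountered in the proof of Proposition~\ref{prop-compl-SLi}, must be addressed: the translation wraps each strategy quantification with a conjunction with $\phistrat$, which could in principle bump simulation depth (and hence the notion underlying simulation number) by making an otherwise nondeterministic subautomaton alternating. As observed there, $\phistrat$ is a fixed safety formula that is recognised by a two-state \emph{deterministic} parity tree automaton, and the intersection of a deterministic automaton with a nondeterministic one is again nondeterministic without any simulation step. Consequently Lemma~\ref{lem-corres-number} genuinely matches the simulation numbers on both sides of the reduction, so the bound on $\ndn(\tr[\emptyset]{\Phi})$ can be fed into Proposition~\ref{prop-compl-QCTL-refined} without loss.

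For the lower bound, I would invoke the polynomial reduction from \QCTLsi to \SLi of Section~\ref{sec-QCTL-to-SL}, used also in the proof of Proposition~\ref{prop-compl-SLi} to transfer the hardness results of Proposition~\ref{prop-compl-QCTL-refined}. The key point to verify is that this reduction preserves the simulation number: a propositional quantifier $\existsp{\cobs}$ is simulated by a strategy quantifier with an observation encoding $\cobs$, and temporal operators are evaluated on outcomes of fresh strategies, in such a way that subsentences of the source \QCTLs formula correspond precisely to independent subsentences of the resulting \SLi formula (their simulation depths are equal by the same structural induction used for simulation depth in the proof of Proposition~\ref{prop-compl-SLi}). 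As the \QCTLsih hardness in Proposition~\ref{prop-compl-QCTL-refined} already holds for simulation number $k$, the \kEXPTIME[(k+1)]-hardness transfers to hierarchical instances of \SLi of simulation number $k$.

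The main obstacle is the bookkeeping underlying Lemma~\ref{lem-corres-number}: one has to check that independent subsentences of $\Phi$ are in one-to-one correspondence with subsentences of $\tr[\emptyset]{\Phi}$ (as already noted in the paper), and that the simulation depth is computed identically in the two formalisms once $\phistrat$ is treated as a deterministic check. Everything else is either a direct appeal to previously established results or a routine verification that the reductions in both directions have polynomial size and preserve the relevant structural parameters.
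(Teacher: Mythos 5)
Your proposal is correct and follows essentially the same route as the paper, which derives the result directly from Proposition~\ref{prop-redux}, Lemma~\ref{lem-corres-number} and Proposition~\ref{prop-compl-QCTL-refined}, with the lower bound transferred via the reduction of Section~\ref{sec-QCTL-to-SL} and the $\phistrat$ subtlety handled exactly as in the proof of Proposition~\ref{prop-compl-SLi}. Your write-up simply makes explicit the bookkeeping that the paper leaves implicit in its one-line justification.
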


We now compare the latter result with the complexity of model checking
\SLNG, the
nested goal fragment of Strategy Logic with perfect information (we
refer the interested reader to~\cite{DBLP:journals/tocl/MogaveroMPV14}
for a definition of this fragment). It is
established in~\cite{DBLP:journals/iandc/ChatterjeeHP10,DBLP:journals/tocl/MogaveroMPV14} that this
problem is in \kEXPTIME[(k+1)] for formulas of \emph{alternation number}
$k$. We remark that the simulation number of an \SLNG
formula translated in our branching-time version of \SL (this is done by adding outcome
  quantifiers between bindings and temporal operators) is equal to its
alternation number plus one, and thus
Proposition~\ref{prop-compl-SLi-refined} gives a \kEXPTIME[(k+2)]
upper bound for \SLNG formulas of alternation number
$k$. In~\cite{DBLP:journals/iandc/ChatterjeeHP10,DBLP:journals/tocl/MogaveroMPV14} the extra exponential
is avoided by resorting to universal and nondeterministic tree
automata, depending on whether the innermost strategy quantification
is existential or universal, to deal with temporal formulas.
Thus, the innermost strategy quantification can be dealt with without
incurring an exponential blowup.

The same thing cannot be done for
\SLi, for two reasons. The first one is that in general the innermost strategy quantification may have
imperfect information and thus require a narrowing of the automaton;
this operation introduces alternation, which needs to be removed
at the cost of one exponential before dealing with strategy quantification. The second reason is that
even when the innermost strategy has perfect information, the outcome
quantifier that we introduce in Strategy Logic allows the expression
of \CTLs formulas which cannot be dealt with by nondeterministic and
universal automata as is done in~\cite{DBLP:journals/iandc/ChatterjeeHP10,DBLP:journals/tocl/MogaveroMPV14}.



\section{Comparison with related logics}
\label{sec-SLi-comparison}


In this section we first show that \SLi subsumes \SL and the main
imperfect-information extensions of \ATL. Then we show that model checking
Coordination Logic (\CL) reduces to model checking hierarchical
instances of \SLi  where the
truth of all atomic propositions in the model is known by all agents
(or more precisely, all observations in the concurrent game structures
are fine enough to observe the truth value of all atomic propositions).

\subsection{Comparison with  \ATL}
\label{subsec:SLi-comparison}

The main difference between \SL and \ATL-like strategic logics is that
in the latter a strategy is always bound to some player, while in the
former bindings and quantifications are separated. This separation
adds expressive power, e.g., one can bind the same strategy to
different players. Extending \ATL with imperfect-information is done
by giving each player an indistinguishability relation that its
strategies must respect~\cite{BJ14}.  In \SLi
instead each strategy $x$ is assigned an
indistinguishability relation $o$ when it is quantified.  Associating
observations to strategies rather than players
allows us to obtain a logic \SLi that is a clean generalisation of
(perfect-information) \SL,
and subsumes imperfect-information extensions of \ATLs that
associate observations to players.
Concerning \SL, it is rather easy to see that every sentence in \SL has an
equivalent in the fragment of \SLi with deterministic strategies where all observation symbols are
interpreted as perfect information.
We now prove that \SLi also subsumes \ATLs with imperfect information.

\begin{proposition}
  \label{prop-subsume-ATL}
For every \ATLsi formula\footnote{See~\cite{BJ14} for the definition of \ATLsi, where subscript i
refers to ``imperfect information'' and subscript R to ``perfect
recall''. Also, we consider the so-called \emph{objective
    semantics} for \ATLsi.} $\phi$ there is an \SLi formula $\phi'$
  such that for every \CGSi $\CGSi$ there is a \CGSi $\CGSi'$ such that $\CGSi
  \models \phi$ if, and only if, $\CGSi' \models \phi'$.
\end{proposition}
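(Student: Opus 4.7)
The plan is to build from the \ATLsi structure $\CGSi$, which equips each agent $a\in\Agf$ with an indistinguishability relation $\sim_a$, a \CGSi $\CGSi'$ that keeps all the components of $\CGSi$ intact but sets $\Obsf \egdef \Agf$ and $\obsint(a) \egdef\ \sim_a$, and then to translate every \ATLsi formula into an equivalent \SLi formula evaluated over $\CGSi'$.

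The translation $\tau$ is defined by mutual induction on state and path formulas. Atomic propositions, Boolean connectives and the temporal operators $\X$ and $\until$ are transported verbatim. For a coalition formula $\langle A\rangle\psi$ with $A=\{a_1,\ldots,a_n\}$ and $\Agf\setminus A=\{b_1,\ldots,b_m\}$, I would set
\[
\tau(\langle A\rangle\psi)\ \egdef\ (b_1,\unb)\cdots(b_m,\unb)\,\Estrato[x_{a_1}]{a_1}\cdots\Estrato[x_{a_n}]{a_n}\,(a_1,x_{a_1})\cdots(a_n,x_{a_n})\,\Aout\,\tau(\psi).
\]
The leading block of unbindings resets the context of any agent that may have been bound by an outer coalition operator, matching the \ATLsi convention that each $\langle A\rangle$ considers fresh, arbitrary behaviour for its opponents. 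The strategy quantifiers use the observation symbol $a_i$, which in $\CGSi'$ is interpreted as $\sim_{a_i}$, so the quantified strategies are exactly the $\sim_{a_i}$-uniform ones required by \ATLsi.

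Correctness is then established by induction on $\phi$, showing that for every assignment $\assign$ and finite play $\fplay$,
\[
\CGSi,\fplay\models_{\text{ATL}}\phi\ \iff\ \CGSi',\assign,\fplay\models_{\text{SL}}\tau(\phi).
\]
The assignment on the \SLi side is immaterial because every subformula $\tau(\langle A\rangle\psi)$ begins by unbinding every agent. The Boolean and temporal cases follow directly from the induction hypothesis. For a coalition operator, the inductive step unfolds the \SLi semantics: after the unbindings the assignment has empty agent domain; after the $n$ quantifications and bindings, the resulting assignment $\assign'$ satisfies $\dom(\assign')\cap\Agf=A$ and maps each $a_i$ to a $\sim_{a_i}$-uniform strategy. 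The $\Aout$ quantifier then ranges over $\out(\assign',\fplay)$, and one applies the induction hypothesis to $\tau(\psi)$.

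The main obstacle is to verify that $\out(\assign',\fplay)$ coincides with the set of plays over which the \ATLsi semantics of $\langle A\rangle\psi$ implicitly quantifies, namely the plays in which agents in $A$ follow the chosen strategies while agents in $\Agf\setminus A$ play arbitrary (not necessarily uniform) strategies. This reduces to the observation that, in the definition of $\out$ on a partial assignment, an unbound agent contributes any available action at each step, which is exactly what universal quantification over arbitrary opponent strategies amounts to under the objective semantics of \ATLsi. Once this equivalence is established, the coalition step closes and applying the statement at the root with the empty assignment and $\fplay=\pos_\init$ yields the proposition.
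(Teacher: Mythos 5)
Your proposal is correct and follows essentially the same route as the paper: the model is obtained by interpreting one observation symbol per agent as that agent's indistinguishability relation, and each coalition operator $\langle A\rangle\psi$ is translated into a block of observation-annotated strategy quantifiers and bindings for the coalition, unbindings for the opponents, and an $\Aout$ in front of the translated path formula. The only difference from the paper's translation is that you place the unbindings before the quantifier/binding block rather than after it, which is immaterial since the two blocks concern disjoint agents.
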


We recall that an \ATLsi formula $\EstratATL \psi$  reads as ``there are
strategies for players in $\coal$ such that $\psi$ holds whatever
players in $\Agf\setminus\coal$ do''.
Formula $\phi'$ is built from $\phi$ by replacing each subformula of
the form $\EstratATL \psi$, where
$\coal=\{\ag_{1},\ldots,\ag_{k}\}\subset\Agf$ is a coalition of
players and $\Agf\setminus\coal=\{\ag_{k+1},\ldots,\ag_n\}$ with formula $\Estrato[\var_{1}]{\obs_{1}}\ldots
\Estrato[\var_{k}]{\obs_{k}} 
\bind[\ag_{1}]{\var_{1}}\ldots \bind[\ag_{k}]{\var_{k}}\bind[\ag_{k+1}]{\unb}\ldots\bind[\ag_n]{\unb}\Aout\, \psi'$, where
$\psi'$ is the translation of $\psi$. Then $\CGSi'$ is obtained from
$\CGSi$ by interpreting each $\obs_{i}$ as the equivalence relation
for player~$i$ in $\CGSi$, and interpreting $\obs_{p}$ as the identity
relation. 


Third, \SLi also subsumes the imperfect-information extension of \ATLs
with strategy context
(see~\cite{DBLP:journals/corr/LaroussinieMS15} for the definition of
\ATLssc with partial observation, which we refer to as \ATLssci). 

\begin{proposition}
  \label{prop-subsume-ATLsc}
For every \ATLssci formula $\phi$ there is an \SLi formula $\phi'$
  such that for every \CGSi $\CGSi$ there is a \CGSi $\CGSi'$ such that $\CGSi
  \models \phi$ if, and only if, $\CGSi' \models \phi'$.
\end{proposition}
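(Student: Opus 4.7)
The plan is to adapt the translation from Proposition~\ref{prop-subsume-ATL} to the strategy-context setting, the essential difference being that when a coalition $\coal$ re-binds its members, players outside $\coal$ retain whatever strategies were assigned to them by the enclosing context. In the translation of Proposition~\ref{prop-subsume-ATL} this was captured by explicitly unbinding the opponents with $\bind[\ag]{\unb}$; for \ATLssci one simply omits these unbindings, so that the prevailing \SLi assignment plays the role of the strategy context.

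Concretely, I would proceed as follows. For each strategic subformula $\EstratATL\psi$ of $\phi$, with $\coal=\{\ag_1,\ldots,\ag_k\}$, choose fresh strategy variables $\var_1,\ldots,\var_k\in\Varf$ and fresh observation symbols $\obs_1,\ldots,\obs_k\in\Obsf$, and replace it by
\[
\Estrato[\var_1]{\obs_1}\ldots\Estrato[\var_k]{\obs_k}\bind[\ag_1]{\var_1}\ldots\bind[\ag_k]{\var_k}\Aout\,\psi',
\]
where $\psi'$ is the inductive translation of $\psi$. Any unbinding operator of \ATLssci (typically written $\rangle\!\!\coal\!\!\langle$) is translated directly into the corresponding conjunction $\bind[\ag]{\unb}$ over $\ag\in\coal$. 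The \CGSi $\CGSi'$ is obtained from $\CGSi$ as in Proposition~\ref{prop-subsume-ATL}: each $\obs_i$ is interpreted as the indistinguishability relation of the corresponding player in $\CGSi$, while any ``perfect information'' symbol is interpreted as equality.

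Correctness is then proved by induction on $\phi$, strengthening the statement so that it relates an \ATLssci strategy context $\chi$ at a finite play $\fplay$ with an \SLi assignment $\assign$ that binds exactly the same agents to the same strategies (through intermediate variables $\var_\ag$). The key observations are: (i)~an $\obs_i$-strategy in $\CGSi'$ is precisely a uniform strategy for player $i$ in $\CGSi$; (ii)~the outer bindings propagate down the syntactic tree in \SLi exactly as the strategy context propagates in \ATLssci, so that rebinding a player via $\bind[\ag]{\var_i}$ correctly overwrites the previous strategy of $\ag$; and (iii)~the outcome quantifier $\Aout$ ranges over all infinite plays compatible with the currently bound strategies, which matches the \ATLssci convention that unbound players act freely.

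The main obstacle is keeping the bookkeeping of the context clean through the induction, in particular ensuring that the players in $\Agf\setminus\coal$ who had been bound in an outer scope are still bound to their original strategies at the point where $\Aout\psi'$ is evaluated, while those that were never bound (or were explicitly unbound) remain free. This is exactly what \SLi's assignment-based semantics provides for free, provided one takes care to choose fresh variables $\var_i$ and to forbid variable capture in the translation; once this is set up, the inductive cases for Boolean, temporal, and unbinding operators are routine, as is the verification that $\CGSi \models \phi$ iff $\CGSi' \models \phi'$.
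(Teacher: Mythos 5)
Your proposal is correct and follows essentially the same route as the paper: the paper also reduces to the translation of Proposition~\ref{prop-subsume-ATL} and observes that the only change needed for \ATLssci is to omit the unbindings $\bind[\ag]{\unb}$ of agents outside the coalition, so that the ambient \SLi assignment plays the role of the strategy context, with $\CGSi'$ built exactly as before. Your additional bookkeeping about fresh variables and the inductive invariant relating contexts to assignments just spells out what the paper leaves implicit.
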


The only difference between \ATLssci and \ATLsi is the following: in
\ATLsi, when a subformula of the form $\EstratATL \psi$ is met, we
quantify existentially on strategies for players in $\coal$ and
quantify universally on possible outcomes obtained by letting other
players behave however they want. Therefore, if any player in
$\Agf\setminus\coal$ had previously been assigned a strategy, it is
forgotten. In \ATLssci on the other hand, these strategies are stored
in a \emph{strategy context}, which is a \emph{partial} assignment $\assign$, defined for the
subset of players currently bound to a strategy.
A strategy context allows one to quantify universally only on strategies
 of players who are not in $\coal$ and who are not already
bound to a strategy. It is then easy to adapt the translation
presented for Proposition~\ref{prop-subsume-ATL}: it suffices not to
unbind agents outside the coalition from their strategies. $\CGSi'$ is defined as for Proposition~\ref{prop-subsume-ATL}.

\subsection{Comparison with Coordination Logic}
\label{subsec:SLi-comparison-CL}

There is a natural and simple translation of instances of the model-checking problem of 
\CL\cite{DBLP:conf/csl/FinkbeinerS10} into the hierarchical instances of \SLi. 
Moreover, the image of this translation consists
of instances of \SLi with a very restricted form: atoms mentioned in the \SLi-formula 
are observable by all observations of the \CGSi, i.e., for all
$\obs\in\Obsf$ and $p\in\APf$, $v \obseq v'$
implies that $p \in \val(v)$
iff $p \in \val(v')$.

\begin{proposition} \label{prop:CLtoSLi}
There is an effective translation that, given a \CL-instance $(\CKS,\phi)$ produces a hierarchical \SLi-instance $(\CGSi,\Phi)$ such that
\begin{enumerate}
 \item $\CKS \models \phi$ if, and only if, $\CGSi \models \Phi$,
 \item For all atoms $p\in\APf$ and observations $\obs \in \Obsf$,  $v \obseq v'$ implies
   that $p \in \val(v)$ iff $p \in \val(v')$. 
\end{enumerate}
\end{proposition}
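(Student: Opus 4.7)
The plan is to give a syntax-directed compilation of \CL formulas into the hierarchical fragment of \SLi that mirrors the natural semantic correspondence: each \CL strategy quantifier binding a variable $x$ visible on a set $V$ is translated to the \SLi block $\Estrato[x]{\obs_V}\bind[a_x]{x}\widetilde{\phi'}$, where $a_x$ is a fresh dedicated agent for $x$ and $\obs_V$ is an observation symbol representing ``agreement on the components indexed by $V$''. Temporal (path) operators are translated homomorphically, guarded by $\Aout$, which matches \CL's convention of evaluating the temporal specification on all computations compatible with the currently chosen strategies. The translation is linear in $|\phi|$, introduces one agent per \CL-quantified variable, and one observation symbol per distinct visibility annotation occurring in $\phi$.

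On the model side, the \CGSi $\CGSi$ is built directly from $\CKS$: its positions are the states of $\CKS$ (possibly augmented with a component storing the last joint output action so that the different agents' strategies may steer the play), and each $\obs_V$ is interpreted as the equivalence relation identifying two positions iff they agree on the components listed in $V$. In \CL the atomic propositions of the specification are properties of the whole state and are, by design, seen by all strategies, which is precisely the restriction informally recalled in the excerpt just before the statement. To enforce condition (2) we augment each position with a public component containing $\val(v)$, and refine the interpretation of every $\obs_V$ accordingly; since in \CL every $V$ already distinguishes states with different atomic labellings, this adds no real observational power but guarantees that $\obsint(\obs_V)$ refines the atomic labelling, as required.

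Correctness (condition (1)) is then proved by induction on $\phi$, using two invariants. First, there is a natural bijection between $V$-observing \CL-strategies for $x$ and $\obs_V$-uniform \SLi-strategies for the dedicated agent $a_x$. Second, once all variables in scope have been bound to their dedicated agents, the set of $\out$-outcomes under the resulting SLi assignment coincides with the set of computations determined by the \CL strategy profile, so $\Aout\psi$ faithfully implements \CL's temporal semantics. The hierarchical condition on the produced \SLi-instance (Definition~\ref{def-hierarchical-formula}) reduces to the construction: whenever $\Estrato[y]{\obs_{V'}}$ lies syntactically below $\Estrato[x]{\obs_V}$, the \CL-hierarchy forces $V\subseteq V'$, and hence $\obsint(\obs_{V'})\subseteq\obsint(\obs_V)$ by the definition of $\obsint$. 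The main obstacle I expect is not the induction itself but the bookkeeping needed to reconcile \CL's ``coordinated'' operational semantics, where the chosen strategies collectively determine a single computation tree against which an LTL/\CTLs-style specification is evaluated, with \SLi's explicit bind/unbind and outcome-quantifier regime; in particular, one must verify that the auxiliary public component added to positions in order to secure condition (2) does not surreptitiously grant any $\obs_V$-strategy extra discriminating power beyond what $V$ permits. Keeping that component a function of the original $V$-visible state data suffices to rule this out.
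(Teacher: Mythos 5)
Your plan diverges from the paper's route (the paper goes in two steps, first compiling \CL into \QCTLsih and then \QCTLsih into \SLi), and the direct one-step translation you sketch has a genuine gap at its core: you never say how the translated \SLi formula \emph{reads} the value of a bound strategy variable $x$ at a node of the play. In \CL a quantified strategy variable is simultaneously a strategy (a function $f_s:(2^{\scope{s}})^*\to 2^{\{s\}}$) and an atomic proposition that the formula may test at any node; in \SLi, by contrast, strategies choose actions that steer transitions, while atoms are evaluated against the fixed labelling $\val$ of the current position. Binding $x$ to a fresh agent $a_x$ gives you the uniformity correspondence you state, but leaves the atomic case $\widetilde{x}$ undefined. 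Your parenthetical fix --- augmenting positions with ``the last joint output action'' and a public component --- runs into condition~(2): once the values of the strategy-variable atoms appear in position labels, every observation $\obs_V$ must determine them, yet the value chosen by an \emph{inner} variable $s''$ with $\scope{s}\subsetneq\scope{s''}$ is not a function of the $\scope{s}$-visible history, so exposing it in labels that $\obs_V$ must resolve genuinely leaks information to the outer strategy and breaks the bijection you rely on (this is the concern you flag at the end, but the ``keep the component a function of the $V$-visible data'' escape is not available for variables quantified below $x$).

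The paper avoids this by never putting the quantified atoms into the labels of ordinary positions. In the \QCTLsi-to-\SLi step it adds a perfectly informed driver agent $a_0$ and, for each quantified atom $p_i$, reveal positions $\pos_{\sstate,i}$ and a sink $\pos_{p_i}$ labelled $\{p_i\}$; the value of $p_i$ at a node is encoded in which successor is reachable under $a_i$'s strategy, and the formula tests it with $\Eout\X\X p_i$. Since the sinks are observation-singletons and the original \CL labelling depends only on the free coordination variables (which every $\scope{s}$ contains), condition~(2) holds. If you want to keep a one-step translation you would essentially have to rebuild this gadget inside it; as written, the induction cannot go through because the base case for bound strategy variables is missing.
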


To do this, one first translates \CL into (hierarchical) \QCTLsi, the
latter is defined in the next section. This step is a simple
reflection of the semantics of \CL in that of \QCTLsi. Then one
translates \QCTLsi into \SLi by a simple adaptation of the translation
of \QCTLs into
\ATLssc~\cite{DBLP:journals/iandc/LaroussinieM15}.


We briefly recall the syntax and semantics of \CL, and refer
to~\cite{DBLP:conf/csl/FinkbeinerS10} for further details.

\halfline
\head{Notation for trees}
Note that our definition for trees (see
Section~\ref{sec-QCTLsi-semantics}) differs slightly  from the one
in~\cite{DBLP:conf/csl/FinkbeinerS10}, where the root is the empty
word. Here we adopt this convention to stay closer to notations
in~\cite{DBLP:conf/csl/FinkbeinerS10}. Thus, $(Y,X)$-trees in \CL are
of the form $(\tau,l)$ where $\tau \subseteq X^*$ and $l:\tau \to 2^Y$.

For two disjoint sets $X$ and $Y$, we
identify $2^{X}\times 2^{Y}$ with $2^{X\union Y}$.
Let $X$ and $Y$ be  two sets with $Z=X\union Y$, and let  $M$
and $N$ be two disjoint sets.
Given an ${M}$-labelled $2^{Z}$-tree $\ltree=(\tree,\lab_{M})$ and an
${N}$-labelled $2^{Z}$-tree $\ltree'=(\tree',\lab_{N})$ with same
domain $\tree=\tree'$, we define $\ltree\dunion\ltree'\egdef(\tau,\lab')$,
where for every $\noeud\in\tree$, $\lab'(\noeud)=\lab_{M}(\noeud)\union\lab_{N}(\noeud)$.
Now, given a complete ${M}$-labelled
$2^{X}$-tree $\ltree=((2^{X})^{*},\lab_{M})$ and a complete ${N}$-labelled
$2^{Y}$-tree $\ltree'=((2^{Y})^{*},\lab_{N})$, we define $\ltree \oplus
\ltree'\egdef\liftI[2^{Z\setminus X}]{}{\ltree}\dunion\,
\liftI[2^{Z\setminus Y}]{}{\ltree'}$.

\halfline
\head{\CL Syntax}
Let $\coordvar$ be a set of \emph{coordination variables}, and let
$\stratvar$ be a set of \emph{strategy variables} disjoint from $\coordvar$. 
The syntax of \CL is given by the following grammar:
\[\phi ::= x \mid \neg \phi \mid \phi\ou \phi \mid \X \phi  \mid
\phi \until \phi \mid \strquantif C \exists s.\, \phi \]
where $x\in \coordvar\union\stratvar$, $C\subseteq\coordvar$ and
$s\in\stratvar$, and with the restriction that each coordination 
variable appears in at most one \emph{subtree quantifier} $\strquantif C \exists s.\,$, and similarly for strategy variables.

The notion of free and bound (coordination or strategy) variables is
as usual. The set of free coordination variables in $\phi$ is noted $\freecoord$.
A bound coordination variable $c$ is \emph{visible} to a strategy
variable $s$ if $s$ is  in the scope of the quantifier that introduces
$c$, and $\scope{s}$ is the union of the set of bound coordination variables
visible to $s$ and the set of free coordination variables (note that this union is disjoint). We will see,
in the semantics, that the meaning of a bound strategy variable $s$ is a strategy
$f_{s}:(2^{\scope{s}})^{*}\to 2^{\{s\}}$.
Free strategy variables are called \emph{atomic propositions}, and 
we denote the set of atomic propositions in $\phi$ by $\APphi$.

\halfline
\head{\CL Semantics}
%
%
A \CL formula $\phi$ is evaluated on a complete $\APphi$-labelled
$2^{\freecoord}$-tree $\ltree$. An $(\APphi,2^{\freecoord})$-tree
$\ltree=(\tree,\lab)$ satisfies a \CL formula $\phi$ if for every path $\tpath$ that
starts in the root we have $\ltree,\tpath,0\models\phi$, where the
satisfaction of a formula at position $i\geq 0$ of a path $\tpath$ is
defined inductively as follows:
\begingroup
  \addtolength{\jot}{-3pt}
\begin{alignat*}{3}
  \ltree,\tpath,i\modelst & 	\,p 			&& \mbox{ if }
  &&\quad p\in\lab(\tpath_{i})\\
  \ltree,\tpath,i\modelst & 	\,\neg \phi'		&& \mbox{ if } && \quad\ltree,\tpath,i\not\modelst \phi'\\
  \ltree,\tpath,i\modelst & 	\,\phi_{1} \ou \phi_{2}		&& \mbox{ if } &&\quad \ltree,\tpath,i \modelst \phi_{1} \mbox{ or    }\ltree,\tpath,i\modelst \phi_{2} \\
\ltree,\tpath,i\modelst & \,\X\phi' 				&& \mbox{ if } && \quad\ltree,\tpath,i+1\modelst \phi' \\ 
\ltree,\tpath,i\modelst & \,\phi_{1}\until\phi_{2} 		&&
\mbox{ if } && \quad\exists\, j\geq i \mbox{ s.t.
}\ltree,\tpath,j\modelst\phi_{2} \text{ and } \forall k \text{ s.t. }i\leq k <j,\;
\ltree,\tpath,k\modelst\phi_{1}\\
\ltree,\tpath,i\modelst & \,\strquantif C\exists s.\,\phi' \quad
&& \mbox{ if } && \quad\exists\,
f:(2^{\scope{s}})^{*}\to 2^{\{s\}} \mbox{ s.t. } \ltree_{\tpath_{i}}\oplus ((2^{\scope{s}})^{*},f)\modelst \phi',
\end{alignat*}
\endgroup
where $\ltree_{\tpath_{i}}$ is the subtree of $\ltree$ rooted in $\tpath_{i}$.

First, observe that in the last inductive case, $\ltree_{\tpath_{i}}$ being a
$2^{\freecoord}$-tree, $\ltree_{\tpath_{i}}\oplus
((2^{\scope{s}})^{*},f)$ is a $2^{\freecoord\union \scope{s}}$-tree.
By definition, $\scope{s}=\freecoord\union C=\freecoord[\phi']$. It
follows that $\freecoord\union\scope{s}=\scope{s}=\freecoord[\phi']$,
hence $\phi'$ is indeed evaluated on a $\freecoord[\phi']$-tree. 

\begin{remark}
Note that all strategies observe the value of all atomic
propositions. 
%
%
Formally, for every \CL-formula  $\phi$ of the form $\phi=\strquantif C_{1}
  \exists s_{1}.\,\ldots,\strquantif C_{i}\exists s_{i}.\,\phi'$
  evaluated on a $2^{\freecoord}$-tree $\ltree=(\tree,\lab)$, $\phi'$ is evaluated
  on a $2^{\freecoord\union C_{1}\union \ldots \union C_{i}}$-tree
  $\ltree'=(\tree',\lab')$ such that for every $p\in\APphi$, for every pair of nodes
  $\noeud,\noeud'\in\ltree'$ such that
  $\projI[2^{\freecoord}]{\noeud}=\projI[2^{\freecoord}]{\noeud'}$,
  it holds that $p\in\lab'(\noeud)$ iff $p\in\lab'(\noeud')$.
Thus, in \CL one cannot directly 
capture strategic problems  where atomic
propositions are not observable to all players. 
\end{remark}

The input to the model-checking problem for \CL consists of a \CL
formula $\phi$ and a finite representation of a
$(\APphi,2^{\freecoord})$-tree $t$. The standard assumption is to
assume $t$ is a regular tree, i.e., is the unfolding of a finite
structure.  Precisely, a \emph{finite representation} of a
$(\APphi,2^{\freecoord})$-tree $t = (\tau,\lab')$ is a structure
$\CKS = (\setstates,R,\lab,\sstate_\init)$ such that
\begin{itemize}
 \item $\setstates = 2^{\freecoord}$,
 \item $R = S \times S$, 
 \item $\lab:\setstates \to 2^{\APphi}$,
 \item $\sstate_\init \in \setstates$,
\end{itemize}
and $t = \unfold[\CKS]{\sstate_{\init}}$ is the unfolding of $\CKS$.

Thus, an \emph{instance} of the model-checking problem for \CL is a
pair $(\CKS,\Phi)$ where
 $\CKS = (\setstates,
\relation,\sstate_{\init},\lab)$ is a finite representation of an
$(\APphi,2^{\freecoord})$-tree and $\Phi$ is a \CL formula (over variables $\stratvar \cup
\coordvar$).  The \emph{model-checking problem for
  \CL} is the following decision problem: given an instance
$(\CKS,\Phi)$, return `Yes' if $\unfold[\CKS]{\sstate_{\init}}
\modelst \Phi$ and `No' otherwise.

We now describe a natural translation of \CL-instances to 
\SLi-instances. This translation:
\begin{enumerate}
\item reduces the model-checking problem of \CL to that of the hierarchical fragment of \SLi.
\item shows that \CL only produces instances in which all atoms are
  uniform with regard to all observations, {i.e., instances
    $(\CGSi,\Phi)$ such that for every $p \in \APf$ and
    $\obs \in \Obsf$, $v \obseq v'$ implies
    $p \in \val(v) \iff p \in \val(v')$. }
\end{enumerate}


We will present the translation in two steps: first from \CL-instances
into \QCTLsih-instances, and then from \QCTLsi-instances to
\SLi-instances such that \QCTLsih-instances translate to hierarchical
\SLi-instances. 

\subsubsection{Translating \CL to \QCTLsih}
Let $(\CKS,\Phi)$ be an instance of the model-checking problem for
\CL, where $\CKS=(\setstates,\relation,\lab,\sstate_{\init})$.
We will construct a \QCTLsih-instance $(\trCL{\CKS},\trCL{\phi})$ 
such that $\CKS \models \Phi$ iff $\trCL{\CKS} \models \trCL{\Phi}$.  
Let $\trCL{\APf}$ be the set of all strategy variables occurring in $\Phi$, 
let $\coordvar(\Phi)$ be the set of coordination variables that appear in $\Phi$, and assume, w.l.o.g., that
$\coordvar(\varphi) = [n]$ for some $n \in \setn$. Let $\hidden(\Phi)\egdef\coordvar(\Phi)\setminus\freecoord$. 

First, we define the \CKS $\trCL{\CKS}$ over $\trCL{\APf}$: the idea is to add in the structure
$\CKS$ the local states corresponding to coordination variables that are
not seen by all the strategies. 


Formally, $\trCL{\CKS} \egdef
(\trCL{\setstates},\trCL{\relation},\trCL{\sstate_{\init}},\trCL{\lab})$ where
\begin{itemize}
\item $\trCL{\setstates}=\prod_{c\in\coordvar(\Phi)}\setlstates_{c}$
  where $\setlstates_c = \{c_0,c_1\}$, 
\item $\trCL{\relation}=\trCL{\setstates}\times\trCL{\setstates}$, 
\item for every $\sstate\in\trCL{\setstates}$,
  $\trCL{\lab}(\sstate)=\lab(\projI[\freecoord]{\sstate})$, and
  \item $\trCL{\sstate_{\init}} \in \trCL{\setstates}$ is any
state $\sstate$ such that $\projI[\freecoord]{\sstate}=\sstate_{\init}$
\end{itemize}

Second, we define concrete observations corresponding to strategy variables in
$\Phi$. As explained in~\cite{DBLP:conf/csl/FinkbeinerS10}, and as
reflected in the semantics of \CL, the
intuition is that a strategy variable $s$ in formula $\Phi$ observes
coordination variables $\scope{s}$. Therefore, we simply define, for
each strategy variable  $s$ in $\Phi$, the concrete observation
$\cobs_{s}\egdef \scope{s}$. 

Finally, we define the \QCTLsi
formula $\trCL{\Phi}$. This is done by induction on $\Phi$ as follows (recall that we
take for atomic propositions in \QCTLsi the set of all strategy variables
in $\Phi$):
\begin{align*}
\trCL{x} 		& \egdef x  \\
\trCL{\neg \phi} 	& \egdef \neg \trCL{\phi}\\
\trCL{\phi_1\ou\phi_2} & \egdef \trCL{\phi_1}\ou\trCL{\phi_2} \\
\trCL{\X\phi} & \egdef \X\,\trCL{\phi} \\
\trCL{\phi_1\until\phi_2} & \egdef \trCL{\phi_1}\,\until\,\trCL{\phi_2}
\\
\trCL{\strquantif C \exists s.\,\phi} 		& \egdef \existsp[s]{\cobs_{s}}\A\trCL{\phi}
\end{align*}
In the last case, note that $C \subseteq \cobs_s = \scope{s}$.

Note that $\trCL{\Phi}$ is a hierarchical \QCTLsi-formula.
Also, one can easily check that the following holds:
\begin{lemma}
  \label{lem-trans-CL}
  $\unfold[\CKS]{\sstate_{\init}}\models\Phi\quad\mbox{iff}\quad
  \unfold[\trCL{\CKS}]{\trCL{\sstate_{\init}}}\models \A\trCL{\Phi}$.
\end{lemma}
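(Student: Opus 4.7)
My plan is to prove the equivalence by structural induction on the \CL formula $\Phi$, but first to strengthen the statement so it is stable under taking strategy subformulas. Specifically, I would prove that for every subformula $\phi$ of $\Phi$, every prefix of strategy quantifications $\strquantif{C_1}\exists s_1.\,\ldots\,\strquantif{C_k}\exists s_k$ that syntactically precedes $\phi$ in $\Phi$, every choice of witnessing functions $f_i:(2^{\scope{s_i}})^*\to 2^{\{s_i\}}$, and every node $\noeud$ of the resulting widened and $\{s_1,\ldots,s_k\}$-labelled \CL-tree, the satisfaction of $\phi$ at $\noeud$ (on all paths of that tree starting at $\noeud$) is equivalent to the satisfaction of $\trCL{\phi}$ at the corresponding node of $\unfold[\trCL{\CKS}]{\trCL{\sstate_{\init}}}$ composed with the $\scope{s_i}$-uniform $s_i$-labellings $\plab_{s_i}$ induced by the $f_i$. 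The lemma then follows from the $\phi=\Phi$, $k=0$ instance, noting that a \CL-tree satisfies $\Phi$ iff every path from the root satisfies $\Phi$, which matches the leading $\A$ in the \QCTLsi formula.

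The core technical ingredient is the bijection between \CL strategies and uniform labellings. A function $f:(2^{\scope{s}})^*\to 2^{\{s\}}$ determines a unique $s$-labelling $\plab_s$ of $\unfold[\trCL{\CKS}]{\trCL{\sstate_{\init}}}$ by setting $\plab_s(\noeud)\egdef f(\projI[\scope{s}]{\noeud})$; this labelling is $\scope{s}$-uniform because its value depends only on $\projI[\scope{s}]{\noeud}$, and conversely any $\scope{s}$-uniform $s$-labelling factors through $\projI[\scope{s}]{\cdot}$ and thus determines such an $f$. Moreover, the \CL operation $\ltree_{\tpath_i}\oplus((2^{\scope{s}})^*,f)$ produces exactly (up to direction relabelling) the subtree of $\unfold[\trCL{\CKS}]{\trCL{\sstate_{\init}}}$ rooted at the corresponding node, equipped with this $s$-labelling; this matches the \QCTLsi operation of adding a $\cobs_s$-uniform $s$-labelling to the current tree, since by construction $\cobs_s=\scope{s}$.

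Given this correspondence, the induction is routine. The base case $x\in\APphi\cup\stratvar$ is immediate from the definitions of $\trCL{\CKS}$ (whose labelling on $\APphi$ agrees with $\lab$ via the $\freecoord$-projection) and from the labelling bijection above for bound strategy variables. Boolean, $\X$, and $\until$ cases are immediate from the induction hypothesis applied pointwise along paths, using that the \CL semantics traverses paths by shifting the index $i$ while the \QCTLsi semantics shifts along $\tpath_{\geq i}$ identically. In the strategy quantifier case $\strquantif{C}\exists s.\,\phi$, an existentially chosen $f$ in \CL yields the existentially chosen uniform $s$-labelling for $\existsp[s]{\cobs_s}$ in \QCTLsi, and conversely; the embedded $\A$ inside $\trCL{\phi'}=\A\trCL{\phi''}$ (produced by a later strategy quantifier of $\phi$) or the outer $\A$ in the statement reconciles the universal-on-paths reading of \CL with the node-based semantics of \QCTLsi.

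The main obstacle is purely bookkeeping: keeping the direction sets in sync as \CL progressively widens its tree while descending through strategy quantifiers, whereas the \QCTLsi side always works on the fixed unfolding $\unfold[\trCL{\CKS}]{\trCL{\sstate_{\init}}}$ (whose directions already include $\hidden(\Phi)$). Making this precise requires verifying that for each strategy quantifier encountered, the $\Iphi[\trCL{\phi}]$-projection of nodes of $\unfold[\trCL{\CKS}]{\trCL{\sstate_{\init}}}$ yields exactly the \CL widened directions $2^{\freecoord\cup C_1\cup\ldots\cup C_k}$, so that the bijection between $f$'s and uniform labellings lifts coherently to every depth. Once this identification is set up at the outset, each inductive step reduces to a single line.
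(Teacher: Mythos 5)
Your proposal is correct and supplies essentially the argument the paper leaves implicit: the paper states Lemma~\ref{lem-trans-CL} with no proof (``one can easily check''), and the intended justification is exactly your structural induction built on the bijection between \CL strategies $f:(2^{\scope{s}})^{*}\to 2^{\{s\}}$ and $\scope{s}$-uniform $s$-labellings of the fixed unfolding, with the outer and embedded $\A$'s absorbing \CL's universal-over-paths satisfaction at each quantifier. One small point to tighten in a full write-up: since \CL formulas are evaluated at a path and an index, the strengthened induction hypothesis should be stated for triples (tree, path, position) rather than ``at $\noeud$ on all paths,'' so that the $\X$ and $\until$ cases go through literally; your treatment of those cases shows you intend this anyway.
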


Importantly, we notice that $\A\trCL{\Phi}\in\QCTLsih$, and that:
\begin{lemma}
  \label{lem-observable}
  For every $x\in\APphi$ and every $s$ quantified in $\Phi$,
  $\unfold[\trCL{\CKS}]{\trCL{\sstate_{\init}}}$ is  $\cobs_{s}$-uniform in $x$.
\end{lemma}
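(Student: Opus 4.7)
The argument rests on a single structural observation: by construction, $\trCL{\lab}(\sstate)$ depends only on the $\freecoord$-projection of $\sstate$, and for every strategy variable $s$ quantified in $\Phi$ the concrete observation $\cobs_{s}=\scope{s}$ contains $\freecoord$ as a subset (since $\scope{s}$ is by definition the disjoint union of $\freecoord$ with the bound coordination variables visible to $s$). So a quantifier observing $\cobs_{s}$ observes in particular all the components on which the labelling actually depends.

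Concretely, I would argue as follows. Fix $x\in\APphi$ and $s$ quantified in $\Phi$. Let $\ltree=\unfold[\trCL{\CKS}]{\trCL{\sstate_{\init}}}=(\tree,\lab')$, and take two nodes $\noeud,\noeud'\in\tree$ with $\noeud\oequivt[\cobs_{s}]\noeud'$. By definition of $\oequivt[\cobs_{s}]$ we have $|\noeud|=|\noeud'|$ and $\projI[{\cobs_{s}}]{\noeud_{k}}=\projI[{\cobs_{s}}]{\noeud'_{k}}$ for every position $k$; in particular for $k=|\noeud|-1$ this gives $\projI[{\cobs_{s}}]{\last(\noeud)}=\projI[{\cobs_{s}}]{\last(\noeud')}$. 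Since $\freecoord\subseteq\scope{s}=\cobs_{s}$, we can further project and obtain $\projI[\freecoord]{\last(\noeud)}=\projI[\freecoord]{\last(\noeud')}$.

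By definition of the tree unfolding, $\lab'(\noeud)=\trCL{\lab}(\last(\noeud))$, and by definition of $\trCL{\lab}$, $\trCL{\lab}(\last(\noeud))=\lab(\projI[\freecoord]{\last(\noeud)})$; similarly for $\noeud'$. Combining these with the previous equality yields $\lab'(\noeud)=\lab'(\noeud')$, and hence $x\in\lab'(\noeud)$ iff $x\in\lab'(\noeud')$. The $x$-labelling of $\ltree$ is therefore $\cobs_{s}$-uniform, as required.

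There is no real obstacle here: the lemma is essentially a sanity check that the observations $\cobs_{s}=\scope{s}$ chosen in the translation are rich enough to see every atom that $\trCL{\CKS}$ may carry. The only thing to verify carefully is the inclusion $\freecoord\subseteq\scope{s}$, which is immediate from the \CL definition of $\scope{s}$ as the union of bound coordination variables visible to $s$ together with the free coordination variables.
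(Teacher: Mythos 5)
Your proof is correct and follows exactly the argument the paper intends (the paper states this lemma without proof, as an immediate consequence of the construction): the labelling $\trCL{\lab}$ factors through $\projI[\freecoord]{\cdot}$, and $\freecoord\subseteq\scope{s}=\cobs_{s}$ for every quantified $s$, so $\cobs_{s}$-indistinguishable nodes have identical labels. The one step you flag as needing care, the inclusion $\freecoord\subseteq\scope{s}$, is indeed immediate from the definition of $\scope{s}$ in \CL, so there is nothing to add.
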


\subsubsection{Translation from \QCTLsi to \SLi}
\label{sec-QCTL-to-SL}
We now present a translation of \QCTLsi-instances to \SLi-instances.
It is a simple adaptation of 
 the reduction from the model-checking problem for \QCTLs
to the model-checking problem for \ATLssc presented
in~\cite{DBLP:journals/iandc/LaroussinieM15}.

Let $(\CKS,\Phi)$ be an instance of the model-checking problem for
\QCTLsi, where $\CKS=(\setstates,\relation,\val,\sstate_{\init})$ and
$\setstates\subseteq \prod_{i\in [n]}\setlstates_{i}$.  We assume,
without loss of generality, that every atomic proposition is
quantified at most once, and that if it appears quantified it does not
appear free. Also, let $\APq(\Phi)=\{p_{1},\ldots,p_{k}\}$ be the set
of atomic propositions quantified in $\Phi$, and for $i\in [k]$, let
$\cobs_{i}$ be the concrete observation associated to the quantifier
on $p_{i}$.

We build the \CGSi
$\CGSi^{\CKS}\egdef(\Act,\setpos,\trans,\val',\pos_\init,\obsint)$ over agents
$\Agf\egdef\{a_{0},a_{1},\ldots,a_{k}\}$, observations
$\Obsf\egdef\{\obs_{0},\obs_{1},\ldots,\obs_{k}\}$ and atomic propositions
$\APf\egdef\APq(\Phi) \union\{p_{\setstates}\}$, where
$p_{\setstates}$ is a fresh atomic proposition.
Intuitively, agent $a_{0}$ is in
charge of choosing transitions in $\CKS$, while agent $a_{i}$ for
$i\geq 1$ is in charge of choosing the valuation for $p_{i}\in\APq(\Phi)$.

To this aim, we let  \[\setpos\egdef
\begin{array}{l}
  \{\pos_{\sstate}\mid\sstate\in\setstates\} \;\union \\
   \{\pos_{\sstate,i}\mid
   \sstate\in\setstates \mbox{ and }i\in [k]\}\;\union\\
   \{\pos_{p_{i}}\mid 0\leq i \leq k\}\;\union\\
   \{\pos_{\perp}\}
\end{array}
\]
      and
      \[\Act\egdef
      \{\mov^{\sstate}\mid\sstate\in\setstates\}\union \{\mov^{i}\mid
      0\leq i \leq k\}.\]
 In
positions of the form $\pos_{\sstate}$ with $\sstate\in\setstates$, transitions are determined by
the action of agent $\ag_{0}$. First, she can choose to
simulate a transition in $\CKS$: for every joint action
$\jmov\in\Act^{\Agf}$ such that $\jmov_{0}=\mov^{\sstate'}$,
\[\trans(\pos_{\sstate},\jmov)\egdef
\begin{cases}
  \pos_{\sstate'} & \text{if }\relation(\sstate,\sstate')\\
  \pos_\perp & \text{otherwise}.
\end{cases}\]
She can also choose to move to a position in which agent $\ag_{i}$
will choose the valuation for $p_{i}$ in the current node:
for every joint action
$\jmov\in\Act^{\Agf}$ such that $\jmov_{0}=\mov^{i}$,
\[\trans(\pos_{\sstate},\jmov)\egdef
\begin{cases}
  \pos_{\sstate,i} & \text{if }i\neq 0\\
  \pos_\perp & \text{otherwise}.
\end{cases}\]

Next, in a position of the form $\pos_{\sstate,i}$, agent $a_{i}$
determines the transition, which codes the labelling of $p_{i}$ in the
current node: choosing $\act^{i}$ means that $p_{i}$ holds in the
current node, choosing any other action codes that $p_{i}$ does not hold.
Formally, for every joint action
$\jmov\in\Act^{\Agf}$,
\[\trans(\pos_{\sstate,i},\jmov)\egdef
\begin{cases}
  \pos_{p_{i}} & \text{if }\jmov_{i}=\mov^{i}\\
  \pos_\perp & \text{otherwise}.
\end{cases}\]

Positions of the form $\pos_{p_i}$ and $\pos_{\perp}$ are
sink positions.

The labelling function $\lab'$ is defined as follows:
\[\val'(\pos)\egdef
    \begin{cases}
      \lab(\sstate)\union \{p_{\setstates}\} & \mbox{if
      }\pos=\pos_{\sstate} \mbox{ for some } \sstate\in\setstates\\
      \emptyset & \mbox{if }\pos\in\{\pos_{\sstate,i} \mid
      \sstate\in\setstates, i\in [k]\} \union \, \{\pos_{p_0},\pos_{\perp}\}\\
      \{p_{i}\} & \mbox{if }\pos=\pos_{p_{i}}\text{ with }i\in [k]
    \end{cases}
  \]

  Finally we let $\pos_{\init}\egdef \pos_{\sstate_{\init}}$ and we
  define the observation interpretation as follows:
  \[\obsint(\obs_{0})\egdef      \{(\pos,\pos)\mid \pos\in\setpos\},\]
  meaning that agent $a_{0}$ has perfect information, and
 for $i\in [k]$, $\obsint(\obs_{i})$ is the smallest reflexive
 relation such that
\[\obsint(\obs_{i})\supseteq \bigunion_{\sstate,\sstate'\in\setstates}\{(\pos_{\sstate},\pos_{\sstate'}),(\pos_{\sstate,i},\pos_{\sstate',i})\mid \sstate
\oequiv[\cobs_{i}] \sstate' \}.\] We explain the latter
definition. First, observe that for every finite play $\fplay$ in
$\CGS^{\CKS}$ that stays in
$\setpos_{\setstates}=\{\pos_{\sstate}\mid\sstate\in\setstates\}$,
writing $\fplay=\pos_{\sstate_{0}}\ldots \pos_{\sstate_{n}}$, one can
associate a finite path $\spath_{\fplay}=\sstate_0\ldots\sstate_{n}$ in
$\CKS$. This mapping actually defines a bijection between the set of
finite paths in $\CKS$ that start in $\sstate_{\init}$ and the set of
finite plays in $\CGS^{\CKS}$
that remain in $\setpos_{\setstates}$.

Now, according to
the definition of the transition function, a strategy $\strat_{i}$
for agent $i$ with $i\in [k]$ is only relevant on finite plays of
the form $\fplay=\fplay'\cdot \pos_{\sstate,i}$, where
$\fplay'\in\setpos_{\setstates}^{*}$, and $\strat_{i}(\fplay)$
is meant to determine whether $p_{i}$ holds in $\spath_{\fplay'}$. If
$\strat_{i}$ is $\obs_{i}$-uniform, by definition of
$\obsint(\obs_{i})$, it determines an $\cobs_{i}$-uniform labelling
for $p_i$ in $\unfold[\CKS]{\sstate_{\init}}$. Reciprocally, an
$\cobs_{i}$-uniform labelling for $p_i$ in
$\unfold[\CKS]{\sstate_{\init}}$ induces an
$\obsint(\obs_{i})$-strategy for agent $\ag_{i}$.
It remains to transform $\Phi$ into an \SLi-formula.
\newcommand{\trQCTL}[2][\cobs_i]{\widetilde{#2}}

We define the \SLi formula $\trQCTL{\Phi}$ by induction on
$\Phi$ as follows:
\begin{align*}
\trQCTL{p}         & \egdef
\begin{cases}
\Eout \X \X p &
  \text{if }p=p_{i}\\
  p & \text{otherwise}
\end{cases}
\\
\trQCTL{\neg \phi}     & \egdef \neg \trQCTL{\phi}\\
\trQCTL{\phi_1\ou\phi_2} & \egdef \trQCTL{\phi_1}\ou\trQCTL{\phi_2} \\
\trQCTL{\E \psi}         & \egdef 
 \Eout(\always p_{\setstates}\et\trQCTL{\psi})\\
\trQCTL{\existsp[p_{i}]{\cobs_{i}} \phi}         & \egdef
\Estrato[\var_{i}]{\obs_{i}}\bind[\ag_i]{\var_i}\trQCTL[\cobs_{i}]{\phi}.
\end{align*}
The cases for path formulas are obtained by distributing over the operators.

Observe that player 0 is never bound to a strategy. In the case for atomic
propositions, the existential
quantification on outcomes  thus lets
player 0 choose to move to a position where agent $i$ fixes the value
for $p_i$ according to his strategy, fixed by the strategy quantifier
in the translation for formulas of the form
$\existsp[p_{i}]{\cobs_{i}} \phi$. In the translation of formulas of
the form $\E\psi$, the existential quantification on outcomes lets
player 0 choose a path in the original \CKS $\CKS$.

We have the following:
\begin{lemma}
  \label{lem-redux-QCTL}
  $\CKS\models\Phi\quad\text{if and only if}\quad \CGS^{\CKS}\models\trQCTL[{[n]}]{\Phi}$.
\end{lemma}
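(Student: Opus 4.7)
I would prove Lemma~\ref{lem-redux-QCTL} via a stronger statement by structural induction on subformulas of $\Phi$, leaning on two correspondences that are implicit in the construction of $\CGS^{\CKS}$. The first is a bijection between finite paths $\spath = \sstate_0\ldots\sstate_m$ of $\CKS$ starting at $\sstate_\init$ and finite plays $\fplay = \pos_{\sstate_0}\ldots \pos_{\sstate_m}$ of $\CGS^{\CKS}$ that stay in $\setpos_\setstates \egdef \{\pos_\sstate \mid \sstate \in \setstates\}$; this is already noted in the construction, and since $p_\setstates$ labels exactly these positions, the conjunct $\always p_\setstates$ in the translation of $\E\psi$ singles out precisely the plays that correspond to paths of $\CKS$. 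The second correspondence is between $\cobs_i$-uniform $p_i$-labellings of $\unfold[\CKS]{\sstate_\init}$ and $\obs_i$-strategies of agent $a_i$ in $\CGS^{\CKS}$: given such a $\plab$, define $\strat^{\plab}_i(\fplay \cdot \pos_{\sstate,i}) \egdef \{\mov^i\}$ if $\plab(\spath_\fplay \cdot \sstate) = 1$ and $\{\mov\}$ for some fixed $\mov \neq \mov^i$ otherwise (with an arbitrary choice on irrelevant histories); conversely, from $\strat_i$ set $\plab_{\strat_i}(\spath_\fplay \cdot \sstate) \egdef 1$ iff $\mov^i \in \strat_i(\fplay \cdot \pos_{\sstate,i})$. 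The fact that $\obsint(\obs_i)$ is the reflexive closure of the identifications $(\pos_\sstate,\pos_{\sstate'})$ and $(\pos_{\sstate,i},\pos_{\sstate',i})$ for $\sstate \oequiv[\cobs_i] \sstate'$ is exactly what makes $\cobs_i$-uniformity match $\obs_i$-uniformity of the induced strategies on the relevant histories.

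I would then prove by induction on the subformula $\phi$ of $\Phi$: for every subset $J$ of $\APq(\Phi)$ containing the atoms quantified by an outer scope, every family $(\plab_i)_{i\in J}$ of $\cobs_i$-uniform $p_i$-labellings, every finite play $\fplay \in \setpos_\setstates^+$ starting at $\pos_\init$, letting $\assign_J \egdef [a_i \mapsto \strat^{\plab_i}_i]_{i \in J}$,
\[
\unfold[\CKS]{\sstate_\init} \text{ extended by } (\plab_i)_{i \in J},\ \spath_\fplay \modelst \phi \iff \CGS^{\CKS},\assign_J,\fplay \modelsSL \trQCTL{\phi}.
\]
The Boolean and non-quantified atomic cases are immediate. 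For a quantified atom $p_i$: the translation is $\Eout\X\X p_i$, and since $a_0$ is never bound, the outcome quantifier lets her play $\mov^i$ from $\pos_{\last(\fplay)}$ into $\pos_{\last(\fplay),i}$; from there $a_i$ follows $\strat^{\plab_i}_i$, which reaches $\pos_{p_i}$ iff $\mov^i \in \strat^{\plab_i}_i(\fplay \cdot \pos_{\last(\fplay),i})$, iff $\plab_i(\spath_\fplay) = 1$. For $\E\psi$ translated as $\Eout(\always p_\setstates \et \trQCTL{\psi})$: the play bijection extends to infinite plays, and outcomes of $\assign_J$ from $\fplay$ satisfying $\always p_\setstates$ correspond to paths of $\CKS$ through $\spath_\fplay$ (since $a_0$ unbound means she may choose any successor-encoding action $\mov^{\sstate'}$ with $\relation(\sstate,\sstate')$, but no $\mov^i$ for $i \geq 1$ since that would exit $\setpos_\setstates$); apply the induction hypothesis on $\psi$. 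For $\existsp[p_i]{\cobs_i}\phi$ translated as $\Estrato[\var_i]{\obs_i}\bind[\ag_i]{\var_i}\trQCTL{\phi}$: the labelling-strategy correspondence makes the quantifiers coincide, and we conclude by applying the induction hypothesis to $\phi$ with $J \cup \{i\}$.

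\textbf{Main obstacle.} The delicate point is threading the labelling-strategy correspondence through the induction while respecting the slight mismatch of domains: a strategy is total on all histories but only relevant on histories of the form $\fplay \cdot \pos_{\sstate,i}$ with $\fplay \in \setpos_\setstates^*$, and a tree labelling is defined on all nodes of the unfolding. One must verify that the ``junk'' choices made on irrelevant histories do not affect the semantics (they do not, because $\assign_J$ is only consulted along outcomes, and the only semantically meaningful outcomes either stay in $\setpos_\setstates$ or take a single detour through some $\pos_{\sstate,i}$), and, symmetrically, that extending a $\cobs_i$-uniform labelling to additional ``irrelevant'' nodes of the tree preserves uniformity. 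Once this bookkeeping is made precise, the induction closes cleanly and the lemma follows by applying the equivalence at $\phi = \Phi$, $J = \emptyset$ and $\fplay = \pos_\init$.
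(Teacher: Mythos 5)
Your proposal is correct and follows exactly the argument the paper intends but leaves implicit: the paper states the lemma without proof, relying on the two correspondences it sets up just beforehand (the bijection between finite paths of $\CKS$ and plays of $\CGS^{\CKS}$ confined to $\setpos_{\setstates}$, and the match between $\cobs_i$-uniform $p_i$-labellings and $\obs_i$-strategies of agent $a_i$), which are precisely the pillars of your induction. Your strengthened inductive statement and the bookkeeping you flag about irrelevant histories and nondeterministic advice sets are the right way to make that sketch rigorous.
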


We observe that
if $\Phi$ is hierarchical, then $(\trQCTL[{[n]}]{\Phi},\CGS^{\CKS})$
is a hierarchical instance, and:

\begin{lemma}
  \label{lem-obs-SLi}
  For every $p\in\APfree(\Phi)$ and for every $i\in [k]$, if
  $\unfold{\sstate_{\init}}$ is $\cobs_{i}$-uniform in $p$ then
  $\pos\obseq[\obs_{i}]\pos'$ implies that $p\in\val(\pos)$ iff $p\in\val(\pos')$.
\end{lemma}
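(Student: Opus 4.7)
The plan is to perform a case analysis on the pair $(\pos,\pos')$ using the explicit definition of $\obsint(\obs_i)$ given in the construction of $\CGS^{\CKS}$. That relation is reflexive and additionally equates $\pos_\sstate$ with $\pos_{\sstate'}$ (and $\pos_{\sstate,i}$ with $\pos_{\sstate',i}$) whenever $\sstate\oequiv[\cobs_i]\sstate'$. Consequently, $\pos\obseq[\obs_i]\pos'$ forces one of three shapes: $\pos=\pos'$, or $(\pos,\pos')=(\pos_\sstate,\pos_{\sstate'})$ with $\sstate\oequiv[\cobs_i]\sstate'$, or $(\pos,\pos')=(\pos_{\sstate,i},\pos_{\sstate',i})$ with $\sstate\oequiv[\cobs_i]\sstate'$.

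The reflexive case is immediate. For the $\pos_{\sstate,i}$ case, inspection of the labelling $\val'$ shows $\val'(\pos_{\sstate,i})=\emptyset$, so no atom, and in particular $p\in\APfree(\Phi)$, belongs to either side, and the iff holds vacuously. The substantive case is therefore $\pos=\pos_\sstate$, $\pos'=\pos_{\sstate'}$ with $\sstate\oequiv[\cobs_i]\sstate'$; here $\val'(\pos_\sstate)=\lab(\sstate)\cup\{p_\setstates\}$ and likewise for $\sstate'$, and since $p$ is distinct from the fresh atom $p_\setstates$, it suffices to establish that $p\in\lab(\sstate)\iff p\in\lab(\sstate')$.

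To do this I invoke the hypothesis that $\unfold{\sstate_\init}$ is $\cobs_i$-uniform in $p$: it is enough to exhibit two nodes of the unfolding ending in $\sstate$ and $\sstate'$ respectively which are $\cobs_i$-indistinguishable, since the $p$-labelling of any such node is determined by $\lab$ applied to its last state. In the context where this lemma is actually applied, via Proposition~\ref{prop:CLtoSLi}, the \CKS $\trCL{\CKS}$ has a complete transition relation, so the length-$2$ nodes $\sstate_\init\cdot\sstate$ and $\sstate_\init\cdot\sstate'$ both lie in $\unfold{\sstate_\init}$ and are pointwise $\cobs_i$-equivalent (by reflexivity on the initial state and by $\sstate\oequiv[\cobs_i]\sstate'$ on the final one). $\cobs_i$-uniformity then yields $p\in\lab(\sstate)\iff p\in\lab(\sstate')$, as desired. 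The delicate point is precisely this last step: producing a pair of $\cobs_i$-indistinguishable finite paths witnessing the equivalence of $\sstate$ and $\sstate'$ at the state level is not automatic for arbitrary \CKS transition structures, but it is trivial in the complete-transition case coming from \CL, which is the only setting in which Lemma~\ref{lem-obs-SLi} needs to be invoked.
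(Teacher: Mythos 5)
The paper states Lemma~\ref{lem-obs-SLi} without proof, so there is no official argument to measure yours against; judged on its own, your proof is correct and is essentially the only reasonable route. The case analysis is exactly what the definition of $\obsint(\obs_{i})$ as the reflexive closure of $\bigcup_{\sstate,\sstate'}\{(\pos_{\sstate},\pos_{\sstate'}),(\pos_{\sstate,i},\pos_{\sstate',i})\mid \sstate\oequiv[\cobs_{i}]\sstate'\}$ forces; the $(\pos_{\sstate,i},\pos_{\sstate',i})$ case is indeed vacuous because those positions carry the empty label, and $p\neq p_{\setstates}$ disposes of the fresh atom, reducing everything to $p\in\lab(\sstate)\Leftrightarrow p\in\lab(\sstate')$ for $\sstate\oequiv[\cobs_i]\sstate'$. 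The point you flag at the end is genuine and worth making explicit: $\cobs_{i}$-uniformity of $\unfold{\sstate_{\init}}$ only constrains pairs of states that occur as the last letters of two $\cobs_{i}$-indistinguishable \emph{nodes} of the unfolding, so for an arbitrary \CKS the lemma as literally stated can fail (take $\sstate\oequiv[\cobs_i]\sstate'$ with $\sstate'$ unreachable, or reachable only at depths at which no $\cobs_i$-indistinguishable companion node ends in $\sstate$). Your repair --- invoking the completeness of the transition relation of $\trCL{\CKS}$, which places $\sstate_{\init}\cdot\sstate$ and $\sstate_{\init}\cdot\sstate'$ side by side at depth two --- is sound and covers the only place the lemma is invoked, namely its composition with Lemma~\ref{lem-observable} inside the proof of Proposition~\ref{prop:CLtoSLi}. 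If one wanted the lemma to hold as stated for every \CKS, its hypothesis would have to be strengthened to require that every pair of $\cobs_{i}$-equivalent states appears as the pair of last letters of some pair of $\cobs_{i}$-indistinguishable nodes of the unfolding; stating explicitly that the lemma is applied only to the structures $\trCL{\CKS}$, as you do, is the cleaner fix.
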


Combining Lemma~\ref{lem-trans-CL} with Lemma~\ref{lem-redux-QCTL} we
get a reduction from the model-checking problem for \CL to that for
the hierarchical fragment of \SLi, and Lemma~\ref{lem-observable}
together with Lemma~\ref{lem-obs-SLi} show that in the models produced by
this reduction, all atomic propositions are observable to all players.
This implies that in \CL one cannot reason about strategic problems
with unobservable objectives. As a result it does not fully capture
 classic distributed synthesis~\cite{PR90,kupermann2001synthesizing},
 where the specification can talk about all variables, hidden and visible.
It also shows that \CL does not capture in a natural way \ATL with
imperfect information as defined in~\cite[Section
7.1]{DBLP:journals/jacm/AlurHK02}, where imperfect information of
agents is modelled by defining which atomic propositions they can
observe. This, as well as unobservable objectives, can be naturally modelled in \SLi.




\section{Applications}
\label{sec-applications}

In this section we apply Theorem~\ref{theo-SLi} to
decide the existence of Nash Equilibria in hierarchical games of
imperfect information. We then use a similar approach to obtain decidability results
for the rational synthesis problem. In this section,  for a tuple of
agents $\bm{\ag}=(\ag_i)_{i\in[m]}$ and tuple of strategy variables
$\vecvar[\var]=(\var_i)_{i\in[m]}$, we let
$\vecbind[\bm{\ag}]{\vecvar[\var]}$ be a macro
for $\bind[\ag_1]{\var_1}\ldots\bind[\ag_m]{\var_m}$, and similarly
for the unbinding operator $\unbind[\bm{\ag}]$ which stands for $\unbind[\ag_1]\ldots\unbind[\ag_m]$.

\subsection{Existence of Nash Equilibria in games with hierarchical observations}
\label{sec-NE-hier}

A Nash equilibrium in a game is a tuple of strategies such that no
player has an incentive to deviate. Let $\Agf = \{\ag_i : i \in
[n]\}$. Assuming that agent $\ag_i$ has observation $\obs_i$ and \LTL goal $\psi_i$, the following  \SLi formula 
 expresses the
existence of a Nash equilibrium:
\begin{align*}
 \Phi_\textsc{NE} \egdef & \Estrato[x_1]{\obs_1} \dots  \Estrato[x_n]{\obs_{n}}  \bind[\bm{\ag}]{\bm{\var}} \bigwedge_{i \in [n]} \Big[
 \Big( \Estrato[y_i]{\obs_i} (a_i,y_i)\, \A\psi_i \Big) \to \A\psi_i\Big ] 
\end{align*}
where $\bm{\ag}=(\ag_i)_{i\in[n]}$ and $\bm{\var}=(\var_i)_{i\in[n]}$.

\basl{Nash equilibria do not always exist when one restricts attention
  to pure strategies, as we do in this work. This is the case already
  in finite games, and by extension also in the infinite concurrent games
  played on graphs that we consider.  This motivates the study of the
  Nash equilibria existence problem in such games. In the perfect
  information case, the problem has been solved for $\omega$-regular
  objectives, as well as more complex semi-quantitative
  objectives~\cite{DBLP:journals/corr/BouyerBMU15}.  When moving to
  imperfect information, for two players the problem is decidable for LTL
  objectives~\cite{DBLP:journals/iandc/GutierrezPW18} and parity
  objectives~\cite{filiot2018rational}. However, as for distributed
  synthesis, existence of
  Nash equilibria becomes undecidable for more than two players. This
result is proved in~\cite{DBLP:conf/fossacs/Bouyer18} for constrained Nash equilibria (when one specifies for
  each player whether her objective is satisfied or not),  and 
 in~\cite{DBLP:journals/iandc/GutierrezPW18} for
  unconstrained equilibria. In both cases the proof proceeds by reduction from the distributed
  synthesis problem~\cite{peterson2001lower,PR90}.

  The only known decidable cases for more than two players assume that
  all players receive the same
  information. In~\cite{DBLP:conf/fossacs/Bouyer18} the problem is
  solved on games where players observe the evolution of the game via
  \emph{public signals} and objectives are given by visible parity
  conditions or mean-payoff functions.
  In~\cite{DBLP:conf/ijcai/BelardinelliLMR17}, an epistemic extension
  of strategy logic is used to solve the existence of Nash equilibria
  on games with \emph{broadcast actions} for objectives given as
  formulas from epistemic temporal logic. A stronger notion of Nash
  equilibria, called \emph{locally consistent equilibria}, is studied
  in~\cite{ramanujam2010communication}. In a locally consistent
  equilibrium, each player's strategy has to be a best response not
  only to other players' strategies in the equilibrium, but also to
  all strategies that are indistinguishable from those in the
  equilibrium. It is proved in~\cite{ramanujam2010communication} that
  the existence of such equilibria is decidable on a model of games
  close in spirit to those with public signals studied
  in~\cite{DBLP:conf/fossacs/Bouyer18}.

  Here we show that the
  existence of Nash equilibria is decidable for $n$ players when
  observations are hierarchical and objectives are given as \LTL formulas. Note that this result is orthogonal
  to those described above, which all allow in a way or another some
  non-hierarchical information: in~\cite{DBLP:conf/fossacs/Bouyer18}
  players know their own actions in addition to the public signals,
  in~\cite{ramanujam2010communication} they know their private local
  state, and in~\cite{DBLP:conf/ijcai/BelardinelliLMR17} they can have
  incomparable initial knowledge of the situation.}


\begin{definition}
A \CGSi $\CGSi$ presents \emph{hierarchical
  observation}~\cite{DBLP:journals/acta/BerwangerMB18} if
the ``finer-than'' relation is a total ordering, i.e., if for all
$\obs,\obs' \in \Obsf$, either $\obsint(\obs) \subseteq \obsint(\obs')$
or $\obsint(\obs') \subseteq \obsint(\obs)$.  
\end{definition}

Let $\CGSi$ be a \CGSi with hierarchical observation, and since all agents have symmetric roles in the problem considered,
assume without loss of generality that $\obsint(\obs_n)\subseteq\ldots\subseteq\obsint(\obs_1)$.

Because of the nested strategy quantifiers $\Estrato[y_i]{\obs_i}$, the instance $(\CGSi,\Phi_\textsc{NE})$  is \emph{not}
 hierarchical  even if $\CGSi$ yields hierarchical observation (unless $\obsint(\obs_i) = \obsint(\obs_j)$ for all
 $i,j\in[n]$). \bas{However, considering
the special observation symbol $\obs_p$ that is always interpreted as
the identity relation (and thus represents perfect observation),
and letting
\begin{align*} 
\Phi' \egdef & \Estrato[x_1]{\obs_1} \dots  \Estrato[x_n]{\obs_{n}}  \bind[\bm{\ag}]{\bm{\var}} \bigwedge_{i \in [n]} \Big[
 \Big ( \Estrato[y_i]{\obs_{p}} (a_i,y_i)\, \Eout\psi_i \Big ) \to \Eout\psi_i \Big],
\end{align*}
we have that $\Phi'$ forms a hierarchical instance with any \CGSi 
that presents hierarchical observation. Besides, we can prove that for deterministic strategies, $\Phi'$ is
equivalent to $\Phi_\textsc{NE}$:

\begin{lemma}
  \label{lem-eq-formulas-NE}
  If we consider deterministic strategies, then $\Phi_\textsc{NE}\equiv \Phi'$.
\end{lemma}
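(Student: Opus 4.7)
\textbf{Proof plan for Lemma~\ref{lem-eq-formulas-NE}.} The two formulas differ in two places inside the bracketed implication for each player $i$: the quantifier over the deviation is $\Estrato[y_i]{\obs_i}$ (resp. $\Estrato[y_i]{\obs_p}$), and the outcome operator is $\A$ (resp. $\Eout$). My plan is to show the equivalence by attacking each direction separately, using in both directions the fact that once all agents are bound to deterministic strategies the outcome of the current assignment is a single play, so $\Eout\psi_i \equiv \Aout\psi_i$; thus in both formulas the consequent and the inner body under the deviation quantifier coincide up to replacing $\A$ by $\Eout$.

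For the direction $\Phi' \Rightarrow \Phi_\textsc{NE}$, I would argue that every $\obs_i$-strategy is by definition also an $\obs_p$-strategy (the identity relation refines $\obsint(\obs_i)$), so the existential over $\obs_i$-strategies is weaker than the one over $\obs_p$-strategies. Hence the antecedent of the $i$-th conjunct of $\Phi_\textsc{NE}$ implies the antecedent of the $i$-th conjunct of $\Phi'$, while the consequents are equivalent under determinism, giving the implication.

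For the direction $\Phi_\textsc{NE} \Rightarrow \Phi'$, the content of the lemma lies here. Fix deterministic strategies $\strat_1,\dots,\strat_n$ witnessing $\Phi_\textsc{NE}$, and suppose for contradiction that some perfect-information deviation $y_i$ of agent $i$ satisfies $\psi_i$ in the outcome. Because all strategies involved are deterministic, $(\strat_1,\dots,\strat_{i-1},y_i,\strat_{i+1},\dots,\strat_n)$ admits a unique outcome $\iplay$, and $\iplay\models\psi_i$. The key step is to build from $y_i$ and $\iplay$ an $\obs_i$-strategy $\strat'_i$ whose unique outcome against $(\strat_j)_{j\neq i}$ is exactly $\iplay$. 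I would define
\[
\strat'_i(\fplay) \egdef
\begin{cases}
 y_i(\iplay_{\leq t}) & \text{if } \fplay \obseq[\obs_i] \iplay_{\leq t} \text{ for some } t,\\
 \act_0 & \text{otherwise,}
\end{cases}
\]
where $\act_0$ is any fixed action. This is well defined because synchronous perfect recall preserves length, so at most one such $t$ exists; it is $\obs_i$-uniform by transitivity of $\obseq[\obs_i]$; and it is deterministic. By a straightforward induction on the length of the prefix of $\iplay$, the unique outcome of $\strat'_i$ against the fixed $(\strat_j)_{j\neq i}$ is $\iplay$, hence satisfies $\psi_i$. This makes the antecedent of the $i$-th conjunct of $\Phi_\textsc{NE}$ true, so by assumption the outcome under $(\strat_1,\dots,\strat_n)$ also satisfies $\psi_i$, which is precisely the consequent of $\Phi'$.

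The only real obstacle is ensuring that $\strat'_i$ is well defined and $\obs_i$-uniform; the ``length preserved'' property of synchronous perfect recall (recalled in Section~\ref{sec-SLmodels}) is what makes this work, and it is also the reason the construction would fail under asynchronous perfect recall. No hierarchical assumption on observations is needed for the lemma itself: the argument only uses that the strategies of all agents other than $i$ are deterministic, so that a targeted $\obs_i$-strategy can mimic a perfect-information one against a fixed deterministic background.
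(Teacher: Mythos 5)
Your proposal is correct and follows essentially the same route as the paper: both reduce the $\A$/$\Eout$ difference to uniqueness of the outcome under deterministic assignments, note that the $\obs_i$-to-perfect-information direction is immediate, and handle the converse by building an $\obs_i$-uniform strategy that copies the perfect-information deviation along its unique outcome ($\sigma'(\fplay)\egdef\sigma(\pi_{\leq|\fplay|-1})$ when $\fplay\obseq[\obs_i]\pi_{\leq|\fplay|-1}$, arbitrary otherwise), exactly as in the paper. The only cosmetic difference is that you phrase the second direction as a contradiction when it is really a direct implication.
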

}
\begin{proof}
\bas{Concerning the universal versus existential 
  quantification on outcomes, it
  is enough to observe that  assigning a
  deterministic strategy to each agent determines a unique
  outcome. Next, to  change each inner $\obs_i$ for $\obs_p$,
we exploit the fact that} in a one-player game of
  partial observation (such a game occurs when all but one player have
  fixed their strategies),
  the player has a strategy enforcing some goal iff she has a
  uniform strategy enforcing that goal.  

To
see this, it is enough to establish that for every  \CGSi $\CGSi$ and
position $\pos$,
\[ \CGSi,\assign,\pos \models \Estrato[y_i]{\obs_{p}} (a_i,y_i)\,\Eout \psi_i \equivaut \Estrato[y_i]{\obs_i} (a_i,y_i)\,\Eout \psi_i,
\]
for every $i \in [n]$ and every assignment $\assign$ such that
$\assign(a_j)$ is defined for all $j$.

To this end, fix $i$ and $\assign$. The right-to-left implication is
immediate (since $\obs_{p}$ is finer than $\obs_i$).  For the converse, let
$\sigma$ be an $\obs_p$-strategy (\ie, a perfect-information strategy) such
that $\CGSi',\assign',\pos_{\init} \models
\psi_i$, where $\assign'=\assign[y_i \mapsto \sigma,\ag_i \mapsto \sigma]$.
Because we consider deterministic strategies and $\assign'$ assigns a
strategy to each agent, it defines a unique outcome $\pi$ from the initial
position, \ie, $\out(\assign',\pos_\init)=\{\pi\}$. We construct an
$\obs_i$-strategy $\sigma'$  such that if $\ag_i$ uses it instead of
$\sigma$, we obtain the same outcome $\pi$, \ie,
$\out(\assign'',\pos_\init)=\{\pi\}$, where
$\assign''=\assign[y_i\mapsto \sigma',\ag_i\mapsto\sigma']$.
 This can be done as
follows: if $\fplay \obseq[\obs_{i}] \pi_{\leq |\fplay|-1}$ then define
$\sigma'(\fplay) \egdef \sigma(\pi_{\leq |\fplay|-1})$, and otherwise
let $\sigma'(\rho) \egdef c$ for some fixed action $c \in \Act$.
It is easy to see that $\sigma'$ is an $\obs_i$-strategy and that
$\assign''$ produces the same outcome as $\assign$ from $\pos_\init$.
\end{proof}

\bas{
\begin{corollary}
  \label{cor-NE}
If we consider deterministic strategies, then the existence of Nash
Equilibria in games with hierarchical observation and $k$ different observations is in \kEXPTIME[(k+1)].
\end{corollary}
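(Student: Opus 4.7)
The plan is to reduce the Nash equilibria existence problem to model checking $\Phi'$ on $\CGSi$, verify that the resulting instance is hierarchical, compute the simulation depth, and apply Proposition~\ref{prop-compl-SLi}.

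First, by Lemma~\ref{lem-eq-formulas-NE}, under the restriction to deterministic strategies, a Nash equilibrium exists in $\CGSi$ if and only if $\CGSi \models \Phi'$. So the problem reduces to model checking $\Phi'$ on $\CGSi$.

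Next, I would verify that $(\CGSi,\Phi')$ is a hierarchical \SLi-instance. Since $\CGSi$ has hierarchical observation, the game's observation relations are linearly ordered by inclusion. After a semantics-preserving reordering of the leading strategy quantifiers $\Estrato[x_1]{\obs_1}\ldots\Estrato[x_n]{\obs_n}$ so that the coarsest observation appears outermost (existential strategy quantifiers commute, so this does not affect truth), observations strictly refine (or stay equal) as one descends any branch of the syntactic tree of $\Phi'$. The innermost quantifier $\Estrato[y_i]{\obs_p}$ uses the identity relation, which is the finest of all, so Definition~\ref{def-hierarchical-formula} is met.

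Then I would compute the simulation depth of $\Phi'$ to be $k$ by tracing the inductive definition of $\ndd$ from the inside out. The key observations are: $\Eout\psi_i$ contributes $(0,\nd)$ since $\psi_i\in\LTL$; bindings preserve $\ndd$; the inner $\Estrato[y_i]{\obs_p}$ adds no depth (either because $\obs_p$ is already among the game's observations, or because one can equivalently replace it by the finest game observation $\obs_n$ by the one-player argument used in the proof of Lemma~\ref{lem-eq-formulas-NE}); the boolean structure $(\neg A)\vee B$ of the implication and the conjunction introduce alternation but no new simulation; and each outer strategy quantifier strictly coarser than what lies below adds one to the simulation depth, while adjacent quantifiers sharing the same observation collapse (the clause $\obsint(\obs)=\obsint_\phi$ and $\ndd_x(\phi)=\nd$ is then met). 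Grouping the $x_i$'s by their observation level and counting one increment per distinct level yields simulation depth $k$.

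Applying Proposition~\ref{prop-compl-SLi} to the hierarchical instance $(\CGSi,\Phi')$ of simulation depth at most $k$ then yields the $(k+1)$-\EXPTIME upper bound. The main obstacle is the bookkeeping in the last step: one must carefully track $\obsint_\phi$ at each level to ensure that the presence of the very fine inner observation does not spuriously prevent the collapse of same-observation blocks in the outer strategy prefix. This is precisely the role of the (minor) substitution of $\obs_n$ for $\obs_p$ in the inner quantification, which is legitimate since Lemma~\ref{lem-eq-formulas-NE}'s one-player argument goes through for any observation at least as fine as each $\obs_i$.
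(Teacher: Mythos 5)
Your proof follows the paper's argument essentially verbatim: reduce the existence of a Nash equilibrium to model checking $\Phi'$ on $\CGSi$ via Lemma~\ref{lem-eq-formulas-NE}, observe that $(\CGSi,\Phi')$ is a hierarchical instance, compute $\ndd(\Phi')=(k,\nd)$, and conclude by Proposition~\ref{prop-compl-SLi}. The only difference is that the paper simply asserts the simulation-depth computation where you elaborate on it (reordering the existential block, and optionally replacing the inner $\obs_p$ by the finest game observation via the one-player argument); these refinements are sound and, if anything, more careful than the paper's own bookkeeping.
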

}
\begin{proof}
  \bas{
  Deciding the existence of a Nash Equilibrium in a \CGSi $\CGSi$
  amounts to model-checking formula $\Phi_\textsc{NE}$ in $\CGSi$,
  which by Lemma~\ref{lem-eq-formulas-NE} is equivalent to
  model-checking $\Phi'$ in $\CGSi$ if we restrict to deterministic
  strategies.  Because $\Phi'$ forms hierarchical instances with games
  that yield hierarchical observation, by Theorem~\ref{theo-SLi} we
  can model check it on such games. Now because each $\psi_i$ is an
  \LTL formula, we have that
  \begin{align*}
  \ndd\left(\Estrato[y_i]{\obs_{p}} (a_i,y_i)\, \Eout\psi_i\right ) &= (0,\nd),\\
  \ndd\left (\bigwedge_{i \in [n]} \Big[ \Big( \Estrato[y_i]{\obs_{p}} (a_i,y_i)\,
  \Eout\psi_i \Big) \to \Eout\psi_i \Big]\right ) & =(0,\alt),    
  \end{align*}
 and finally we obtain that  $\ndd(\Phi')=(k,\nd)$,
  where $k$ is the number of different observations in $\CGSi$, \ie,
  $k=|\{\obsint(\obs_1),\ldots,\obsint(\obs_n)\}|$.
  By Proposition~\ref{prop-compl-SLi}, we can model check $\Phi'$ on $\CGSi$
  in time $(k+1)$-exponential, which concludes.}
\end{proof}

We now show that, using the same trick, our main result can be applied
to solve a more general problem called \emph{rational synthesis}.


\subsection{Rational distributed synthesis in games with hierarchical observations}
\label{sec-rat-synth}

In classic synthesis, the environment is considered monolithic
and ``hostile'', in the sense that the system to be synthesised should be
able to deal with all possible behaviours of the environment, even the
most undesirable ones. This is a very strong requirement that can
not always be met. When the environment can be considered rational,
and its objective is known, it is reasonable to relax this requirement
by asking that the system to synthesise behave well against the
\emph{rational} behaviours of the environment. 
This problem is known as the \emph{rational synthesis} problem~\cite{fisman2010rational,DBLP:journals/amai/KupfermanPV16,DBLP:conf/icalp/ConduracheFGR16,filiot2018rational}.
In the setting considered in the works
above-mentioned, the system is seen as an agent $a$  and the environment is composed of
several  components, say $\{e_1,\ldots,e_m\}$, that are assumed to be
rational and follow individual
objectives. While~\cite{DBLP:conf/icalp/ConduracheFGR16} and~\cite{filiot2018rational}  consider
various types of objectives such as reachability, safety or parity,
here we consider \LTL
objectives as is done in
\cite{fisman2010rational,DBLP:journals/amai/KupfermanPV16}: the
specification for the system is an \LTL formula $\psi_g$, and the
objective of each component $e_i$ of the environment is an \LTL formula $\psi_i$.
However note that the decidability results we establish would also
hold for arbitrary omega-regular objectives.

\subsubsection{Rational synthesis: state of the art}
\label{sec-rat-syn-sota}
Two variants of the rational synthesis problem have been considered:
the \emph{cooperative} one, in which it is possible to tell the  environment
how to behave, as long as the suggested behaviour for each component
forms an equilibrium, and the \emph{non-cooperative} one, in which
the components of the environment may have any behaviour that forms an equilibrium.
The existence of a solution to these problems can be expressed by the
formulas $\Phiratc$ and $\Phiratnc$, respectively, defined as follows:
\begin{align*}
\Phiratc&\egdef  \Estrato[\var]{\obs_p}\Estrato[\varb_1]{\obs_p}\ldots\Estrato[\varb_{m}]{\obs_p}\bind[\ag]{\var}\vecbind{\vecvar}
\,\phiauxgen \wedge \A\psi_g\\
\Phiratnc&\egdef \Estrato[\var]{\obs_p}\Astrato[\varb_1]{\obs_p}\ldots\Astrato[\varb_{m}]{\obs_p}\bind[\ag]{\var}\vecbind{\vecvar}
\,\phiauxgen\to \A\psi_g 
 \end{align*}
 where $\bm{e}=(e_i)_{i\in[m]}$, $\bm{\varb}=(\varb_i)_{i\in[m]}$, and
 $\phiauxgen$ expresses that $\vecvar$ forms an equilibrium for the
 environment.  Also, as in the previous section, $\obs_p$ represents
 the perfect-information observation.  Three different kinds of
 equilibria are considered in~\cite{DBLP:journals/amai/KupfermanPV16}:
 profiles of dominant strategies, Nash equilibria, and subgame-perfect
 equilibria. 
 Here we only consider Nash
 equilibria, because subgames of games with imperfect information
 should start in situations where all players have perfect information
 of the state, which we do not know how to express in \SLi; and for
 dominant strategies, the natural formula to express them does not
 give rise to non-trivial decidable cases in the imperfect-information
 setting that we introduce later. The rational synthesis problem for
 Nash equilibria 
 is obtained by replacing $\phiauxgen$ in the above formula with:
 \begin{align*}
   \phiauxNE&\egdef    \bigwedge_{i\in [m]}\Big[
 \Big ( \Estrato[\varb'_i]{\obs_{p}} (e_i,\varb'_i)\, \A\psi_i \Big ) \to
              \A\psi_i \Big]
 \end{align*}


\basl{It is proved in~\cite{DBLP:journals/amai/KupfermanPV16} that
  these problems are decidable for perfect information. Concerning
  imperfect information,  because the existence of Nash equilibria is
  undecidable for three players, the problem is undecidable when
  the environment consists of at least three
  components~\cite{filiot2018rational}.
  Three decidable cases are known: when the environment consists of a
  single component~\cite{filiot2018rational},  when actions of all components are
  public~\cite{DBLP:conf/ijcai/BelardinelliLMR17},
  and when only the system has imperfect information
  while the (finitely many) components of the environment are
  perfectly informed~\cite{filiot2018rational}.

We now extend the latter result by defining  a generalisation of
the rational synthesis problem that we call \emph{rational distributed 
  synthesis}, and solving it in the case of hierarchical
information. The case where the environment is perfectly informed and
the system consists of a single component, solved in~\cite{filiot2018rational}, is a particular case of our
Corollary~\ref{cor-ncRS} below\footnote{We only consider \LTL
  objectives, but  our automata construction
can be adapted  to handle all $\omega$-regular objectives.}.
However the other decidability result established
in~\cite{filiot2018rational} does not assume hierarchical information,
and thus cannot be derived from the results we now present.}

\subsubsection{Rational distributed synthesis}
While for perfect information, distributed synthesis amounts to
synthesis for a single meta-component \basl{which tells each component
  what to do}, 
in the context of imperfect information it makes sense to consider
that the system to be synthesised is composed of various components
$\{a_1,\ldots,a_n\}$ with different observation power, say $\obs_i$
for component $a_i$. We also let $\obs^e_i$ be the observation of the
environment's component $e_i$, for $i\in [m]$.
 
We consider the imperfect-information variants of cooperative and
non-cooperative rational synthesis defined by the following \SLi formulas:
\begin{align*}
  \Phiratci&\egdef  \Estrato[\var_1]{\obs_1}\ldots\Estrato[\var_n]{\obs_n}\Estrato[\varb_1]{\obs^e_1}\ldots\Estrato[\varb_{m}]{\obs^e_m}\vecbind[\bm{\ag}]{\bm{\var}}\vecbind{\vecvar}
             \,\phiauxgen \wedge \A\psi_g\\
  \Phiratnci&\egdef \Estrato[\var]{\obs_1}\ldots\Estrato[\var_n]{\obs_n}\Astrato[\varb_1]{\obs^e_1}\ldots\Astrato[\varb_{m}]{\obs^e_m}\vecbind[\bm{\ag}]{\bm{\var}}\vecbind{\vecvar}
              \,\phiauxgen\to \A\psi_g 
 \end{align*}

The formula for Nash equilibrium is
adapted as follows:
 \begin{align*}
   \phiauxNEi&\egdef    \bigwedge_{i\in [m]}\Big[
 \Big ( \Estrato[\varb'_i]{\obs^e_{i}} (e_i,\varb'_i)\, \A\psi_i \Big ) \to
              \A\psi_i \Big]
 \end{align*}

The only difference with the perfect-information case 
 is that we use the observation of the different
 components of the environment instead of the perfect-information
 observation.
 
 We call the problems expressed by formulas $\Phiratci$ and
 $\Phiratnci$  \emph{cooperative rational distributed
   synthesis} and \emph{non-cooperative rational distributed
   synthesis}, respectively.
 As in the previous section on the existence of Nash equilibria, one
 can see that even if there is a total hierarchy on all observations,
 these formula do not yield hierarchical instances unless all
 observations are the same. However, the trick
 applied in the proof of Corollary~\ref{cor-NE} also applies here,
 both for Nash equilibria and subgame-perfect equilibria, \ie, we can
 replace each $\obs^e_i$ with $\obs_p$ in $\phiauxNEi$  without affecting the semantics of formulas $\Phiratci$
 and $\Phiratnci$.  As a result, when there is a hierarchy on
 observations $\obs_1,\ldots,\obs_n,\obs^e_1,\ldots,\obs^e_m$, the
 cooperative rational distributed synthesis is decidable. 
 \begin{corollary}
   \label{cor-cRS}
   If we consider deterministic strategies and hierarchical
   observations, then  cooperative rational distributed synthesis
   is decidable. 
 \end{corollary}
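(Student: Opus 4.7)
The plan mirrors the approach used in Corollary \ref{cor-NE}. I construct a modified formula $\widetilde{\Phiratci}$ from $\Phiratci$ in two steps: first, replace each inner observation $\obs^e_i$ occurring inside $\phiauxNEi$ by the perfect-information observation $\obs_p$; second, permute the top-level block of existential strategy quantifiers so that their observations appear in decreasing order of coarseness from outer to inner. The second step is legitimate because the hierarchy on observations is assumed total, the strategy variables are pairwise distinct, and existential quantifiers commute at the same nesting depth. The result then follows from three claims.

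The first claim is that for deterministic strategies, $\CGSi \models \Phiratci$ if and only if $\CGSi \models \widetilde{\Phiratci}$. The reordering of the outer existential block is semantics-preserving. For the substitution $\obs^e_i \mapsto \obs_p$ inside $\phiauxNEi$, I would adapt the argument from the proof of Lemma \ref{lem-eq-formulas-NE}. Once the outer strategies $\var_1,\dots,\var_n,\varb_1,\dots,\varb_m$ have been bound to $a_1,\dots,a_n,e_1,\dots,e_m$ respectively, the inner subformula $\Estrato[\varb'_i]{\obs^e_i} (e_i,\varb'_i)\,\A\psi_i$ is evaluated in a context in which only the strategy of $e_i$ varies; because all strategies are deterministic, any full assignment yields a unique outcome $\pi$. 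If a perfect-information strategy $\sigma$ witnesses the inner quantifier after replacement, one recovers an $\obs^e_i$-uniform strategy $\sigma'$ by copying the action $\sigma$ prescribes along $\pi$ on every finite play $\obs^e_i$-indistinguishable from the corresponding prefix of $\pi$, and playing an arbitrary fixed action elsewhere; this $\sigma'$ is well-defined, $\obs^e_i$-uniform, and still produces $\pi$, hence still satisfies $\psi_i$. The converse direction is trivial since $\obs_p$ is finer than $\obs^e_i$.

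The second claim is that $(\CGSi,\widetilde{\Phiratci})$ is a hierarchical instance in the sense of Definition \ref{def-hierarchical-formula}. The reordered outer block forms a descending chain of comparable observations by construction. The innermost quantifiers $\Estrato[\varb'_i]{\obs_p}$ use perfect information, interpreted as the identity relation, which is finer than every observation in $\Obsf$; so all hierarchy constraints between outer and inner quantifiers hold. Only existential quantifiers appear, so no subtlety about universal quantifiers arises.

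The third step applies Theorem \ref{theo-SLi} directly to $(\CGSi,\widetilde{\Phiratci})$, yielding decidability of model-checking, which combined with the equivalence established in the first claim proves decidability of cooperative rational distributed synthesis in hierarchical games. The main technical obstacle of this sketch is the equivalence step: one must carefully track how the iterated bindings $\vecbind[\bm{\ag}]{\bm{\var}}\vecbind{\vecvar}$ turn each inner quantification into a genuine one-player partial-observation game for $e_i$, and verify that determinism is exactly what makes perfect-information and uniform strategies equi-powerful there; the hierarchy check and the reordering of commuting quantifiers are essentially bookkeeping.
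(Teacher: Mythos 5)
Your proposal is correct and follows essentially the same route as the paper: replace the inner environment observations $\obs^e_i$ in $\phiauxNEi$ by the perfect-information observation $\obs_p$ via the one-player, deterministic-strategy argument of Lemma~\ref{lem-eq-formulas-NE}, arrange the purely existential outer quantifier block along the total order on observations, and then invoke Theorem~\ref{theo-SLi} on the resulting hierarchical instance. Your explicit justification of the reordering of the commuting existential quantifiers is a point the paper handles only implicitly (via the ``without loss of generality'' ordering used for Corollary~\ref{cor-NE}), but it is the same argument.
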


 For the
 non-cooperative variant,  one cannot switch universal
 quantifications on strategies for the environments with 
 existential quantifications for the system in order to obtain
 hierarchical instances, as the resulting formula would then
 capture a different problem. As a consequence, in
 addition to a hierarchy on observations $\obs_1,\ldots,\obs_n,\obs^e_1,\ldots,\obs^e_m$, we need
 to have that the components of the environment observe better than
 the components of the system or, in other words, that the least
 informed component of the environment observes better than the best
 informed component of the system. When it is the case, we say
 that the environment is \emph{more informed} than the system.
 \begin{corollary}
   \label{cor-ncRS}
Non-cooperative rational distributed synthesis
   is decidable   for deterministic strategies and hierarchical
   observations where the environment is more informed than the system. 
 \end{corollary}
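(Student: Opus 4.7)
The plan is to rewrite $\Phiratnci$ into a semantically equivalent formula that, together with the underlying \CGSi, forms a hierarchical instance, and then to invoke Theorem~\ref{theo-SLi}. Two transformations are needed: first, replace the innermost observations that arise from the Nash-equilibrium gadget $\phiauxNEi$ by perfect information (as in Lemma~\ref{lem-eq-formulas-NE}); second, use commutativity of same-polarity strategy quantifiers to sort the two outer blocks by fineness.

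First I would apply the trick from the proof of Lemma~\ref{lem-eq-formulas-NE} to each conjunct of $\phiauxNEi$. When the subformula $\Estrato[\varb'_i]{\obs^e_i}(e_i,\varb'_i)\,\A\psi_i$ is reached during the evaluation of $\Phiratnci$, the assignment has already bound every system component $a_k$ to $\assign(\var_k)$ and every environment component $e_j$ (including $e_i$) to $\assign(\varb_j)$, and all these strategies are deterministic. Thus the only free strategic choice left is $\varb'_i$, so the situation reduces to a one-player game for $e_i$; in such a game, $e_i$ has a uniform $\obs^e_i$-strategy enforcing $\psi_i$ iff she has a perfect-information strategy enforcing $\psi_i$, by the same pointwise construction used in Lemma~\ref{lem-eq-formulas-NE}. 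Hence each inner $\obs^e_i$ may be replaced by $\obs_p$, yielding an equivalent sentence $\Phi'$.

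Next I would exploit that consecutive strategy quantifiers of the same polarity commute. Using that the observations of $\CGSi$ are totally ordered by fineness, I would permute the outer existential block $\Estrato[\var_1]{\obs_1}\ldots\Estrato[\var_n]{\obs_n}$ so that $\obs_1$ is coarsest and $\obs_n$ finest, and similarly permute the universal block $\Astrato[\varb_1]{\obs^e_1}\ldots\Astrato[\varb_{m}]{\obs^e_m}$ from coarsest to finest. It remains to check that $(\CGSi,\Phi')$ is hierarchical; recall that $\Astrato = \neg\Estrato\neg$, so hierarchical-ness concerns the nesting of the underlying $\Estrato$ quantifiers. Within the outer existential block and within the universal block, each inner quantifier now observes at least as much as the outer ones by construction. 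The hypothesis that the environment is more informed than the system gives $\obsint(\obs^e_j)\subseteq\obsint(\obs_i)$ for all $i,j$, so every $\Estrato{\obs^e_j}$ arising from the universal block is finer than every outer $\Estrato{\obs_i}$ in which it is nested. Finally, the innermost $\Estrato[\varb'_i]{\obs_p}$ introduced in the first step carry perfect information, finer than every $\obs_i$ and $\obs^e_j$; and $\psi_g,\psi_i$ are \LTL formulas which introduce no further strategy quantifiers. Hence $(\CGSi,\Phi')$ is hierarchical, and Theorem~\ref{theo-SLi} concludes.

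The main obstacle I anticipate is the justification of the first step: one must check carefully that at the point where $\phiauxNEi$ is evaluated, every agent other than the (momentarily free) $e_i$ is already bound to a deterministic strategy, so that the tree of plays seen by $e_i$ is generated by a single freely chosen strategy and the uniform-vs-perfect-information equivalence from Lemma~\ref{lem-eq-formulas-NE} indeed applies conjunct by conjunct. Once this is in place, the remainder is a purely syntactic rewriting, and the two standing hypotheses (total hierarchy on $\{\obs_i,\obs^e_j\}$ and environment more informed than system) are exactly what is needed to make the rewritten formula fall into the decidable fragment isolated by Theorem~\ref{theo-SLi}.
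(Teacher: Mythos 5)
Your proof is correct and follows essentially the same route as the paper: replace the innermost $\obs^e_i$ in $\phiauxNEi$ by $\obs_p$ via the one-player uniform-vs-perfect-information equivalence of Lemma~\ref{lem-eq-formulas-NE}, order the same-polarity quantifier blocks by fineness, observe that the ``environment more informed than system'' hypothesis makes the universal environment block finer than the outer existential system block, and invoke Theorem~\ref{theo-SLi}. Your write-up is in fact more explicit than the paper's one-line appeal to ``the trick applied in the proof of Corollary~\ref{cor-NE}'', in particular in checking that all agents other than $e_i$ are already bound to deterministic strategies when the deviation subformula is evaluated.
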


This result applies for instance  when there is hierarchical
information amongst the components of the system, and the environment
has perfect information. Note that when the system consists of a
single component, this corresponds to the second decidability result
in~\cite{filiot2018rational}.
As we mentioned in the introduction,
considering that the opponent has perfect information is something
classic in two-player games with imperfect information, as doing so
ensures that the strategy one synthesises is winning no matter how much
the opponent observes. In Reif's words, this amounts to considering the
possibility that the opponent may ``cheat'' and use information that
it normally does not have access to~\cite{reif1984complexity}. The
non-cooperative rational synthesis problem is not precisely a
two-player game, but it resembles one in the sense that the system as
a whole (composed of its various components $a_1,\ldots,a_n$) should
win against any ``rational'' behaviour of the environment as a
whole. In this view, considering that the components of the environment have perfect
information thus yields a distributed system that is robust to
possible leaks of hidden information to the environment.

\begin{remark}
When all components of the environment have perfect
information, $\Phiratci$ and $\Phiratnci$ already form hierarchical instances with games
where there is hierarchical observation amongst the system's components, and one does
not need to resort to the trick used in the proof of
Corollary~\ref{cor-NE}. A consequence is that in that case,
corollaries~\ref{cor-cRS} and \ref{cor-ncRS} also hold for nondeterministic strategies.  
\end{remark}



\section{Conclusion}
\label{sec-outlook}

We introduced \SLi, a logic for reasoning about strategic behaviour in
multi-player games with imperfect information.  The syntax specifies
the observations with which strategies have to work, and thus allows one to reason about strategic problems
in settings where agents can change observation power, for instance by
being eventually granted access to previously hidden information.
Moreover our logic contains an outcome quantifier and an unbinding
operator which simplify the semantics, make it easier
to express branching-time properties, allow us to naturally consider
nondeterministic strategies, and make the correspondence with \QCTLsi
tighter, enabling us to derive precise complexity results for the
model-checking of \SLi.

We isolated the class of hierarchical
formula/model pairs $(\Phi,\CGSi)$ and proved that for such instances
one can decide whether $\CGSi \models \Phi$.  The proof reduces
(hierarchical) instances of \SLi to (hierarchical) formulas of \QCTLsi, a
low-level logic that we introduced, and that serves as a natural
bridge between \SLi and automata constructions. 
We also studied in detail the complexity of the model-checking
problems solved in this work. To do so we introduced a new measure on formulas
called \emph{simulation depth}. This measure, though being a
purely syntactic notion, reflects the complexity of automata
constructions required to treat a given formula.

Since one can alternate quantifiers in \SLi, our decidability result
goes beyond synthesis and can be used to easily obtain the
decidability of many strategic problems. In this work we applied it to
the problem of existence of Nash equilibria in games
with hierarchical observation, and to
 the imperfect-information generalisations of rational synthesis
that we called (cooperative and non-cooperative) \emph{rational distributed synthesis}.
Our result has also been used to prove that the existence of
admissible strategies in games with hierarchical information is
decidable~\cite{brenguier2017admissibility}.

An interesting direction for future work
would be to try and adapt the notion of hierarchical instances to allow for situations in which hierarchies can change along a 
play, as done in~\cite{DBLP:journals/acta/BerwangerMB18}. We would also
like to consider alternatives to the
synchronous perfect recall setting considered here, such as the
classic asynchronous perfect recall setting~\cite{fagin1995reasoning,DBLP:conf/mfcs/Puchala10}, or the more recent notion
of causal knowledge~\cite{genest2015knowledge}. Finally, it is often
interesting in presence of imperfect information to introduce
epistemic operators to reason explicitely about what agents know. We
already generalised the main result of this work to an extension of
\SLi with such operators~\cite{maubert2018reasoning}; we would like to
see if this can be used to reason about subgame-perfect equilibria in
games with imperfect information, which do not seem to be easy
to characterise in \SLi, as mentioned in
Section~\ref{sec-rat-syn-sota}. Indeed, in games with imperfect
information, the notion of subgame specifies that the initial
situation should be known to all
players~\cite{selten1965spieltheoretische}, a property that epistemic
logics are meant to be able to express.


\begin{acks}
  We thank anonymous reviewers for their valuable comments on a previous version of this work.
  This project has received funding from the
  \grantsponsor{h2020}{European Union's Horizon 2020 research and
    innovation
    programme}{https://ec.europa.eu/programmes/horizon2020/en} under
  the Marie Sklodowska-Curie grant agreement No
  \grantnum{h2020}{709188}.
\end{acks}



\newpage
\appendix
\section{Proof of Proposition~\ref{prop-size-automata}}
\label{sec-appendix-a}

First,  for every \LTL formula $\psi$ one can build a parity word
automaton   $\autopsi$ with two colours and $2^{O(|\psi|)}$
states~\cite{vardi1994reasoning}. Let  $\kpsi\in\setn$ be such that
the number of states of $\autopsi$ is bounded by $2^{\kpsi|\psi|}$.

We also state a more precise version of Theorem~\ref{theo-simulation}:
for every \ATA $\auto$ with $n$ states and $l$ colours, one can build
an \NTA $\NTA$ with at most $2^{O(nl\log(nl))}$ states and $O(nl)$
colours such that $\lang(\ATA)=\lang(\NTA)$~\cite{DBLP:journals/tcs/MullerS95,loeding}. We let $\ka,\kb\in\setn$
be such that the number of states of $\NTA$ is bounded by $2^{\ka
  nl\log(nl)}$ and the number of colours by $\kb nl$.

Proposition~\ref{prop-size-automata} follows directly from the following.

\begin{proposition}
        Let $\Phi$ be a $\QCTLsih$ formula, $\CKS$ a \CKS, and let $\APq=\APq(\Phi)$.  
        For every subformula $\phi$ of $\Phi$ and state
        $\sstate\in\CKS$, it holds that:
        \begin{itemize}
        \item if
        $\ndd_k(\phi)=0$, $\bigauto[\sstate]{\phi}$ has at most
        $\gsphi$ states  and
        2 colours,
        \item 
        if $\ndd_k(\phi)\geq 1$, 
        $\bigauto[\sstate]{\phi}$ has at most
        $\tower{\ndd_k(\phi)}{\gsphi \log \gsphi}$  states, and its
        number of colours is at most
        $\tower{\ndd_k(\phi)-1}{\gsphi \log \gsphi}$, 
      \end{itemize}
              with
              $\gsphi=(4\ka+2\kb)^{\rEd(\phi)}|\phi||\CKS|^{\Ed(\phi)}2^{\kpsi|\phi|\Ed(\phi)}$.
              
    In addition, if $\bigauto{\phi}$ has state set $\tQ$, for each
    $\tq\in\tQ$ and $a\in 2^\APq$, we have
    $|\tdelta(\tq,a)|\leq |\CKS| |\tQ|^{|\CKS|} 2^{\paraphi|\phi|}$,
  where $\paraphi=1+\Ed(\phi)$.
\end{proposition}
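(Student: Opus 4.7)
The plan is to proceed by structural induction on $\phi$, following the case analysis in the construction of Lemma~\ref{lem-final}. At each case I will track simultaneously three quantities: the number of states of $\bigauto[\sstate]{\phi}$, the number of colours it uses, and its class (nondeterministic vs.\ alternating), since the latter governs whether simulation is needed at the next quantifier and hence whether the tower height $\ndd_k$ increments. The transition-function bound $|\tdelta(\tq,a)|\leq |\CKS||\tQ|^{|\CKS|}2^{\paraphi|\phi|}$ will follow from the same induction by bookkeeping the structure of the transition formulas produced in each case; the $\E\psi$ case introduces the only nontrivial factors, namely a disjunction over $2^{\max(\psi)}$ (of size at most $2^{|\phi|}$) and a conjunction whose nested $\Ed(\phi)$ applications explain the factor $|\tQ|^{|\CKS|}$.

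The base case $\phi=p$ gives a one-state, two-colour automaton, and since $\gsphi\geq 1$ the bound holds with room to spare. For $\phi=\neg\phi'$ the dual preserves the state count and adds one colour, which is absorbed because $|\phi|>|\phi'|$ and $\gsphi$ is monotone in all its parameters. For $\phi=\phi_1\ou\phi_2$ the narrowings $\ATA_i=\narrow[\Iphi]{\bigauto{\phi_i}}$ preserve state counts by Theorem~\ref{theo-narrow}, and combining with a fresh initial state gives $1+|\tQ^1|+|\tQ^2|$ states, which fits within the bound thanks to $|\phi|\geq |\phi_1|+|\phi_2|+1$ and the fact that $\rEd$ and $\Ed$ distribute as maxima over subformulas.

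The $\E\psi$ case requires more care. I will first invoke the \LTL-to-NPW construction to obtain $\autopsi$ with at most $2^{\kpsi|\psi|}$ states and two colours, then bound the intermediate automaton $\tauto$ (states $Q_\psi\times\setstates$) by $2^{\kpsi|\psi|}\cdot|\CKS|$ states and two colours. By induction, each $\bigauto[\sstate']{\phi_i}$ for $\phi_i\in\max(\psi)$ has a state count bounded by the target tower expression in $\gsphi[\phi_i]\log\gsphi[\phi_i]$; narrowing and dualisation preserve this count while at worst adding one colour. The total state count of $\bigauto{\phi}$ is then bounded by $2^{\kpsi|\psi|}\cdot|\CKS|+2|\phi|\cdot|\CKS|\cdot\max_{i,\sstate'}|\bigauto[\sstate']{\phi_i}|$. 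Two identities make this fit inside the target bound: first, $\Ed(\E\psi)=\max_i\Ed(\phi_i)+1$ absorbs the extra $|\CKS|$ factor; second, $|\psi|\leq|\phi|$ absorbs $2^{\kpsi|\psi|}$ into $2^{\kpsi|\phi|\Ed(\phi)}$. Since $\ndd_k(\E\psi)=\max_i\ndd_k(\phi_i)$, the tower height does not grow, and the \LTL\ special case collapses the automaton to a nondeterministic one of polynomial size, matching the $\ndd_k=0$ branch.

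The hardest case, and the one where the tower height actually increases, is $\phi=\existsp{\cobs}\phi'$. Here $\bigauto{\phi}=\proj{\ATA_2}$, where $\ATA_2$ is the Muller--Schupp nondeterminisation of $\ATA_1=\narrow[\cobs]{\bigauto{\phi'}}$. Since narrowing and projection each preserve both state and colour counts, the exponential blowup is entirely due to simulation: by our choice of $\ka$ and $\kb$, simulating an alternating automaton with $n$ states and $l$ colours yields a nondeterministic one with at most $2^{\ka nl\log(nl)}$ states and $\kb nl$ colours. Feeding in the inductive bounds on $n$ and $l$ yields an automaton of the required tower height $\ndd_k(\phi)=\ndd_k(\phi')+1$, with the constants accommodated by the factor $(4\ka+2\kb)^{\rEd(\phi)}$ in $\gsphi$, since $\rEd(\phi)=\rEd(\phi')+1$. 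The main obstacle I anticipate is the second clause of the case split in $\ndd(\existsp{\cobs}\phi')$, namely when $\ndd_x(\phi')=\nd$ and $\cobs=\Iphi[\phi']$: here no narrowing or simulation is performed, and the tower height must be preserved rather than incremented. Verifying this requires checking that in this situation $\cobs=\Iphi[\phi]$ as well (so $\Iphi$ is computed correctly), that projection of an NTA yields an NTA by Theorem~\ref{theo-projection}, and that the resulting state count bound remains valid without incurring the extra factor $(4\ka+2\kb)$. Carefully aligning this case split in $\ndd$ with the operational structure of the construction is where most of the care lies.
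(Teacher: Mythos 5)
Your proposal follows essentially the same route as the paper's proof: structural induction over the same five cases, tracking states, colours and the nondeterministic/alternating status of each automaton, with the exponential blowup confined to the simulation step in the quantifier case and the constants absorbed by the factor $(4\ka+2\kb)^{\rEd(\phi)}$ via $\rEd(\phi)=\rEd(\phi')+1$. Two small imprecisions to fix when writing it out: dualisation shifts every colour by one but does not increase the \emph{number} of colours, and the factor $|\tQ|^{|\CKS|}$ in the transition-formula bound comes from the DNF structure of nondeterministic transition formulas (one pair per direction in each disjunct, hence up to $|\tQ|^{|\SI[\phi]|}$ disjuncts), not from the nesting of $\E$, which instead accounts for the $2^{\paraphi|\phi|}$ factor.
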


\begin{proof}
  We prove the result by induction on $\phi$.
  
\halfline
$\bm{\phi=p:}$ in this case $\ndd_k(\phi)=\rEd(\phi)=\Ed(\phi)=0$. By construction,
$\bigauto[\sstate]{\phi}$ has one state $\tq_\init$ and two
colours, so that the first part of the claim holds. In addition,  each formula of its transition function is of size one,
so that the second part of the claim also holds.

\halfline
$\bm{\phi=\neg \phi':}$ Complementing an \ATA does not change the
number of states,  number
of  colours or size of formulas in the transition function, so that the result follows by induction
hypothesis and the fact that $|\phi'|\leq |\phi|$ and $\Ed(\phi)=\Ed(\phi')$.

\halfline
$\bm{\phi=\phi_1 \ou \phi_2:}$
To establish the claim about number of states and colours we split
cases. First  we consider the case where $\ndd_k(\phi)=0$. In that case we
also have $\ndd_k(\phi_1)=\ndd_k(\phi_2)=0$. By induction hypothesis, for $i\in \{1,2\}$,
$\bigauto[\sstate]{\phi_i}$ has at most  $\gsphi[\phi_i]$  states and
        $2$ colours. These automata are then narrowed down, but the
        narrowing operation leaves the size of formulas in the
        transition function unchanged (in fact they may become smaller,
        but not bigger, see~\cite{kupferman1999church}). Therefore, by construction $\bigauto{\phi}$ has at most
        $1+\gsphi[\phi_1]+\gsphi[\phi_2]$ states and two colours.

        Now we have that
        \begin{align*}
          1+\gsphi[\phi_1]+\gsphi[\phi_2] &= 1 + \sum_{i\in
                                            \{1,2\}}(4\ka+2\kb)^{\rEd(\phi_i)}|\phi_i||\CKS|^{\Ed(\phi_i)}2^{\kpsi|\phi_i|\Ed(\phi_i)}\\
          &= 1 +
            (4\ka+2\kb)^{\rEd(\phi)}|\phi||\CKS|^{\Ed(\phi)}\sum_{i\in
            \{1,2\}}2^{\kpsi|\phi_i|\Ed(\phi)}\\
          &\leq 1 +
            (4\ka+2\kb)^{\rEd(\phi)}|\phi||\CKS|^{\Ed(\phi)}2^{\kpsi(|\phi_1|+|\phi_2|)\Ed(\phi)}\\
          1+\gsphi[\phi_1]+\gsphi[\phi_2]          &\leq
            (4\ka+2\kb)^{\rEd(\phi)}|\phi||\CKS|^{\Ed(\phi)}2^{\kpsi(|\phi_1|+|\phi_2|+1)\Ed(\phi)}
        \end{align*}
        We get that
        \begin{equation}
          \label{eq-boum}
        1+\gsphi[\phi_1]+\gsphi[\phi_2] \leq \gsphi          
        \end{equation}
        which concludes the claim about the number of states.

        \halfline Now for the case where $\ndd_k(\phi)\geq 1$. By
        definition of nondeterminisation depth, for at least
        one $i\in \{1,2\}$ we have $\ndd_k(\phi_i)\geq 1$. Also,  the
        number of colours used in $\bigauto{\phi}$ is the maximum
        between the number of colours used in $\bigauto{\phi_1}$ and
        those used in $\bigauto{\phi_2}$. By induction hypothesis it is the case that
        $\bigauto{\phi_i}$ has at most
        $\tower{\ndd_k(\phi_i)-1}{\gsphi[\phi_i]\log \gsphi[\phi_i]}$
        colours if $\ndd_k(\phi_i)\geq 1$, or 2 if
        $\ndd_k(\phi_i)=0$. 
        Therefore,  the number of
        colours in $\bigauto{\phi}$ is at most
        $\tower{\ndd_k(\phi_i)-1}{\gsphi[\phi_i]\log
          \gsphi[\phi_i]}$ for some $i$, which is less than
        $\tower{\ndd_k(\phi)-1}{\gsphi[\phi]\log \gsphi[\phi]}$.

        For the number of states $|\tQ|$ in $\bigauto{\phi}$, we have
        that $|\tQ|=1+|\tQ_1|+|\tQ_2|$, where $\tQ_i$ is the set of
        states of $\bigauto{\phi_i}$. By induction hypothesis we get
        \begin{align*}
          |\tQ|&\leq 1 + \sum_{i\in\{1,2\}}\tower{\ndd_k(\phi_i)}{\gsphi[\phi_i]\log\gsphi[\phi_i]}\\
          &\leq 1 +
            \tower{\ndd_k(\phi)}{\sum_{i\in\{1,2\}}\gsphi[\phi_i]\log\gsphi[\phi_i]}\\
               &\leq
                 \tower{\ndd_k(\phi)}{(\sum_{i\in\{1,2\}}\gsphi[\phi_i]+1)\log\gsphi[\phi]}\\
         |\tQ| & \leq \tower{\ndd_k(\phi)}{\gsphi\log\gsphi[\phi]} 
            \mbox{\hspace{2cm} (using Equation~\eqref{eq-boum})}
        \end{align*}
        which concludes the claim about the number of states.

\halfline
        Concerning the size of formulas in the transition function,
for all  states from $\bigauto{\phi_1}$ and $\bigauto{\phi_2}$
the transition function is unchanged and the result thus holds by
induction  hypothesis. For the remaining state $\tq_\init$, we have by
definition $\tdelta(\tq_\init,a)=\tdelta^1(\tq_\init^1,a)\ou
\tdelta^2(\tq_\init^2,a)$ and thus $|\tdelta(\tq_\init,a)|=|\tdelta^1(\tq_\init^1,a)|+
|\tdelta^2(\tq_\init^2,a)|+1$. By induction hypothesis we get that
\begin{align*}
  |\tdelta(\tq_\init,a)| &\leq |\CKS|
                         |\tQ_1|^{|\CKS|}2^{\paraphi(\phi_1)|\phi_1|}+|\CKS|
                         |\tQ_2|^{|\CKS|}2^{\paraphi(\phi_2)|\phi_2|}+1\\
  &\leq |\CKS|
    2^{\paraphi(\phi)(|\phi_1|+|\phi_2|)}(|\tQ_1|^{|\CKS|}+|\tQ_2|^{|\CKS|})\\
                       &\leq |\CKS| 2^{\paraphi(\phi)|\phi|}(|\tQ_1|+|\tQ_2|)^{|\CKS|}
\end{align*}
And thus $|\tdelta(\tq_\init,a)|\leq  |\CKS| 2^{\paraphi(\phi)|\phi|}|\tQ|^{|\CKS|}$
as required.

\halfline
$\bm{\phi=\E\psi:}$
The word automaton built for the \LTL skeleton of $\psi$ is in fact a
B\"uchi automaton, and thus uses only two colours.
The number of colours used by $\bigauto{\phi}$ is therefore the maximum 
number of colours used by the automata $\bigauto{\phi_i}$ built for
the maximal state subformulas $\phi_i$ in $\psi$, and the result
follows by induction hypothesis.

Concerning the number of states, let $|\tQ_\phi|$ (resp. $|\tQ_i|$, $|\tQ_\psi|$) be the number of
states in $\bigauto{\phi}$ (resp. $\bigauto{\phi_i}$, $\autopsi$). Note that the
number of states in $\bigauto[\sstate']{\phi_i}$ does not depend on
$\sstate'$. Recall that  $\max(\psi)=\{\phi_1,\ldots,\phi_n\}$ is the
  set of maximal state subformulas of $\psi$, and let
$\psi'$ be the \LTL skeleton of $\psi$, \ie, the \LTL formula obtained
from $\psi$ by replacing maximal state
subformulas $\phi_i$ with propositions $p_{\phi_i}$.  We thus have
\begin{align*}
  |\tQ|&= |\tQ_\psi||\CKS| + 2|\CKS| \sum_{i\in [n]}|\tQ_i|\\
  &\leq 2^{\kpsi|\psi'|}|\CKS| + 2|\CKS|
    \sum_{i\in [n]}\tower{\ndd_k(\phi_i)}{\gsphi[\phi_i]\log
    \gsphi[\phi_i]}\\
|\tQ|  &\leq 2^{\kpsi|\psi'|}|\CKS| \left (1+ 
    \tower{\ndd_k(\phi)}{\sum_{i\in [n]}\gsphi[\phi_i]\log
     \gsphi[\phi_i]}\right )
\end{align*}
And thus
\begin{equation}
  \label{eq:bouma}
   |\tQ|  \leq 2^{\kpsi|\psi'|}|\CKS| \left (1+ 
    \tower{\ndd_k(\phi)}{\log
     \gsphi\sum_{i\in [n]}\gsphi[\phi_i]}\right )
\end{equation}
Now observe that for each $i\in [n]$ we have that $\Ed(\phi_i)\leq
\Ed(\phi)-1$, and $\rEd(\phi_i)=\rEd(\phi)$. Therefore,
\begin{align*}
  \sum_{i\in [n]}\gsphi[\phi_i] &=
                                  (4\ka+2\kb)^{\rEd(\phi)}\sum_{i\in
                                  [n]}|\phi_i||\CKS|^{\Ed(\phi_i)}2^{\kpsi
                                  |\phi_i|\Ed(\phi_i)}\\
&\leq (4\ka+2\kb)^{\rEd(\phi)}|\CKS|^{\Ed(\phi)-1}(\sum_{i\in
  [n]}|\phi_i|)2^{\kpsi(\Ed(\phi)-1)\sum_{i\in [n]}|\phi_i|}
\end{align*}
Using this in Equation~\eqref{eq:bouma} we get
\begin{align*}
   |\tQ|  &\leq 2^{\kpsi|\psi'|}|\CKS| \left (1+ 
    \tower{\ndd_k(\phi)}{(4\ka+2\kb)^{\rEd(\phi)}|\CKS|^{\Ed(\phi)-1}(\sum_{i\in
  [n]}|\phi_i|)2^{\kpsi(\Ed(\phi)-1)\sum_{i\in [n]}|\phi_i|}\log
     \gsphi}\right
            )\\
   &\leq 2^{\kpsi|\psi'|}\left (1+ 
    \tower{\ndd_k(\phi)}{(4\ka+2\kb)^{\rEd(\phi)}|\CKS|^{\Ed(\phi)}(\sum_{i\in
  [n]}|\phi_i|)2^{\kpsi(\Ed(\phi)-1)\sum_{i\in [n]}|\phi_i|}\log
     \gsphi}\right
     )\\
     &\leq 2^{\kpsi|\psi'|}\tower{\ndd_k(\phi)}{(4\ka+2\kb)^{\rEd(\phi)}|\CKS|^{\Ed(\phi)}(1+\sum_{i\in
  [n]}|\phi_i|)2^{\kpsi(\Ed(\phi)-1)\sum_{i\in [n]}|\phi_i|}\log
     \gsphi}\\
  & \leq \tower{\ndd_k(\phi)}{(4\ka+2\kb)^{\rEd(\phi)}|\CKS|^{\Ed(\phi)}|\phi|2^{\kpsi B}\log
     \gsphi},
\end{align*}
where $B=(\Ed(\phi)-1)\sum_{i\in [n]}|\phi_i|+|\psi'|$.
To conclude it only remains to show that $B\leq |\phi|\Ed(\phi)$.
Because $\phi=\E\psi$, it holds that $\Ed(\phi)\geq 1$. If $\Ed(\phi)=1$, we have
$B=|\psi'|\leq |\phi|\Ed(\phi)$. Now if $\Ed(\phi)\geq 2$, we have
 \[B=(\Ed(\phi)-2)\sum_{i\in [n]}|\phi_i|+|\psi'|+\sum_{i\in [n]}|\phi_i|\]
Clearly, $\sum_{i\in [n]}|\phi_i|\leq |\phi|$, and $|\psi'|+\sum_{i\in
  [n]}|\phi_i|\leq 2|\phi|$, and the result follows. Note that it
could seem that $|\psi'|+\sum_{i\in [n]}|\phi_i|\leq |\phi|$. It is
true if one defines the size of a formula as the number of connectors,
but not if one also 
 counts atomic propositions, as we do here. However it is true that $|\psi'|+\sum_{i\in
  [n]}|\phi_i|\leq 2|\phi|$, independently of the definition of
formulas' size.

\halfline
It remains to establish the claim about the size of transition
formulas. By definition, for every state $q$ of $\bigauto{\phi}$ that comes
from some $\ATA^{i}_{\sstate'}$ or $\compl{\ATA^{i}_{\sstate'}}$, the
transition function is unchanged and thus the result follows by
induction hypothesis and the fact that narrowing and complementation  do
not increase the size of formulas in transition functions.
Now for the remaining states, for each $(\qpsi,\sstate')\in \tQ$ and every
$a\in 2^{\APq(\Phi)}$,  we have
\begin{align*}
  |\tdelta((\qpsi,\sstate'),a)|&\leq \sum_{a'\in 2^{\max(\psi)}}     \left (
                  |\tdelta_{\psi}((\qpsi,\sstate'),a')| + 1 +
      \sum_{\phi_i\in a'}(|\tdelta^{i}_{\sstate'}(\tq^{i}_{\sstate'},a)|+1)+ 
     \sum_{\phi_i\notin
      a'}(|\compl{\delta^{i}_{\sstate'}}(\compl{\tq^{i}_{\sstate'}},a)|+1)
    \right )
\end{align*}
Now by induction hypothesis, and because complementation does not
increase the size of formulas, we get:
\begin{equation}
  \label{eq-boumb}
  |\tdelta((\qpsi,\sstate'),a)|\leq \sum_{a'\in 2^{\max(\psi)}}     \left (
                  |\tdelta_\psi((\qpsi,\sstate'),a')|+
2 \sum_{i\in[n]}|\CKS|2^{\paraphi(\phi_i)|\phi_i|}|\tQ_i|^{|\CKS|}
    \right )+2^{|\max(\psi)|}+2|\max(\psi)|2^{|\max(\psi)|},
\end{equation}
where $|\tQ_i|$ is the number of states in automaton $\bigauto[\sstate']{\phi_i}$.
Now by definition,
\begin{align*}
|\tdelta_\psi((\qpsi,\sstate'),a')|& = \left (\sum_{\tq'\in\Deltapsi(\qpsi,a')}\sum_{
    \sstate''\in\relation(\sstate')}
                               1\right )+|\Deltapsi(\qpsi,a')||\relation(\sstate')|-1\\
|\tdelta_\psi((\qpsi,\sstate'),a')|  &\leq
                                       2|\Deltapsi(\qpsi,a')||\relation(\sstate')|-1
\end{align*}
We thus have
\begin{equation}
  \label{eq:bouya}
  |\tdelta_\psi((\qpsi,\sstate'),a')|  \leq 2|\tQ_{\psi'}||\CKS|-1
\end{equation}
where $\tQ_{\psi'}$ is the set of states of the word automaton  $\autopsi$. 
Using this in Equation~\ref{eq-boumb} we get:
\begin{align*}
  |\tdelta((\qpsi,\sstate'),a)|&\leq 2^{|\max(\psi)|} \left (
2|\tQ_{\psi'}||\CKS|-1 + 
2 \sum_{i\in[n]}|\CKS|2^{\paraphi(\phi_i)|\phi_i|}|\tQ_i|^{|\CKS|}
                  \right )+2^{|\max(\psi)|}+2|\max(\psi)|2^{|\max(\psi)|}\\
    |\tdelta((\qpsi,\sstate'),a)|&\leq 2^{|\max(\psi)|+1}|\CKS| \left (
|\tQ_{\psi'}| + 
 \sum_{i\in[n]}2^{\paraphi(\phi_i)|\phi_i|}|\tQ_i|^{|\CKS|}
    \right )+2|\max(\psi)|2^{|\max(\psi)|}
\end{align*}
But for natural numbers $\{a_i,b_i\}_{i\in [n]}$, it holds that
\[ \sum_{i\in[n]}2^{a_i}b_i =
  2^{\sum_{i\in[n]}a_i}\sum_{i\in[n]}b_i-\sum_{i\in[n]}2^{a_i}(2^{\sum_{j\neq
      i}a_j}-1)b_i\]
Applying this to $a_i=\paraphi(\phi_i)|\phi_i|$ and $b_i=|\tQ_i|^{|\CKS|}$
we obtain
\[ \sum_{i\in[n]}2^{\paraphi(\phi_i)|\phi_i|}|\tQ_i|^{|\CKS|} =
  2^{\sum_{i\in[n]}\paraphi(\phi_i)|\phi_i|}\sum_{i\in[n]}|\tQ_i|^{|\CKS|}-\sum_{i\in[n]}2^{\paraphi(\phi_i)|\phi_i|}(2^{\sum_{j\neq
      i}\paraphi(\phi_j)|\phi_j|}-1)|\tQ_i|^{|\CKS|}\]
We thus get that
\[    |\tdelta((\qpsi,\sstate'),a)|\leq 2^{|\max(\psi)|+1}|\CKS| \left (
|\tQ_{\psi'}| +   2^{\sum_{i\in[n]}\paraphi(\phi_i)|\phi_i|}\sum_{i\in[n]}|\tQ_i|^{|\CKS|}
    \right )+C,
  \]
  with
  \begin{align*}
    C&=2|\max(\psi)|2^{|\max(\psi)|}-2^{|\max(\psi)|+1}|\CKS|\sum_{i\in[n]}2^{\paraphi(\phi_i)|\phi_i|}(2^{\sum_{j\neq
       i}\paraphi(\phi_j)|\phi_j|}-1)|\tQ_i|^{|\CKS|}\\
    &= 2^{|\max(\psi)|}\left(2|\max(\psi)|-2|\CKS|\sum_{i\in[n]}2^{\paraphi(\phi_i)|\phi_i|}(2^{\sum_{j\neq
      i}\paraphi(\phi_j)|\phi_j|}-1)|\tQ_i|^{|\CKS|}\right )
  \end{align*}
  If $n=|\max(\psi)|>1$, \ie, there are at least two maximal state
  subformulas, then $\sum_{j\neq
    i}\paraphi(\phi_j)|\phi_j|>0$, hence
  $2|\CKS|\sum_{i\in[n]}2^{\paraphi(\phi_i)|\phi_i|}(2^{\sum_{j\neq
      i}\paraphi(\phi_j)|\phi_j|}-1)|\tQ_i|^{|\CKS|}\geq
  4n=4|\max(\psi)|$, which implies that $C\leq 0$, and thus
  \begin{align*}    
    |\tdelta((\qpsi,\sstate'),a)|&\leq 2^{|\max(\psi)|+1}|\CKS| \left (
|\tQ_{\psi'}| +   2^{\sum_{i\in[n]}\paraphi(\phi_i)|\phi_i|}\sum_{i\in[n]}|\tQ_i|^{|\CKS|}
                    \right )\\
   &\leq   2^{|\max(\psi)|+1}|\CKS| 
 2^{\sum_{i\in[n]}\paraphi(\phi_i)|\phi_i|}\left
     (|\tQ_{\psi'}|^{|\CKS|} + \sum_{i\in[n]}|\tQ_i|^{|\CKS|}\right)\\
       &\leq   |\CKS| 
 2^{|\max(\psi)|+1+(\paraphi(\phi)-1)\sum_{i\in[n]}|\phi_i|}
         \left (|\tQ_{\psi'}| + \sum_{i\in[n]}|\tQ_i|\right)^{|\CKS|}\\
           &\leq   |\CKS|  2^{|\phi|+(\paraphi(\phi)-1)|\phi|}
             |\tQ|^{|\CKS|}\\
    |\tdelta((\qpsi,\sstate'),a)|&\leq    |\CKS|  2^{\paraphi(\phi)|\phi|} |\tQ|^{|\CKS|}
  \end{align*}

  It remains to consider the case where $\max(\psi)=\{\phi_1\}$. In
  that case there are only two letters in the alphabet
  $2^{\max(\psi)}$, which are $\emptyset$ and $\{\phi_1\}$. 
  The transition formulas then simplify and one gets that
  \begin{align*}
    |\tdelta((\qpsi,\sstate'),a)|&\leq
                                   |\tdelta_{\psi}((\qpsi,\sstate'),\emptyset)|
                                   + 1 +
                                   |\compl{\tdelta^{1}_{\sstate'}}(\compl{\tq^{1}_{\sstate'}},a)|+1+
                                   |\tdelta_{\psi}((\qpsi,\sstate'),\{\phi_1\})|
                                   + 1 +
                                   |\tdelta^{1}_{\sstate'}(\tq^{1}_{\sstate'},a)|
  \end{align*}
  Using Equation~\eqref{eq:bouya} and the induction hypothesis we get
  \begin{align*}
    |\tdelta((\qpsi,\sstate'),a)| &\leq
      4|\tQ_{\psi'}||\CKS|-2+2|\CKS|2^{\paraphi(\phi_1)|\phi_1|}|\tQ_1|^{|\CKS|}+3\\
    &\leq 1 + 2
      |\CKS|(2|\tQ_{\psi'}|+2^{\paraphi(\phi_1)|\phi_1|}|\tQ_1|^{|\CKS|})\\
    & \leq 1 + 2 |\CKS|
      2^{(\paraphi(\phi)-1)|\phi_1|}(|\tQ_{\psi'}|^{|\CKS|}+|\tQ_1|^{|\CKS|})\\
        & \leq 1 +  |\CKS|
          2^{\paraphi(\phi)|\phi|}(|\tQ_{\psi'}|^{|\CKS|}+|\tQ_1|^{|\CKS|})\\
    |\tdelta((\qpsi,\sstate'),a)|            & \leq  |\CKS| 2^{\paraphi(\phi)|\phi|}|\tQ|^{|\CKS|}
  \end{align*}

  \halfline
  $\bm{\phi=\exists}^{\bm{\cobs}}\bm{p.\,\phi':}$
  We first establish the claim for states and colours, and we start with the case $\ndd_k(\phi)=\ndd_k(\phi')$. By definition we
  necessarily have that $\ndd_x(\phi')=\nd$, \ie, $\bigauto{\phi'}$ is
  nondeterministic, and $\cobs=\Iphi[\phi']$,
  therefore there is no need to use narrowing or nondeterminisation
  here. $\bigauto{\phi}$ is obtained by directly projecting
  $\bigauto{\phi'}$, an operation that does not change the number of
  states or colours, so that  the claim for states and colours follows
  directly by induction hypothesis.


  Now we consider the case where $\ndd_k(\phi)\neq \ndd_k(\phi')$,
  which implies that
  $\ndd_k(\phi)\geq 1$. Let $n$ be the number of states and $l$ the
  number of colours in
  $\bigauto{\phi'}$.  In this case $\bigauto{\phi'}$ is first narrowed
  down, which does not change number of states or colours. The
  resulting automaton is then
  nondeterminised, yielding an automaton with at most $2^{\ka nl\log
    nl}$ states and $\kb nl$ colours.

  Again, we split cases: if $\ndd_k(\phi')=0$, by induction
  hypothesis, $n\leq \gsphi[\phi']$ and $l=2$. For the number of
  colours, observing that $\rEd(\phi)=\rEd(\phi')+1$, we have
  \begin{align*}
   \kb n l \leq 2 \kb \gsphi[\phi']&= 2 \kb (4\ka+2\kb)^{\rEd(\phi')}|\phi'||\CKS|^{\Ed(\phi')}2^{\kpsi|\phi'|\Ed(\phi')}\\
    &\leq
      (4\ka+2\kb)^{\rEd(\phi)}|\phi||\CKS|^{\Ed(\phi)}2^{\kpsi|\phi|\Ed(\phi)}\\
    \kb n l & \leq \tower{\ndd_k(\phi)-1}{\gsphi \log \gsphi}
  \end{align*}
For the number of states, we have that 
\begin{align*}
  2^{\ka nl\log nl} &\leq 2^{2\ka \gsphi[\phi']\log(2\gsphi[\phi'])}
                      \leq \tower{\ndd_k{\phi}}{\gsphi[\phi]\log(\gsphi[\phi])}
\end{align*}

Now for the final case, if 
$\ndd_k(\phi)=\ndd_k(\phi')+1$ and $\ndd_k(\phi')\geq 1$, by induction
hypothesis $n\leq \tower{\ndd_k(\phi')}{\gsphi[\phi']\log
  \gsphi[\phi']}$   and $l\leq
\tower{\ndd_k(\phi')-1}{\gsphi[\phi']\log \gsphi[\phi']}$. For the
number of colours in $\bigauto{\phi}$ we thus get
\begin{align*}
  \kb n l &\leq \kb \tower{\ndd_k(\phi')-1}{\gsphi[\phi']\log
            \gsphi[\phi']2^{\gsphi[\phi']\log \gsphi[\phi']}}\\
  &\leq  \tower{\ndd_k(\phi')}{2\kb\gsphi[\phi']\log
    \gsphi[\phi']}\\
  \kb n l  &\leq  \tower{\ndd_k(\phi)-1}{\gsphi[\phi]\log
            \gsphi[\phi]}
\end{align*}
Concerning the number of states, we observe that
\begin{align*}
  nl &\leq \tower{\ndd_k(\phi')-1}{\gsphi[\phi']\log
       \gsphi[\phi']2^{\gsphi[\phi']\log \gsphi[\phi']}}\\
nl  &\leq\tower{\ndd_k(\phi')}{2\gsphi[\phi']\log
       \gsphi[\phi']}\\
\ka nl \log nl &\leq \tower{\ndd_k(\phi')-1}{2\ka\gsphi[\phi']\log
             \gsphi[\phi']2^{2\gsphi[\phi']\log \gsphi[\phi']}}\\
\ka n l \log n l  &\leq\tower{\ndd_k(\phi')}{4\ka\gsphi[\phi']\log
    \gsphi[\phi']}\\
\ka n l \log n l    &\leq\tower{\ndd_k(\phi')}{\gsphi[\phi]\log
                      \gsphi[\phi]}\\
2^{\ka n l \log n l}    &\leq\tower{\ndd_k(\phi)}{\gsphi[\phi]\log
       \gsphi[\phi]}
\end{align*}

It only remains to establish the claim for the size of transition
formulas. Since $\bigauto{\phi}$ is nondeterministic, formulas  $\tdelta(\tq,a)$ are written in
disjunctive normal form and for every direction $\dir\in \SI[\phi]$ 
each disjunct contains exactly one element of $\{\dir\}\times \tQ$,
where $\tQ$ is the set of states in $\bigauto{\phi}$. As a
result, each formula $\tdelta(\tq,a)$ is of size 
\begin{align*}
  |\tdelta(\tq,a)| &\leq |\tQ|^{|\SI[\phi]|}(2
                     |\SI[\phi]|-1)+|\tQ|^{|\SI[\phi]|}-1\\
                   &\leq 2 |\SI[\phi]||\tQ|^{|\SI[\phi]|}\\
  |\tdelta(\tq,a)|  &\leq 2^{\paraphi(\phi)|\phi|}|\CKS||\tQ|^{|\CKS|}
\end{align*}
\end{proof}


\end{document}
